\newtheorem{theorem}{Theorem}
\newtheorem{proposition}[theorem]{Proposition}
\newtheorem{lemma}[theorem]{Lemma}
\newtheorem{corollary}[theorem]{Corollary}
\newtheorem{claim}{Claim}
\newtheorem{observation}[theorem]{Observation}
\def\tobedone{\textcolor{red}{TO BE DONE.}}
\def\diam{\operatorname{diam}}
\renewenvironment{proof}{
\par
\noindent {\bf Proof.}\rm}{\mbox{}\hfill$\square$\par\vskip 3mm}
\newcommand\EFFACE[1]{}
\def\diam{{\rm diam}}
\def\rad{{\rm rad}}
\newcommand\LIGNE[4]{
\draw[thick] (#1,#2) to (#3,#4);
}
\newcommand\POINTILLE[4]{
\draw[thick,dotted] (#1,#2) to (#3,#4);
}
\newcommand\TIRET[4]{
\draw[thick,dashed] (#1,#2) to (#3,#4);
}
\newcommand\bSOM[4]{
   \node[scale=0.7,draw,circle,fill=black] at (#1,#2){};
   \node[above] at (#1,#2+0.2){#3};
   \node[below] at (#1,#2-0.2){#4};
}
\newcommand\gSOM[4]{
   \node[scale=0.7,draw,circle,fill=lightgray] at (#1,#2){};
   \node[above] at (#1,#2+0.2){#3};
   \node[below] at (#1,#2-0.2){#4};
}
\newcommand\SOM[4]{
   \node[scale=0.7,draw,circle,fill=white] at (#1,#2){};
   \node[above] at (#1,#2+0.2){#3};
   \node[below] at (#1,#2-0.2){#4};
}
\newcommand\drSOM[4]{
    \node[scale=0.7,draw,circle,fill=white] at (#1,#2){};
   \node[above] at (#1-0.4,#2){#3};
   \node[below] at (#1+0.4,#2){#4};
}
\newcommand\ghSOM[4]{
   \node[scale=0.7,draw,circle,fill=white] at (#1,#2){};
   \node[above] at (#1+0.4,#2){#3};
   \node[below] at (#1-0.9,#2+0.4){#4};

}
\newcommand\bdrSOM[4]{
    \node[scale=0.7,draw,circle,fill=black] at (#1,#2){};
   \node[above] at (#1-0.4,#2){#3};
   \node[below] at (#1+0.4,#2){#4};
}
\begin{document}

\title{On the Broadcast Independence Number\\ of Circulant Graphs}

\author{Abdelamin LAOUAR~\thanks{Faculty of Mathematics, Laboratory L'IFORCE, University of Sciences and Technology
Houari Boumediene (USTHB), B.P.~32 El-Alia, Bab-Ezzouar, 16111 Algiers, Algeria.}
\and Isma BOUCHEMAKH~\footnotemark[1]
\and \'Eric SOPENA~\thanks{Univ. Bordeaux, CNRS, Bordeaux INP, LaBRI, UMR5800, F-33400 Talence, France.}
}
\maketitle


\begin{abstract}
An independent broadcast on a graph $G$ is a function  $f: V
\longrightarrow \{0,\ldots,\diam(G)\}$ such that
$(i)$ $f(v)\leq e(v)$ for every vertex $v\in V(G)$, where
$\operatorname{diam}(G)$ denotes the diameter of $G$ and $e(v)$ the eccentricity of vertex $v$,
and $(ii)$ $d(u,v) > \max \{f(u), f(v)\}$ for every two distinct vertices $u$ and $v$ with $f(u)f(v)>0$.
The broadcast independence number $\beta_b(G)$ of $G$ is then the maximum value of $\sum_{v \in V} f(v)$, taken over all independent broadcasts on~$G$.

We prove that every circulant graph of the form $C(n;1,a)$,
$3\le a\le \lfloor\frac{n}{2} \rfloor$,
admits an optimal
$2$-bounded independent  broadcast, that is, an independent broadcast~$f$ satisfying
$f(v)\le 2$ for every vertex $v$,
except when $n=2a+1$, or $n=2a$ and $a$ is even.
We then determine the broadcast independence number of various
classes of such circulant graphs, and prove that, for most of these classes,
the equality $\beta_b(C(n;1,a)) = \alpha(C(n;1,a))$ holds,
where $\alpha(C(n;1,a))$ denotes the independence number of $C(n;1,a)$.


\end{abstract}

\noindent
{\bf Keywords:}  Broadcast; Independent broadcast; Circulant graph.

\medskip

\noindent
{\bf MSC 2010:} 05C12, 05C69.

\section{Introduction}\label{sec:intro}

All the graphs considered in this paper are undirected and simple.
For such a graph $G$, we denote by $V(G)$ and $E(G)$ its set of vertices and its set of
edges, respectively.
Let $G$ be a nontrivial connected graph, that is, a connected graph with at least one edge.
The \emph{distance} from a vertex $u$ to a vertex $v$ in $G$, denoted $d_G(u,v)$, or simply $d(u,v)$ when
$G$ is clear from the context, is the length (number of edges) of a shortest path from $u$ to $v$.
The \emph{eccentricity} of a vertex $v$ in $G$,
denoted $e_G(v)$, is the maximum distance from $v$ to any other vertex of $G$.
The minimum eccentricity in $G$ is the \emph{radius} of $G$, denoted $\rad(G)$, while the maximum eccentricity in $G$ is its \emph{diameter}, denoted $\diam(G)$.
Two vertices $u$ and $v$ with $d_G(u,v)=\diam(G)$ are said to be \emph{antipodal}.

A function $f: V(G) \rightarrow \{0,\ldots,\diam(G)\}$ is a \emph{broadcast on $G$} if
$f(v) \leq e_G(v)$ for every vertex $v \in V$.
For each vertex $v$, $f(v)$ is the \emph{$f$-value} of $v$, or the \emph{broadcast value} of $v$ if $f$ is clear from the context.
Given such a broadcast $f$, an \emph{$f$-broadcast vertex} 
is a vertex $v$ for which $f(v) > 0$.
The set of all $f$-broadcast vertices is denoted $V^+_f(G)$.
If $v$ is a broadcast vertex and $u$ a vertex such that $d(u,v) \leq f(u)$, then
the vertex $v$ \emph{$f$-dominates} the vertex $u$.
The \emph{cost} of a broadcast $f$ on $G$ is the value $\sigma(f) =\sum_{v\in V_f^+} f(v)$.

A broadcast $f$ is \emph{independent} if no broadcast vertex $f$-dominates
another broadcast vertex, or, equivalently, if $d(u,v) > \max\{f(u), f(v)\}$ for every two distinct broadcast vertices $u$ and $v$.
The maximum cost of an independent broadcast on $G$ is the \emph{broadcast independence number}
of $G$, denoted $\beta_b(G)$.
An independent broadcast with cost $\beta_b(G)$ is referred to as a \emph{$\beta_b$-broadcast}.

A subset $S$ of $V(G)$ is an \emph{independent set} if no two vertices in $S$ are adjacent in $G$.
The \emph{independence number} of $G$, denoted $\alpha(G)$, is then the maximum cardinality of an independent set in $G$.
Note that the characteristic function $f_S$ of every maximal independent set $S$ in a graph $G$
is an independent broadcast and, therefore, $\alpha(G)\leq \beta_b(G)$ for every graph $G$.

Broadcast independence and broadcast domination were introduced by Erwin~\cite{Erw01} in his Ph.D. dissertation, using the terms cost independence and cost domination, respectively.
He also discussed several other types of broadcast parameters and gave relationships between them.
Most of the corresponding results are published in~\cite{DEHHH,Erw04}.
Since then, several papers have been devoted to the study of these broadcast parameters,
but there were not so many results concerning the broadcast independence number~\cite{BZ12,DEHHH}
until recently (see \cite{ABS18,ABS19,BBS19,BR1_2018,BR2_2018,BR3_2018}).
In particular, Bessy and Rautenbach discussed the algorithmic complexity of broadcast independence
in~\cite{BR1_2018} and
the links between girth, minimum degree, independence number and broadcast independence number
in~\cite{BR2_2018,BR3_2018}.

\medskip

In this paper, we study the broadcast independence number of circulant graphs.
Recall that for every integer $n\ge 3$, and every sequence of integers $a_1,\dots,a_k$, $k\ge 1$, satisfying
$1 \leq a_1 \leq \dots \leq a_k\leq \left\lfloor \frac{n}{2}\right\rfloor$,
the \emph{circulant graph} $G = C(n;a_1,\dots,a_k)$ is
the graph defined by
$$V(G) = \{v_0,v_1,\dots,v_{n-1}\}\ \mbox{and }
E(G) = \left\{v_iv_{i+a_j}\ |\ a_j\in \{a_1,\dots,a_k\} \right\}$$
(subscripts are taken modulo $n$).
Note that, in particular, $C(n;a_1,\dots,a_k)$ is $2k$-regular and vertex-transitive
(see~\cite{M12} for a survey on properties of undirected circulant graphs).

\medskip


Our paper is organized as follows.
In Section~\ref{sec:preliminaries}, we give some preliminary results and determine the broadcast independence number of circulant graphs of the form $C(2a;1,a)$ and $C(3a;1,a)$.
In Section~\ref{sec:2-bounded} we prove that almost all circulant graphs of the form $C(n;1,a)$ admit an optimal independent broadcast all whose broadcast values are at most 2.
General upper and lower bounds on the cost of independent broadcasts on circulant graphs of the form $C(n;1,a)$ are proposed in Section~\ref{sec:bounds}.
We then determine the value of the broadcast independence number of various classes of circulant graphs in Section~\ref{sec:exact}.
We finally propose a few concluding remarks in Section~\ref{sec:discussion}.

\section{Preliminary results}\label{sec:preliminaries}


Let $\mu(G)$ denote the maximum cardinality of a set of pairwise antipodal vertices in $G$.
Dunbar {\it et al.} proved the following lower bound on the broadcast independence number
of a graph.

\begin{proposition}[Dunbar {\it et al.}~\cite{DEHHH}]\label{prop:lower bound}
For every graph $G$,
$$\beta_b(G)\geq \mu(G)(\diam(G)-1) \geq 2(\diam(G)-1).$$
Moreover, this bound is sharp.
\end{proposition}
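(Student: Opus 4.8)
The lower bound follows from an explicit construction, which is essentially forced. Write $d=\diam(G)$ and let $S\subseteq V(G)$ be a set of pairwise antipodal vertices with $|S|=\mu(G)$. Since $G$ is a nontrivial connected graph we have $d\ge 1$, and some pair of vertices is at distance~$d$, so $\mu(G)\ge 2$ and hence $|S|\ge 2$. Define $f\colon V(G)\to\{0,\dots,d\}$ by $f(v)=d-1$ for $v\in S$ and $f(v)=0$ otherwise. I would then check that $f$ is an independent broadcast: for $v\in S$, choosing any $u\in S\setminus\{v\}$ gives $e_G(v)\ge d_G(u,v)=d>d-1=f(v)$, while for $v\notin S$ the inequality $f(v)=0\le e_G(v)$ is trivial, so $f$ is a broadcast; moreover any two distinct broadcast vertices both lie in $S$ and are therefore at distance $d>d-1=\max\{f(u),f(v)\}$, so $f$ is independent. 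Its cost is $\sigma(f)=|S|\,(d-1)=\mu(G)\bigl(\diam(G)-1\bigr)$, whence $\beta_b(G)\ge\mu(G)(\diam(G)-1)$, and $\mu(G)\ge 2$ then yields the second inequality. (When $d=1$ the bound is the trivial $\beta_b(G)\ge 0$ and the construction produces the empty broadcast.)

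For sharpness I would take the path $P_n$ on $n\ge 3$ vertices $v_1,\dots,v_n$, for which $\diam(P_n)=n-1$ and only the endpoints $v_1$ and $v_n$ are at distance $n-1$, so $\mu(P_n)=2$ and the bound reads $\beta_b(P_n)\ge 2(n-2)$; this value is attained by the broadcast assigning $n-2$ to both $v_1$ and $v_n$. For the reverse inequality, consider any independent broadcast $f$ on $P_n$ with broadcast vertices $v_{i_1},\dots,v_{i_k}$, $i_1<\dots<i_k$. Independence gives $f(v_{i_1})\le i_2-i_1-1$ and $f(v_{i_j})\le i_j-i_{j-1}-1$ for $2\le j\le k$, so that $\sigma(f)\le (i_2-i_1-1)+\sum_{j=2}^{k}(i_j-i_{j-1}-1)=(i_2-i_1)+(i_k-i_1)-k\le 2(n-1)-k$. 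Thus $\sigma(f)\le 2(n-2)$ whenever $k\ge 2$, while $k\le 1$ gives $\sigma(f)\le e_{P_n}(v_{i_1})\le n-1\le 2(n-2)$. Hence $\beta_b(P_n)=2(n-2)=2(\diam(P_n)-1)$, so the bound is sharp; the circulant graphs analysed in the later sections will provide further families attaining it.

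The first half of the argument is routine once the candidate broadcast is written down, so the only genuine work lies in the sharpness half. There the minor obstacle is to rule out independent broadcasts that use a single broadcast vertex or many scattered ones; the telescoping estimate above disposes of both cases, and the extremal analysis shows equality forces exactly the two endpoints of $P_n$.
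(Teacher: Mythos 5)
Your proof is correct: the broadcast assigning $\diam(G)-1$ to each vertex of a maximum set of pairwise antipodal vertices is exactly the standard construction behind this bound, and your telescoping upper bound for $P_n$ validly establishes sharpness. Note that the paper itself gives no proof --- it quotes the proposition from Dunbar et al.~\cite{DEHHH} and only remarks (citing Erwin) that paths attain $2(\diam(G)-1)$ --- so your argument is a self-contained verification of precisely the facts the paper relies on, with no divergence in approach to report.
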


In addition to grid graphs $G_{m,n}=P_m\,\Box\,P_n$ with $m \in \{2,3,4\}$ and $m\leq n$~\cite{BZ12} and
paths~\cite{Erw01}, the relation $\beta_b(G)= 2(\diam(G)-1)$ also holds for cycles
of order at least~4~\cite{BBS19}.

It can also be observed that the value $2(\diam(G)-1)$ is an upper bound on the cost of some independent broadcasts.

In order to compare the values of the independence number
and of the broadcast independence number  of the graphs
we will consider in Section~\ref{sec:exact}, the following observation will be useful.

\begin{observation}\label{obs:beta=alpha}
For every graph $G$, $\beta_b(G)\ge\alpha(G)$.
Moreover, $\beta_b(G)=\alpha(G)$ if and only if there exists
a $\beta_b$-broadcast $f$ on $G$ such that $f(v)=1$ for every broadcast
vertex $v\in V_f^+$.
\end{observation}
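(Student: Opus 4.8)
The plan is to prove both halves of the equivalence, the inequality $\beta_b(G)\ge\alpha(G)$ being essentially immediate from what is already recorded in the excerpt. Indeed, if $S$ is a maximum independent set, its characteristic function $f_S$ is an independent broadcast (as noted just before Proposition~\ref{prop:lower bound}) of cost $\sigma(f_S)=|S|=\alpha(G)$, so $\beta_b(G)\ge\alpha(G)$. This also disposes of the ``if'' direction of the second statement: if there is a $\beta_b$-broadcast $f$ with $f(v)=1$ for every $v\in V_f^+$, then the independence condition $d(u,v)>\max\{f(u),f(v)\}=1$ for distinct broadcast vertices says precisely that $V_f^+$ is an independent set, whence $\alpha(G)\ge|V_f^+|=\sigma(f)=\beta_b(G)$; combined with the reverse inequality this gives $\beta_b(G)=\alpha(G)$.

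The substance is the ``only if'' direction: assuming $\beta_b(G)=\alpha(G)$, I must produce a $\beta_b$-broadcast all of whose broadcast values equal $1$. I would start from an arbitrary $\beta_b$-broadcast $f$ and show how to transform it, without decreasing the cost, into one with all values equal to $1$. The natural move is a ``splitting'' argument: if some broadcast vertex $v$ has $f(v)=k\ge 2$, replace $f(v)$ by a suitable set of vertices assigned value $1$ inside the ball of radius $k$ around $v$, chosen to be pairwise nonadjacent and nonadjacent to the other broadcast vertices. One would like to find at least $k$ such vertices so that the cost does not drop, but this need not be possible in a single ball of a general graph. So instead I would argue globally and use the hypothesis $\beta_b(G)=\alpha(G)$ as leverage rather than trying to split each vertex locally.

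The cleaner route is therefore the following. Let $f$ be any $\beta_b$-broadcast; I claim the set $V_f^+$ of broadcast vertices can be shown to be ``almost'' independent and then enlarged. First observe that since $\beta_b(G)=\alpha(G)$, for every independent set $S$ we have $|S|\le\beta_b(G)=\sigma(f)=\sum_{v\in V_f^+}f(v)$. Now take a maximum independent set $S_0$; then $|S_0|=\alpha(G)=\sigma(f)\ge\sum_{v\in V_f^+}1=|V_f^+|$, with equality iff $f(v)=1$ for all $v\in V_f^+$. If $f$ already has this property we are done, so suppose not; then $|V_f^+|<\alpha(G)$. The key step is to show this is impossible for a well-chosen $f$, by modifying $f$ to strictly increase $|V_f^+|$ while keeping the cost maximum: among all $\beta_b$-broadcasts, choose one, say $f$, with $|V_f^+|$ maximum, and derive a contradiction from the existence of a vertex $v$ with $f(v)\ge 2$ by exhibiting a $\beta_b$-broadcast with strictly more broadcast vertices (for instance, decreasing $f(v)$ by $1$ frees up distance constraints and, since $f(v)-1\ge 1$ keeps $v$ a broadcast vertex while $d(u,v)>\max\{f(u),f(v)\}$ still holds, one must then re-invest the lost unit somewhere as a new value-$1$ broadcast vertex; finding such a vertex is where the independence-number hypothesis must be invoked).

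The main obstacle, and the place requiring genuine care, is exactly this re-investment step: after lowering $f(v)$ from $k$ to $k-1$ one has a valid independent broadcast of cost $\beta_b(G)-1$, and one must find a vertex $w\notin V_f^+$, at distance $>1$ from every current broadcast vertex and with $e(w)\ge 1$ (automatic), to set $f(w)=1$; if no such $w$ existed, then $V_f^+\cup\{v\}$ (or $V_f^+$ itself, now with all the old broadcast vertices plus their neighborhoods) would dominate $G$ in the ordinary sense, and I would need to translate that domination statement into the bound $\alpha(G)\le\sigma(f)-1<\beta_b(G)$, contradicting $\beta_b(G)=\alpha(G)$. Making this counting precise — relating a dominating-type configuration coming from the $f$-balls to an upper bound on $\alpha(G)$ — is the crux; I expect it to go through by observing that each broadcast vertex $v$ with $f(v)=k$ contributes a ball $B(v,k)$ that any independent set can meet in at most $k$ vertices (since within $B(v,k)$ any two vertices are at distance at most $2k$, but more to the point, an independent set restricted there has size at most... ) — and here one does need a small combinatorial lemma bounding the independence number of a ball, which in the generality of arbitrary $G$ is false, so in fact the correct and honest proof must be the short one: keep only the forward implication's easy half and, for ``only if'', simply note that if every $\beta_b$-broadcast had some value $\ge 2$ then $|V_f^+|<\sigma(f)=\beta_b(G)=\alpha(G)$, yet $V_f^+$ need not be independent, so one instead takes a maximum independent set $S_0$, uses its characteristic function as a $\beta_b$-broadcast (which is legitimate precisely because $|S_0|=\alpha(G)=\beta_b(G)$), and this broadcast has all values equal to $1$ — giving the required $f$ directly, with no transformation needed at all.

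Hence the whole proof collapses to: ($\le$/$\ge$) characteristic function of a maximum independent set witnesses $\beta_b\ge\alpha$; ($\Leftarrow$) all-ones broadcast vertices form an independent set so $\alpha\ge\beta_b$; ($\Rightarrow$) when $\beta_b(G)=\alpha(G)$, the characteristic function of a maximum independent set is itself a $\beta_b$-broadcast of the desired shape. I would present it in that order, and the only thing to double-check is that $f_{S_0}$ is genuinely a broadcast, i.e.\ $f_{S_0}(v)=1\le e(v)$ for each $v\in S_0$, which holds because $G$ is a nontrivial connected graph so every vertex has eccentricity at least $1$.
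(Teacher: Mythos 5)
Your final argument is correct and coincides with the paper's own proof: the characteristic function of a maximum independent set yields $\beta_b(G)\ge\alpha(G)$ and, when $\beta_b(G)=\alpha(G)$, is itself a $\beta_b$-broadcast with all values equal to~$1$, while the converse follows from the fact that $V_f^+$ is always an independent set. The lengthy exploratory detour about splitting and re-investing broadcast values is unnecessary (and, as you yourself note, would not go through in general); it should simply be deleted, leaving only the three-line argument in your last paragraph.
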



Indeed, the fact that the characteristic function $f_S$
of every maximal independent set $S$ in a graph $G$ is an independent broadcast on $G$,
as noticed in the previous section, gives the inequality and the necessity of the condition for the second part of the statement, while
the sufficiency follows from the fact that $V_f^+$ is always an independent set.


Before considering general cases in the next sections, we will determine in the following the independence number and the broadcast independence number of circulant graphs of the form $C(n;1,a)$ for two particular cases, namely when $n=2a$ or $n=3a$.

\begin{lemma}\label{lem:alphaC(2a;1,a)}
For every integer $a\ge 2$,
$$\alpha(C(2a;1,a)) =
\left\{ \begin{array}{ll}
   a  , & \text{if $a$ is odd,}\\[1ex]
   a - 1 ,       & \text{if $a$ is even.}
   \end{array}
\right.
$$
\end{lemma}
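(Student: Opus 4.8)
The first step is to describe the graph $G=C(2a;1,a)$ concretely: it consists of the cycle $v_0v_1\cdots v_{2a-1}v_0$ together with the $a$ ``diagonals'' $v_iv_{i+a}$ for $0\le i\le a-1$ (since $a=\lfloor 2a/2\rfloor$, this is the Möbius ladder on $2a$ vertices). The $a$ diagonals are pairwise disjoint edges covering $V(G)$, so any independent set contains at most one endpoint of each, giving the easy upper bound $\alpha(G)\le a$. I also record that each diagonal joins a vertex of even index to a vertex of odd index precisely when $a$ is odd.

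For the case $a$ odd, this last remark immediately gives the lower bound: the set of even-indexed vertices $\{v_0,v_2,\ldots,v_{2a-2}\}$ is independent in the cycle and, since $a$ is odd, meets every diagonal in exactly one vertex, hence is independent in $G$ and has size $a$. Together with $\alpha(G)\le a$ this settles $\alpha(C(2a;1,a))=a$.

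For the case $a$ even, I would prove the sharper bound $\alpha(G)\le a-1$ by contradiction: suppose $S$ is independent with $|S|=a$. Since $S$ meets each of the $a$ diagonals in at most one vertex and $|S|=a$, it meets each in exactly one vertex; in particular $S$ is an independent set of size $a$ in the cycle $C_{2a}$. Here comes the key structural step: the only independent sets of size $a$ in $C_{2a}$ are the two parity classes $\{v_0,v_2,\ldots\}$ and $\{v_1,v_3,\ldots\}$ (order the elements of $S$ cyclically; the $a$ gaps between consecutive elements are each $\ge 2$ and sum to $2a$, so each equals $2$). But when $a$ is even each parity class contains both endpoints $v_i$ and $v_{i+a}$ of some diagonal, so it is not independent in $G$ --- a contradiction. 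Hence $\alpha(G)\le a-1$, and for the matching lower bound I would exhibit the explicit set $\{v_0,v_2,\ldots,v_{a-2}\}\cup\{v_{a+1},v_{a+3},\ldots,v_{2a-3}\}$ of size $a-1$ and check directly that none of the cycle edges nor the diagonals joins two of its vertices.

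The routine parts are the two verifications that the exhibited sets are independent; the one step that needs a genuine (if short) argument is the classification of the size-$a$ independent sets of $C_{2a}$, which is what forces the drop from $a$ to $a-1$ when $a$ is even.
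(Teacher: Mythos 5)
Your proof is correct, and on the decisive step it takes a different route from the paper. The lower bounds coincide: both you and the paper exhibit the even-index class for odd $a$ and the set $\{v_0,v_2,\dots,v_{a-2}\}\cup\{v_{a+1},v_{a+3},\dots,v_{2a-3}\}$ for even $a$ (note that for $a=2$ your second block is empty, which is fine). For the upper bounds the paper argues via subgraphs: for odd $a$ it uses $\alpha(C_{2a})=a$, and for even $a$ it observes that the odd cycle $v_0v_1\cdots v_av_0$ (of length $a+1$) forces every independent set to miss two consecutive vertices of the $2a$-cycle, whence $\alpha\le\lfloor\frac{2a-1}{2}\rfloor=a-1$. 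You instead use the perfect matching formed by the $a$-edges to get $\alpha\le a$ for all $a$, and for even $a$ you classify the size-$a$ independent sets of $C_{2a}$ (the gap-counting argument correctly shows they are exactly the two parity classes) and note that each parity class contains both endpoints $v_i,v_{i+a}$ of a diagonal when $a$ is even. Both routes are short; yours has the advantage of being fully self-contained and of pinning down precisely which candidate sets must fail, whereas the paper's odd-cycle argument is slightly terser but leans on the phrase ``two consecutive vertices'' even though adjacency in that odd cycle also includes the chord $v_0v_a$, a point your classification argument sidesteps entirely.
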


\begin{proof}
Since $C(4;1,2)=K_4$ and  $\alpha(K_4)=1$, we can assume $a\ge 3$.

If $a$ is odd, then  the set
$S = \{v_i|\ \text{$i$ is even} \}$ is an independent set of $C(2a;1,a)$, and thus
$\alpha(C(2a;1,a)) \ge |S| = a$.
Since $C_{2a}$ is a subgraph of $C(2a;1,a)$, we get
$\alpha(C(2a;1,a)\leq \alpha(C_{2a})=a$ and the result follows.

If $a$ is even, then the set
$S' =\{v_i|\ 0\le i \le a-2,\ \text{$i$ is even}\} \cup \{v_i|\ a+1 \le i \le 2a-3,\ \text{$i$ is odd}\}$ is an independent set of $C(2a;1,a)$, and thus
$\alpha(C(2a;1,a)) \ge |S'| = a-1$.
Note that the odd cycle $C=v_0v_1\dots v_av_0$, with $\alpha(C)=\frac{a}{2}$, is a subgraph of $C(2a;1,a)$.
Therefore, for every independent set $I$ of $C(2a;1,a)$,
there are at least two consecutive vertices $v_i$, $v_{i+1}$ with $v_i,v_{i+1}\notin I$.
This implies $\alpha(C(2a;1,a))\leq \frac{2a-1}{2}$,
which gives $\alpha(C(2a;1,a))=a-1$.
\end{proof}


\begin{figure}
\begin{center}
\begin{tikzpicture}[scale=0.7]

  \LIGNE {1}{0}{7.5}{0}
  \POINTILLE {7.5}{0}{8.5}{0}

  \POINTILLE {9.5}{0}{10.5}{0}
  \LIGNE {10.5}{0}{13.5}{0}
  \POINTILLE {13.5}{0}{14.5}{0}

  \POINTILLE {15.5}{0}{16.5}{0}
  \LIGNE {16.5}{0}{21}{0}

  \LIGNE {1}{-2}{7.5}{-2}
  \POINTILLE {7.5}{-2}{8.5}{-2}

  \POINTILLE {9.5}{-2}{10.5}{-2}
  \LIGNE {10.5}{-2}{13.5}{-2}
  \POINTILLE {13.5}{-2}{14.5}{-2}

  \POINTILLE {15.5}{-2}{16.5}{-2}
  \LIGNE {16.5}{-2}{21}{-2}

\TIRET {1}{0}{21}{-2}
\TIRET {1}{-2}{21}{0}

\LIGNE {1}{0}{1}{-2}
\LIGNE {3}{0}{3}{-2}
\LIGNE {5}{0}{5}{-2}
\LIGNE {7}{0}{7}{-2}
\LIGNE {11}{0}{11}{-2}
\LIGNE {13}{0}{13}{-2}
\LIGNE {17}{0}{17}{-2}
\LIGNE {19}{0}{19}{-2}
\LIGNE {21}{0}{21}{-2}

 \SOM{1}{0}{$v_{0}$}{}
  \SOM{3}{0}{$v_{1}$}{}
  \SOM{5}{0}{$v_{2}$}{}
  \SOM{7}{0}{$v_{3}$}{}
  \SOM{11}{0}{$v_{k}$}{}
  \SOM{13}{0}{$v_{k+1}$}{}
  \SOM{17}{0}{$v_{a-3}$}{}
  \SOM{19}{0}{$v_{a-2}$}{}
  \SOM{21}{0}{$v_{a-1}$}{}

\SOM{1}{-2}{}{$v_{a}$}
  \SOM{3}{-2}{}{$v_{a+1}$}
  \SOM{5}{-2}{}{$v_{a+2}$}
  \SOM{7}{-2}{}{$v_{a+3}$}
  \SOM{11}{-2}{}{$v_{a+k}$}
  \SOM{13}{-2}{}{$v_{a+k+1}$}
  \SOM{17}{-2}{}{$v_{2a-3}$}
  \SOM{19}{-2}{}{$v_{2a-2}$}
  \SOM{21}{-2}{}{$v_{2a-1}$}

\end{tikzpicture}
\caption{\label{fig:C(2a;1,a)} The circulant graph C(2a;1,a).}
\end{center}
\end{figure}
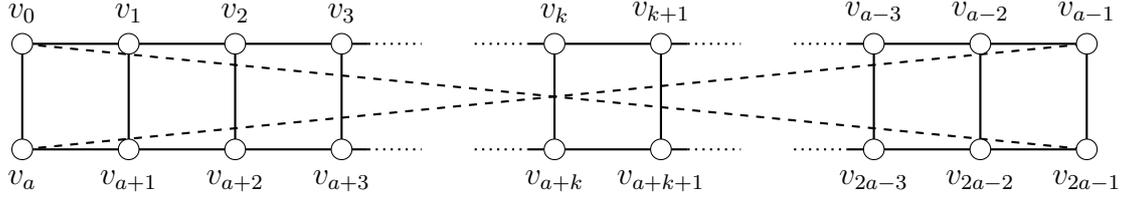

\begin{theorem}\label{th:C(2a;1,a)}
For every integer $a\ge 2$,
$$\beta_b(C(2a;1,a)) =
\left\{ \begin{array}{ll}
  \alpha(C(2a;1,a)) = a , & \text{if a is odd},\\ [1ex]
  \alpha(C(2a;1,a)) = a - 1 , & \text{if $a = 2^p$ for some integer $p\ge 1$},\\ [1ex]
   a        & \text{otherwise.}
   \end{array}
\right.
$$
\end{theorem}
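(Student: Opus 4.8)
The plan is to first pin down the structure of $G:=C(2a;1,a)$. It is the Möbius ladder on $2a$ vertices: for $a\ge 2$ it is cubic, it is vertex-transitive (so every vertex has eccentricity $\diam(G)$), and a short breadth-first search gives $\diam(G)=\lceil a/2\rceil=:D$. The $a$ edges $e_i=v_iv_{i+a}$ ($0\le i<a$) — the \emph{rungs} — partition $V(G)$, and since $V_f^+$ is always independent no rung can contain two broadcast vertices; hence $|V_f^+|\le a$, and each broadcast vertex $v$ \emph{owns} the rung containing it, the owned rungs being pairwise distinct. From Observation~\ref{obs:beta=alpha} and Lemma~\ref{lem:alphaC(2a;1,a)} we already get $\beta_b(G)\ge a$ when $a$ is odd and $\beta_b(G)\ge a-1$ when $a$ is even, so for these sub-cases it remains to prove the matching upper bound $\sigma(f)\le a$ (resp.\ $\le a-1$ when $a=2^p$) for every independent broadcast $f$; for ``$a$ even, not a power of $2$'' one additionally needs an explicit independent broadcast of cost $a$.

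For the upper bound, fix an independent broadcast $f$. If $f\le 1$ everywhere then $\sigma(f)=|V_f^+|\le\alpha(G)$, which is $\le a$ (and $\le a-1$ when $a$ is even). If $f(v)=D$ for some $v$, then $v$ is the only broadcast vertex (any other would be at distance $>D=\diam(G)$ from $v$), so $\sigma(f)=D=\lceil a/2\rceil\le a-1$. In the remaining case $2\le f(v)\le D-1$ for every broadcast vertex, I would run a charging argument on the rungs. Write $\sigma(f)=|V_f^+|+\sum_{v\in V_f^+}(f(v)-1)$; it suffices to show $\sum_{v\in V_f^+}(f(v)-1)\le a-|V_f^+|$, the right-hand side being the number of rungs owned by nobody. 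Say $v$ \emph{blocks} the rung $\{v_i,v_{i+a}\}$ if $d(v,v_i)\le f(v)$ and $d(v,v_{i+a})\le f(v)$. Computing the distances from a fixed vertex to the two ends of each rung shows that a value-$k$ broadcast vertex blocks exactly $2k-1$ rungs, one of which is the rung it owns, hence $2(k-1)$ unowned rungs; and a rung blocked by $v$ cannot be owned by another broadcast vertex $u$ (else $d(u,v)\le f(v)$, contradicting independence). Double-counting incidences between unowned rungs and the broadcast vertices blocking them then gives $2\sum_{v\in V_f^+}(f(v)-1)\le 2\,\#\{\text{unowned rungs}\}$ \emph{provided each unowned rung is blocked by at most two broadcast vertices}, yielding $\sigma(f)\le a$. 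When $a=2^p$ this must be sharpened to $a-1$: $\sigma(f)=a$ would force every unowned rung to be blocked by exactly two broadcast vertices, a ``perfectly efficient'' configuration which one rules out for $a=2^p$ by a parity argument — translating the configuration by $a$ along the Hamiltonian cycle $C_{2a}$ and using $4\mid a$ for $p\ge 2$ produces a contradiction (in effect, the non-existence of a perfect code in the Möbius ladder when $a\not\equiv 2\pmod 4$).

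For the lower bound in the ``$a$ even, not a power of $2$'' case, write $a=2^sm$ with $m\ge 3$ odd. If $s=1$ (i.e.\ $a\equiv 2\pmod 4$), the set $\{v_{4i}:0\le i<a/2\}$ has all pairwise distances $\ge 3$ in $G$ — one checks that, between $v_0$ and $v_{4k}$ ($1\le k<a/2$), both the ``cycle route'' and the ``chord route'' have length at least $3$, using that $a/2=m$ is odd — so giving it the constant value $2$ is an independent broadcast of cost $2\cdot(a/2)=a$ (a perfect code of $G$). If $s\ge 2$ there is no constant-$2$ broadcast of cost $a$ (it would again be a perfect code, impossible for $a\equiv 0\pmod 4$), so one instead builds a broadcast using some values $\ge 3$; this is the delicate part and proceeds by a further case analysis on the residue of $a$.

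The main obstacle is the combinatorial lemma underpinning the charging argument: \emph{every unowned rung is blocked by at most two broadcast vertices}. Its proof is the geometric heart of the matter: since the two ends $v_i,v_{i+a}$ of a rung are adjacent, a vertex of value $k$ blocks the rung essentially iff it lies within cycle-distance $k-1$ of $v_i$ or of $v_{i+a}$ on $C_{2a}$; three pairwise $f$-independent broadcast vertices cannot all do this, because if two of them sit near the same end they are too close to each other, while placing one near each end runs into a ``left side versus right side'' conflict created precisely by the chord $v_iv_{i+a}$. Making this rigorous uniformly in the values $f(v)$, and in particular near the boundary value $k=D-1$, requires a careful distance analysis and is where most of the work lies; the explicit cost-$a$ construction for $a\equiv 0\pmod 4$ is the second, less conceptual, hurdle.
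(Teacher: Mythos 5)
Your rung-charging framework for the upper bound $\sigma(f)\le a$ is a legitimate alternative to the paper's argument (the paper instead counts dominated vertices: each broadcast vertex $f$-dominates exactly $4f(v)$ vertices, non-broadcast vertices are dominated at most three times and at most $|V_f^+|$ of them three times, giving $4f(V_f^+)\le 4a$), and your key lemma that an unowned rung has at most two blockers is in fact true and provable in a few lines. But as written it is only asserted, and your sketch of it is off: two blockers near the \emph{same} end are not automatically ``too close to each other'' --- with values $3$ placed at cycle-distance $2$ on opposite sides of $v_i$ they are at distance $4>3$ and hence independent. The correct argument must note that among three blockers two sit on the same end, and either on the same side (then their distance is at most $\max(k_1,k_2)-2$) or on opposite sides, in which case the third blocker, sitting near the other end of the rung, conflicts through the chord with whichever of the first two lies on its side. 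So this part is repairable but incomplete.

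The genuine gaps are the two components you explicitly defer, and they are precisely the hard parts of the theorem. (1) For $a=2^p$ you claim that $\sigma(f)=a$ forces a ``perfectly efficient'' configuration ruled out by ``a parity argument / non-existence of a perfect code''. Equality in your double count only forces every unowned rung to be blocked exactly twice; since broadcast values may differ and exceed $2$, this is not a perfect-code statement, and no translation-by-$a$ argument is given or obviously available. The paper needs a real case analysis here: it takes a closest pair of broadcast vertices with distinct $g$-values, splits according to whether they lie on the same side of the ladder to force some antipodal vertex to be dominated at most twice (yielding $\le a-1$), and in the all-values-equal case derives that the broadcast vertices occupy positions $0,2k,4k,\dots$ and $a+k,a+3k,\dots$, whence $a=(2t'+1)k$ has an odd factor, contradicting $a=2^p$. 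Nothing of this kind appears in your proposal. (2) For $a$ even and not a power of two with $a\equiv 0\pmod 4$, you give no construction of an independent broadcast of cost $a$ (``this is the delicate part and proceeds by a further case analysis''). The paper's construction is explicit and uniform: write $a=(2\ell+1)2^k$ and set $g(v_i)=2^k$ exactly when $i\equiv 0\pmod{2^{k+1}}$, then verify by a short distance computation that any two such vertices are at distance at least $2^k+1$, so $\sigma(g)=a$. Without (1) and (2), the second and third cases of the theorem are unproven, so the proposal as it stands does not establish the statement. (A small additional slip: after excluding ``$f\le 1$ everywhere'' and ``some $f(v)=D$'', you cannot assume $2\le f(v)\le D-1$ for \emph{every} broadcast vertex --- value-$1$ vertices may coexist --- though your charging argument in fact handles them since they block only their own rung.)
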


\begin{proof}
The case $a=2$ directly follows from Lemma~\ref{lem:alphaC(2a;1,a)}.
We can thus assume $a\ge 3$.
The graph $C(2a;1,a)$ can be viewed as the Cartesian product graph $P_a\,\Box\, K_2$ with two additional edges (see Figure~\ref{fig:C(2a;1,a)}, where the two additional edges are drawn as dashed lines).
Recall that $\alpha(C(2a;1,a))$ is given by Lemma~\ref{lem:alphaC(2a;1,a)}, and
let $f$ be any independent $\beta_b$-broadcast on $C(2a;1,a)$.
If $\vert V_f^+ \vert = 1$, then
$$\sigma(f) = \diam(C(2a;1,a)) = \left\lfloor\frac{a+1}{2}\right\rfloor < \alpha(C(2a;1,a)),$$
which gives  $\sigma(f) < \beta_b(C(2a;1,a))$ by Observation~\ref{obs:beta=alpha}, a contradiction.
Therefore, $\vert V_f^+ \vert \geq 2$.
Since each vertex $v\in V_f^+$ $f$-dominates exactly $4f(v)$ vertices,
each $f$-broadcast vertex $v$ is $f$-dominated exactly once,
each non-broadcast vertex is $f$-dominated at most three times,
and at most $\vert V_f^+\vert$ vertices can be dominated three times (namely the vertices $v_{i+a}$ when $v_i\in V_f^+$), we get
\begin{equation}\label{eq:1}
4f(V_f^+)  \leq 3|V_f^+|+|V_f^+| + 2\left(2a - 2|V_f^+|\right)= 4a,
\end{equation}
and thus
\begin{equation}\label{eq:2}
\beta_b(C(2a;1,a)) = \sigma(f) = \sum_{v\in V_f^+}f(v) = f(V_f^+) \leq a.
\end{equation}

We now consider the three cases of the statement of the theorem separately.

\begin{enumerate}
\item   $a$ is odd.\\
Let $g$ be the mapping from $V(C(2a;1,a))$ to $\{0,1\}$
defined by  $g(v_i)=1$ if and only if $i$ is even.
Since $a$ is odd, $g$ is an independent broadcast on $C(2a;1,a)$.
This gives $\beta_b(C(2a;1,a)) \ge \sigma(g) =  a$ and, since $g$ satisfies~(\ref{eq:2}), $\beta_b(C(2a;1,a))= a$.
By Observation~\ref{obs:beta=alpha}, we then get
$\beta_b(C(2a;1,a)) = \alpha(C(2a;1,a)) = a.$

\item  $a = 2^p$ for some integer $p\ge 1$.\\
By Observation~\ref{obs:beta=alpha} and Lemma~\ref{lem:alphaC(2a;1,a)}, we have
$\beta_b(C(2a;1,a)) \geq \alpha(C(2a;1,a)) = a-1$, since $a$ is even.
Let $g$ be any independent $\beta_b$-broadcast on $C(2a;1,a)$.

Suppose first that not all $g$-broadcast vertices have the same $g$-value,
and let $v_i$ and $v_j$ be any two vertices with $g(v_i) < g(v_j)$ such that the distance $d(v_i,v_j)$ is minimum among all $g$-broadcast vertices with distinct $g$-values. Without loss of generality, we can assume $i<j$. We consider two
subcases, depending on whether $v_i$ and $v_j$ are on the same side of the ``ladder'' (refer to Figure~\ref{fig:C(2a;1,a)}) or not.
\begin{enumerate}
\item {\it $j-i>a$ ($v_i$ and $v_j$ are not on the same side of the ladder).}\\
Since no $g$-broadcast vertex lies on a shortest path linking $v_i$ and $v_j$, $v_{j+a}$ is not $g$-dominated by $v_i$ and is thus $g$-dominated at most twice.
Therefore, the inequality~(\ref{eq:1}) becomes
$$4g(V_g^+)  \leq  3(|V_g^+|-1)+|V_g^+| + 2\left(2a - 2|V_g^+|+1\right) = 4a-1,$$
which gives
$$\beta_b(C(2a;1,a)) = \sigma(g) = \sum_{v\in V_g^+}g(v) = g(V_g^+) \leq \left\lfloor\frac{4a-1}{4}\right\rfloor=a-1.$$

\item {\it $j-i<a$ ($v_i$ and $v_j$ are on the same side of the ladder).}\\
If there exists a $g$-broadcast vertex $v_k$ with $i+a<k<j+a$ then,
since no $g$-broadcast vertex lies on a shortest path linking $v_i$ and $v_j$,
we necessarily have $v_k\in\{v_{i+a+1},v_{j+a-1}\}$. By considering either $v_i$ and $v_k$, or $v_j$ and $v_k$, instead of $v_i$ and $v_j$, we are back to the previous subcase.

If no such vertex exists, then both $v_{i+a}$ and $v_{j+a}$ are $g$-dominated at most twice and thus, using the same argument as before, we get $\beta_b(C(2a;1,a))\le a-1$.
\end{enumerate}

We thus get $\beta_b(C(2a;1,a)) = a-1$ in both subcases, as required.
Suppose now that $g(v)=k$ for every vertex $v\in V_g^+$ and let $v_i$ be any such vertex.

If $v_{i+a+k}\notin V_g^+$, then  the vertex $v_{i+a}$ is $g$-dominated at most twice and thus, as previously, we get $\beta_b(C(2a;1,a)) = \sigma(g) = a-1$.
The same conclusion arises if $v_{i-a-k}\notin V_g^+$.

Suppose finally
$v_{i+a+k},v_{i-a-k}\in V_g^+$
for every vertex $v_i \in V_g^+$ and assume, without loss of generality, $v_0\in V_g^+$.
We thus have (recall that indices are taken modulo $2a$)
\begin{align*}
V_g^+ & = \{v_{0},v_{a+k},v_{2k},v_{a+3k}, \ldots\},\\
      & = \{v_{0},v_{2k},v_{4k}, \ldots ,v_{a+k},v_{a+3k}, v_{a+5k}, \ldots \}.
\end{align*}
Hence, every $g$-broadcast vertex $v_i$ with $0\le i<a$ satisfies $i=2kt$ for some $t$, $0\le t< \frac{a}{2k}$.
Since $v_{-a-k}=v_{a-k}$ is a $g$-broadcast vertex and $0<a-k<a$, we get
$a-k=2kt'$, for some $t'$, $0< t'< \frac{a}{2k}$.
This gives $a=(2t'+1)k$, contradicting the assumption $a=2^p$, so that this last case cannot appear.

\item $a$ is even and $a \neq 2^p$ for every $p>0$.\\
This implies  $a=(2\ell+1)2^k$ for some positive integers $k$ and $\ell$.
Let $g$ be the mapping from $V(C(2a;1,a))$ to $\{0,2^k\}$
defined by  $g(v_i)=2^k$ if and only if $i\equiv 0 \pmod{2^{k+1}}$,
which gives $V_g^+=\{v_{p2^{k+1}}|\ 0\le p\le 2\ell\}$.

For any two $g$-broadcast vertices $v_i=v_{p2^{k+1}}$ and $v_j=v_{q2^{k+1}}$, $0\le p < q \le 2\ell$, we have
$$d(v_i,v_j) = \min\{ (q-p)2^{k+1}, |(p-q)2^{k+1}+a| + 1, (p-q)2^{k+1} + 2a\},$$
which gives, since $a=(2\ell+1)2^k$,
$$d(v_i,v_j) = \min\{ (q-p)2^{k+1}, |2(p-q+\ell)+1|2^k + 1, (p-q+2\ell+1)2^{k+1}\}\ge 2^k+1.$$
Therefore, $g$ is an independent broadcast on $C(2a;1,a)$, with cost
$\sigma(g) = 2^k \left(\frac{2a}{2^{k+1}}\right)=a$, which gives
$\beta_b(C(2a;1,a)) \ge \sigma(g)= a$ and thus, since $g$ satisfies (\ref{eq:2}), $\beta_b(C(2a;1,a)) = a$.

%
%
%

\end{enumerate}

This completes the proof.
\end{proof}

We finally consider the case $n=3a$.

\begin{theorem}\label{th:C(3a;1,a)}
For every integer $a\ge 3$,
$$\beta_b(C(3a;1,a))=\alpha(C(3a;1,a))=a.$$
\end{theorem}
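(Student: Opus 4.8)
The plan is to mirror the structure used for $C(2a;1,a)$: first pin down $\alpha(C(3a;1,a))$, then establish a matching upper bound $\beta_b(C(3a;1,a))\le a$ by a domination-counting argument, and finally exhibit an explicit $1$-bounded independent broadcast of cost $a$, so that Observation~\ref{obs:beta=alpha} closes the gap. For the independence number, the natural candidate set is $S=\{v_0,v_3,v_6,\dots,v_{3a-3}\}$, i.e.\ every third vertex; since $3a$ is divisible by $3$, the indices in $S$ form a genuine residue class and no two differ by $1$ or by $a$ modulo $3a$ (as $a\equiv 0$ would be needed for the latter, but consecutive multiples of $3$ differ by $3,6,\dots$, none congruent to $a$ unless $3\mid a$, which must be checked case-wise on $a\bmod 3$). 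This gives $\alpha\ge a$. For the upper bound on $\alpha$, use that $C_{3a}$ is a spanning subgraph, so $\alpha(C(3a;1,a))\le\alpha(C_{3a})=\lfloor 3a/2\rfloor$; that is too weak, so instead I would argue directly using the triangles $v_iv_{i+1}v_{i+a}$ (note $C(3a;1,a)$ contains many triangles when combined with the chord $a$), or more cleanly observe that $C(3a;1,a)$ with the edge set $\{v_iv_{i+1}\}\cup\{v_iv_{i+a}\}$ decomposes the vertex set into $a$ triangles $\{v_i,v_{i+a},v_{i+2a}\}$ for $0\le i<a$ (this uses $3\mid 3a$ so the chord-$a$ edges form $a$ disjoint triangles), and an independent set meets each such triangle at most once, giving $\alpha\le a$.

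Next, the upper bound $\beta_b(C(3a;1,a))\le a$. Let $f$ be a $\beta_b$-broadcast. If $|V_f^+|=1$ then $\sigma(f)=\diam(C(3a;1,a))$, which I would compute to be roughly $a/2+O(1)$ (moving along the cycle versus using chords), and this is $<a=\alpha$ for $a\ge 3$, contradicting Observation~\ref{obs:beta=alpha}; hence $|V_f^+|\ge 2$. Then each broadcast vertex $v$ $f$-dominates exactly $4f(v)$ vertices (the graph is $4$-regular), each broadcast vertex is dominated exactly once (independence), and I count multiplicities of domination of the $3a$ vertices. The key combinatorial input is: how many vertices can be dominated with multiplicity $3$ or $4$? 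In $C(3a;1,a)$, a single broadcast vertex $v_i$ with $f(v_i)\ge 2$ already dominates some vertices twice from its own ball (e.g.\ $v_{i+a}$ is reachable via the chord in $1$ step and via two $+1$-steps only if $a=2$, so this needs care); the structure is less symmetric than the ``ladder'' $C(2a;1,a)$, so I would handle the $2$-bounded case separately and invoke Section~\ref{sec:2-bounded} to reduce to $2$-bounded broadcasts when $n=3a\ne 2a+1$ and not ($n=2a$, $a$ even) — which holds here. With a $2$-bounded $f$, split $V_f^+=A\cup B$ where $A$ are the $f$-value-$1$ vertices and $B$ the $f$-value-$2$ ones; each $v\in B$ dominates $8$ vertices, each $v\in A$ dominates $4$, the total over all $3a$ vertices with multiplicity gives $4|A|+8|B|\le 2\cdot 3a + (\text{slack})$, and a careful accounting of which vertices can be hit twice (vertices at distance $2$ from a value-$2$ vertex, vertices between two close broadcast vertices) should yield $\sigma(f)=|A|+2|B|\le a$.

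Finally, for the lower bound $\beta_b(C(3a;1,a))\ge a$, the characteristic function $f_S$ of the independent set $S=\{v_{3t}\mid 0\le t<a\}$ above is already an independent broadcast of cost $a$, so combined with the upper bound we get $\beta_b=a=\alpha$, and the ``$1$-bounded'' witness makes the equality with $\alpha$ immediate via Observation~\ref{obs:beta=alpha}. \textbf{The main obstacle} I anticipate is the domination-counting step for the upper bound: unlike $C(2a;1,a)$, where antipodal-type vertices $v_{i+a}$ were the unique candidates for triple domination, in $C(3a;1,a)$ the ball of radius $2$ around a vertex has a more complicated overlap pattern (the chord $a$ interacts with consecutive $\pm1$ steps differently), so bounding the number of multiply-dominated vertices sharply enough to land exactly at $a$ — rather than at $a+1$ or $a+2$ — will require either a clever weighting or an explicit case analysis on the relative positions of the broadcast vertices. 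A secondary subtlety is verifying $\diam(C(3a;1,a))$ and the triangle-decomposition claim carefully for small $a$ and for the three residue classes of $a$ modulo $3$.
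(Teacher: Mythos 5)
There are two genuine gaps. First, your lower-bound witness is broken for $a\equiv 0\pmod 3$: if $3\mid a$, say $a=3b$, then $v_0$ and $v_{3b}=v_a$ both lie in $S=\{v_{3t}\mid 0\le t<a\}$ and are joined by an $a$-edge, so $S$ is not independent (already for $a=3$, $n=9$, the set $\{v_0,v_3,v_6\}$ is a triangle). You flag this case but never resolve it, and then use $S$ anyway in the final step. A correct witness must be tailored to $a$; the paper uses $f(v_i)=1$ iff $i$ is even and $i<2a$ when $a$ is odd, and $f(v_i)=1$ iff $(i\bmod (a+1))$ is odd and $i\le 2a$ when $a$ is even, each giving cost $a$.

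Second, your upper bound $\beta_b\le a$ is not actually established. The domination-multiplicity count you sketch is left incomplete (and the local counts are off: in $C(3a;1,a)$ a value-$2$ vertex dominates $10$ other vertices, not $8$, because $v_{i+2a}=v_{i-a}$), and the proposed escape route via Section~\ref{sec:2-bounded} is circular in the paper's logic: the $2$-bounded reduction for $n=3a$ (Proposition~\ref{prop:2-bounded-2a-3a}) is itself \emph{derived from} the broadcast exhibited in the proof of this very theorem, and the two lemmas of that section cover only $2a+2\le n<3a$ and $n>3a$. The idea you are missing is that your own triangle partition $\{v_i,v_{i+a},v_{i+2a}\}$ (which correctly gives $\alpha\le a$) generalizes directly to broadcasts: to each $v_i\in V_f^+$ associate the set $C_f^i=\{v_i,\dots,v_{i+f(v_i)-1}\}\cup\{v_{i+a},\dots,v_{i+a+f(v_i)-1}\}\cup\{v_{i+2a},\dots,v_{i+2a+f(v_i)-1}\}$ of size $3f(v_i)$; these sets are pairwise disjoint for distinct broadcast vertices (an overlap would force $d(v_i,v_{i'})\le \max\{f(v_i),f(v_{i'})\}$, since one $a$-edge plus fewer than $\max f$ unit steps suffice), whence $3\sigma(f)\le 3a$ for \emph{every} independent broadcast $f$, with no case analysis, no diameter computation, and no restriction to $2$-bounded broadcasts.
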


\begin{proof}
Let $f$ be an independent $\beta_b$-broadcast on $C(3a;1,a))$.
For each vertex $v_i\in V_f^+$, we let
$$C_f^i =  \{v_i,\dots,v_{i+f(v_i)-1}\}
       \ \cup\ \{v_{i+a},\dots,v_{i+a+f(v_i)-1}\}
       \ \cup\  \{v_{i+2a},\dots,v_{i+2a+f(v_i)-1}\}.$$
We clearly have $|C_f^i|=3f(v_i)$ for every $v_i\in V_f^+$,
and $C_f^i \cap C_f^{i'} = \emptyset$ for every two distinct vertices $v_i$ and $v_{i'}$ in $V_f^+$,
since otherwise we would have $d(v_{i},v_{i'})\leq \max \{f(v_{i}),f(v_{i'})\}$,
contradicting the fact that $f$ is an independent broadcast.
This gives
$$3f(V_f^+) = \sum_{v_i\in V_f^+}|C_f^i| \leq 3a,$$
and thus
$$\beta_b(C(3a;1,a)) = \sigma(f) = \sum_{v\in V_f^+}f(v) = f(V_f^+) \leq \frac{3a}{3}=a.$$

Consider now the mapping $f$ from $V(C(3a;1,a))$ to $\{0,1\}$ defined as follows,
depending on the parity of $a$.
\begin{enumerate}
\item If $a$ is odd, then $f(v_i)=1$ if and only if $i$ is even and $i < 2a$.

\item If $a$ is even, then $f(v_i)=1$ if and only if $(i \mod a+1)$ is odd and $i \leq 2a$.
\end{enumerate}

In both cases, $f$ is clearly an independent broadcast on $V(C(3a;1,a)$
with $\sigma(f)=a$
%
This implies $\beta_b(C(3a;1,a)) \ge a$ and thus,
thanks to Observation~\ref{obs:beta=alpha},
$\beta_b(C(3a;1,a)) = \alpha(C(3a;1,a)) = a$.
\end{proof}


\section{2-bounded optimal independent broadcasts}\label{sec:2-bounded}

Recall that we denote by $v_0,v_1,\dots,v_{n-1}$ the vertices
of $C(n;1,a)$ and that subscripts are always considered modulo $n$.
We will say that an edge $v_iv_j$ is a \emph{$k$-edge} for some integer $k$,
$1\le k\le\left\lfloor\frac{n}{2}\right\rfloor$, if $j=i+k$ or $i=j+k$.
Therefore, every edge in $C(n;1,a)$ is either a $1$-edge or an $a$-edge.

Let $f$ be an independent broadcast on $C(n;1,a)$.
For every $f$-broadcast vertex $v_i\in V_f^+$, we denote by $D_f(v_i)$
the set of vertices that are $f$-dominated by $v_i$, that is
$$\begin{array}{rcl}
D_f(v_i) & = & \bigcup_{k=0}^{f(v_{i})} \left\{v_j\ |\ i-(f(v_{i})-k)a-k \le j\le i-(f(v_{i})-k)a+k  \right\} \\ [2ex]
 & \cup & \bigcup_{k=0}^{f(v_{i})} \left\{v_j\ |\  i+(f(v_{i})-k)a-k \le j\le i+(f(v_{i})-k)a+k  \right\}.
\end{array}
$$


\begin{figure}
\begin{center}
\begin{tikzpicture}[scale=0.5]
   \POINTILLE{-1}{0}{0}{0}
   \LIGNE{0}{0}{26}{0}
   \POINTILLE{26}{0}{27}{0}

   \foreach \k in {1,7,13,19}
      \draw[thick] (\k,0) .. controls (\k+1,2) and (\k+5,2) .. (\k+6,0);

   \SOM{0}{0}{}{}
   \gSOM{1}{0}{}{}
   \SOM{2}{0}{}{}
   \SOM{3}{0}{}{}
   \SOM{4}{0}{}{}
   \SOM{5}{0}{}{}
   \gSOM{6}{0}{}{}
   \gSOM{7}{0}{}{}
   \gSOM{8}{0}{}{}
   \SOM{9}{0}{}{}
   \SOM{10}{0}{}{}
   \gSOM{11}{0}{}{}
   \gSOM{12}{0}{}{}
         \bSOM{13}{0}{2}{$v_i$}
   \SOM{26}{0}{}{}
   \gSOM{25}{0}{}{}
   \SOM{24}{0}{}{}
   \SOM{23}{0}{}{}
   \SOM{22}{0}{}{}
   \SOM{21}{0}{}{}
   \gSOM{20}{0}{}{}
   \gSOM{19}{0}{}{}
   \gSOM{18}{0}{}{}
   \SOM{17}{0}{}{}
   \SOM{16}{0}{}{}
   \gSOM{15}{0}{}{}
   \gSOM{14}{0}{}{}
\tobedone
\end{tikzpicture}
\caption{\label{fig:Df(vi)}The set $D_f(v_i)$ (black vertex and grey vertices), with $a=6$ and $f(v_i)=2$.}
\end{center}
\end{figure}
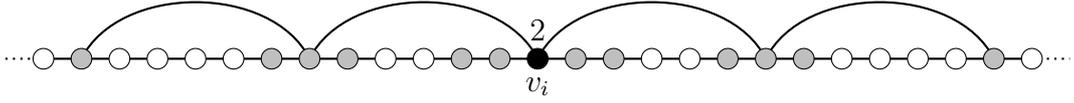

Figure~\ref{fig:Df(vi)} illustrates this definition on a circulant
graph of the form $C(n;1,6)$ (with $n\ge 26$) for a vertex $v_i$ with $f(v_i)=2$.

\medskip


Let us say that an independent broadcast $f$ is \emph{$\ell$-bounded}, for some integer $\ell\ge 1$, if $f(v)\le \ell$ for every vertex $v$. In particular, a $1$-bounded independent broadcast is the characteristic function of an independent set.
This implies that such a broadcast always exists for every graph, and thus that every graph admits an $\ell$-bounded independent broadcast for every $\ell\ge 1$.

Our goal in this section is to prove that almost all circulant graphs of the form
$C(n;1,a)$, $2\le a\le\frac{n}{2}$, admit a $2$-bounded optimal independent broadcast.
Considering the $\beta_b$-broadcasts used in the proofs of Theorems \ref{th:C(2a;1,a)} and~\ref{th:C(3a;1,a)}, we already have the following result.


\begin{proposition}\label{prop:2-bounded-2a-3a}
For every integer $a\ge 2$, the following holds.
\begin{enumerate}
\item $C(2a;1,a)$ admits a $2$-bounded $\beta_b$-broadcast if $a$ is odd or $a=2^p$ for some $p\ge 1$.
\item $C(3a;1,a)$ admits a $2$-bounded $\beta_b$-broadcast.
\end{enumerate}
\end{proposition}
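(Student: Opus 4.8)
The plan is to revisit, case by case, the optimal independent broadcasts that were already constructed in the proofs of Theorems~\ref{th:C(2a;1,a)} and~\ref{th:C(3a;1,a)}, and to observe that in every relevant case such a broadcast can be taken to be $1$-bounded (i.e.\ the characteristic function of an independent set), hence a fortiori $2$-bounded. No new construction is needed; only the boundary value $a=2$ in part~(2) falls outside the scope of Theorem~\ref{th:C(3a;1,a)} and must be handled separately.

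For part~(1) with $a$ odd, I would point to the broadcast $g$ built in the proof of Theorem~\ref{th:C(2a;1,a)}, defined by $g(v_i)=1$ iff $i$ is even: it was shown there to be an independent broadcast of cost $a=\beta_b(C(2a;1,a))$, and it is $\{0,1\}$-valued, hence $2$-bounded. For $a=2^p$, Theorem~\ref{th:C(2a;1,a)} together with Lemma~\ref{lem:alphaC(2a;1,a)} gives $\beta_b(C(2a;1,a))=\alpha(C(2a;1,a))=a-1$; then Observation~\ref{obs:beta=alpha} directly yields a $\beta_b$-broadcast $f$ with $f(v)=1$ for every $v\in V_f^+$ (explicitly, the characteristic function of the maximum independent set $S'$ from Lemma~\ref{lem:alphaC(2a;1,a)}, or of a single vertex when $a=2$), which is again $2$-bounded.

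For part~(2) with $a\ge 3$, I would invoke the mapping $f$ defined (according to the parity of $a$) at the end of the proof of Theorem~\ref{th:C(3a;1,a)}: it is $\{0,1\}$-valued and has cost $a=\beta_b(C(3a;1,a))$, hence is a $2$-bounded $\beta_b$-broadcast. It then remains to check $a=2$, i.e.\ $C(6;1,2)$, which is vertex-transitive of diameter $2$. The broadcast assigning value $2$ to one vertex and $0$ elsewhere is a valid independent broadcast of cost $2$; conversely, any independent broadcast with a single broadcast vertex has cost at most $\diam=2$, and one with at least two broadcast vertices has those vertices pairwise at distance exactly $2$ and all of value $1$, so its set of broadcast vertices is independent of size at most $\alpha(C(6;1,2))=2$, giving cost at most $2$. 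Hence $\beta_b(C(6;1,2))=2$ and the single-vertex broadcast is $2$-bounded and optimal.

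I do not expect any genuine obstacle here: the entire content was already established in the two theorems, so the argument is essentially a verification that their optimal broadcasts never use a value $\ge 3$, supplemented by the short direct computation for $C(6;1,2)$.
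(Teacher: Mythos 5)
Your proposal is correct and follows essentially the same route as the paper, which likewise obtains the proposition by simply observing that the optimal broadcasts constructed in the proofs of Theorems~\ref{th:C(2a;1,a)} and~\ref{th:C(3a;1,a)} are $\{0,1\}$-valued (or, for $a=2^p$, follow from $\beta_b=\alpha$ via Observation~\ref{obs:beta=alpha}). Your separate verification of $C(6;1,2)$ is a welcome addition, since Theorem~\ref{th:C(3a;1,a)} is stated only for $a\ge 3$ while the proposition claims part~(2) for all $a\ge 2$.
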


We will now prove that every circulant graph of the form
$C(n;1,a)$, $a\ge 3$ and $n\ge 2a+2$, admits a $2$-bounded $\beta_b$-broadcast.
We first consider the case when $2a+2\le n < 3a$.

\begin{lemma}\label{lem:broadcast-at-most-2-2a+r}
If $n$, $a$ and $r$ are three integers such that $n = 2a + r$,
$3\le a\le\left\lfloor\frac{n}{2}\right\rfloor$ and $2\leq r < a$,
then $C(n;1,a)$ admits a $2$-bounded $\beta_b$-broadcast.
\end{lemma}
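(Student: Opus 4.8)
The plan is to proceed in two directions, matching the structure already used for $C(2a;1,a)$ and $C(3a;1,a)$: first establish an upper bound $\beta_b(C(n;1,a))\le B$ for an explicit value $B=B(a,r)$, and then exhibit an explicit $2$-bounded independent broadcast of cost exactly $B$. Since $n=2a+r$ with $2\le r<a$, the graph is ``close'' to $C(2a;1,a)$, so I expect the answer to be governed by how the $r$ extra vertices perturb the counting argument of Theorem~\ref{th:C(2a;1,a)}.

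For the upper bound, I would imitate the domination-counting argument of equation~(\ref{eq:1}). Let $f$ be an optimal independent broadcast. If $|V_f^+|=1$ the cost is $\diam(C(n;1,a))$, which is small (roughly $r/2$ or $a/2$) and will be dominated by the explicit construction, so assume $|V_f^+|\ge 2$. Each broadcast vertex $v$ $f$-dominates exactly $4f(v)$ vertices (for $f(v)\ge 1$, using $a\ge 3$ so the two ``arms'' along the $a$-edges and the two along the $1$-edges stay disjoint — this needs a small check when $f(v)$ is large relative to $r$, i.e. one must verify the discs are genuinely balls of the claimed size, which is where the hypothesis $r\ge 2$ and the value of $\diam$ enter). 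Counting incidences: each broadcast vertex is dominated once, each non-broadcast vertex at most some bounded number of times, and the total number of vertices is $n=2a+r$. This yields $4f(V_f^+)\le c\cdot n + (\text{correction})$ for an appropriate constant, hence $\sigma(f)\le B$. I would expect $B=a$ when $r$ is, say, even, with a $\pm 1$ correction in parity-sensitive cases, exactly as in Theorem~\ref{th:C(2a;1,a)}; the statement of the lemma as given only asserts existence of a $2$-bounded $\beta_b$-broadcast, so I do not actually need the sharp value of $B$ — I only need that \emph{some} $2$-bounded broadcast attains the optimum.

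That observation simplifies the task: rather than pinning down $\beta_b$ exactly, it suffices to show that from any optimal broadcast $f$ one can produce a $2$-bounded optimal broadcast. The cleanest route is the ``local surgery'' argument familiar from the proof of Theorem~\ref{th:C(2a;1,a)}, case~2: take an optimal $f$ minimizing $\sum_{v\in V_f^+} f(v)^2$ (or minimizing $|V_f^+|$, or with the fewest vertices of value $\ge 3$), and show that if some $f(v_i)=k\ge 3$ then one can replace the single broadcast vertex $v_i$ by two or more broadcast vertices of smaller value inside the ``slab'' $D_f(v_i)$ without decreasing the cost and without violating independence — because the dominated region $D_f(v_i)$ has size $4k$ but is elongated along the $a$-direction, and in a graph on only $2a+r$ vertices a single value-$k$ ball is wasteful compared to several small balls packed into the same three (or two) residue-columns modulo $a$. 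Iterating drives all values down to $\le 2$.

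The main obstacle, I expect, is the surgery/exchange step: carefully describing $D_f(v_i)$ when $n=2a+r$ is not a multiple of $a$ (the three ``columns'' $i,i+a,i+2a$ wrap with an offset of $r$, so the picture of Figure~\ref{fig:Df(vi)} is distorted), and verifying that the replacement broadcast vertices do not collide with other broadcast vertices of $f$ outside the modified region. A safe way to handle this is to partition $V(C(n;1,a))$ into the $a$ residue classes modulo $a$ (each of size $2$ or $3$), note that along any such class the only available edges are $1$-edges connecting consecutive classes, and argue column-by-column; independence of $f$ forces broadcast vertices to be spread out across columns, and within that combinatorial skeleton the optimization of $\sigma(f)$ is essentially the same cycle/path packing problem solved for $C(2a;1,a)$ and $C(3a;1,a)$. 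A secondary nuisance is the small-$r$ boundary ($r=2$) and the parity of $r$ and of $a$, which will split the explicit construction into a handful of cases mirroring the three cases of Theorem~\ref{th:C(2a;1,a)}; each is a routine verification that the displayed function is a $2$-bounded independent broadcast of the right cost, so I would relegate those checks to a short case analysis rather than dwell on them.
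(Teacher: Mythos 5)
Your second, ``surgery'' direction is the right one and coincides in spirit with the paper's proof: the paper also starts from an arbitrary independent broadcast $f$ and replaces every broadcast vertex $v_i$ with $f(v_i)\ge 3$ by a packed set of broadcast vertices of value~$1$ lying close to $v_i$, then checks that the cost does not drop and that independence is preserved. (Your first direction --- proving a sharp upper bound $B$ and matching it --- is correctly abandoned, and your intermediate claim that every broadcast vertex $f$-dominates exactly $4f(v)$ vertices is in any case false in $C(2a+r;1,a)$ in general, so nothing is lost there.)

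The problem is that the entire mathematical content of the lemma lies in the surgery step, and you explicitly defer it (``the main obstacle, I expect, is the surgery/exchange step'') without resolving it. Three things have to be exhibited and verified, none of which is routine. First, the explicit replacement pattern: the paper needs three different patterns according to whether $2<f(v_i)\le r$, or $f(v_i)\ge r+1$ with $r$ even, or $f(v_i)\ge r+1$ with $r$ odd, and in the last case the pattern is periodic modulo $r+2$ along two arcs of the cycle. Second, the key metric fact (Claim~\ref{cl:distance-2a+r}) that every replacement vertex lies at distance at most $f(v_i)-2$ from $v_i$; this is exactly what makes the new value-$1$ vertices compatible with all the untouched broadcast vertices and with the replacements performed around other high-value vertices, and proving it requires computing shortest paths in $C(2a+r;1,a)$ that mix $a$-edges and $1$-edges (e.g.\ $d(v_i,v_j)=1+2(\lfloor\frac{d+2}{2}\rfloor-1)+\lfloor\frac{r}{2}\rfloor-1$ for the extremal vertex on the far arc). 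Third, the count that the number of replacement vertices is at least $f(v_i)$; in the hardest case this comes down to $n_3\ge r+1+\frac{dr(r+1)}{2(r+2)}\ge r+1+d=f(v_i)$ using $r(r+1)\ge 2(r+2)$ for $r\ge 3$, which is not a check one can wave at. Your proposed organizing device (residue classes modulo $a$) is also shakier than you suggest: within a class $\{v_i,v_{i+a},v_{i+2a}\}$ the elements \emph{are} joined by $a$-edges, so the classes are not edgeless columns, and the wrap-around offset $r$ is precisely what forces the case analysis. As written, the proposal is a plan for a proof, not a proof.
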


\begin{figure}
\begin{center}

\begin{tikzpicture}[scale=0.5]

   \POINTILLE{-1.5}{0}{-0.5}{0}
   \POINTILLE{-1.5}{-2}{-0.5}{-2}
   \POINTILLE{20.5}{0}{21.5}{0}
   \POINTILLE{20.5}{-2}{21.5}{-2}
   \LIGNE{-0.5}{0}{20.5}{0}
   \LIGNE{-0.5}{-2}{20.5}{-2}

   \foreach \k in {4} \draw[thick] (2*\k,0) to (2*\k+2*2,-2);
   \foreach \k in {4,9} \draw[thick] (2*\k,0) to (2*\k-2*3,-2);

   \foreach \k in {0,1,...,10} \SOM{2*\k}{0}{}{};
   \foreach \k in {0,1,...,10} \SOM{2*\k}{-2}{}{};

   \bSOM{2*4}{0}{}{}   \node[below] at (2*4,-0.2) {$v_i$}; \node[above] at (2*4,0.2) {$(3,4)$};

   \gSOM{2*3}{0}{1}{}
   \gSOM{2*5}{0}{1}{}
   \gSOM{2*1}{-2}{1}{}
   \gSOM{2*6}{-2}{1}{}

   \node[below] at (2*1,-2.2) {$v_{i+a}$};
   \node[below] at (2*6,-2.2) {$v_{i-a}$};

   \node[below] at (10,-4){(a) Item 1: $f(v_i)\in\{3,4\}$, $a=10$ and $r=5$ (so that $p=0$ and $n=25$)};
\end{tikzpicture}

\vskip 0.5cm
\begin{tikzpicture}[scale=0.5]

   \POINTILLE{-1.5}{0}{-0.5}{0}
   \POINTILLE{-1.5}{-2}{-0.5}{-2}
   \POINTILLE{22.5}{0}{23.5}{0}
   \POINTILLE{22.5}{-2}{23.5}{-2}
   \LIGNE{-0.5}{0}{22.5}{0}
   \LIGNE{-0.5}{-2}{22.5}{-2}

   \foreach \k in {5} \draw[thick] (2*\k,0) to (2*\k+2*3,-2);
   \foreach \k in {5} \draw[thick] (2*\k,0) to (2*\k-2*4,-2);

   \foreach \k in {0,1,...,11} \SOM{2*\k}{0}{}{};
   \foreach \k in {0,1,...,11} \SOM{2*\k}{-2}{}{};

   \bSOM{2*5}{0}{}{}   \node[below] at (2*5,-0.2) {$v_i$}; \node[above] at (2*5,0.2) {$(6)$};

   \gSOM{2*4}{0}{1}{}
   \gSOM{2*6}{0}{1}{}
   \gSOM{2*8}{0}{1}{}
   \gSOM{2*1}{-2}{1}{}
   \gSOM{2*3}{-2}{}{}  \node[above] at (2*3+0.5,-1.8) {1};
   \gSOM{2*8}{-2}{1}{}
   \gSOM{2*10}{-2}{1}{}

   \node[below] at (2*1,-2.2) {$v_{i+a}$};
   \node[below] at (2*8,-2.2) {$v_{i-a}$};

   \node[below] at (11,-4){(b) Item 1: $f(v_i)=6$, $a=8$ and $r=7$ (so that $p=2$ and $n=23$)};
\end{tikzpicture}

\vskip 0.5cm
\begin{tikzpicture}[scale=0.5]

   \POINTILLE{-1.5}{0}{-0.5}{0}
   \POINTILLE{-1.5}{-2}{-0.5}{-2}
   \POINTILLE{12.5}{0}{13.5}{0}
   \POINTILLE{12.5}{-2}{13.5}{-2}
   \LIGNE{-0.5}{0}{12.5}{0}
   \LIGNE{-0.5}{-2}{12.5}{-2}

   \foreach \k in {0,2,4} \draw[thick] (2*\k,0) to (2*\k+2*1,-2);
   \foreach \k in {2,4,6} \draw[thick] (2*\k,0) to (2*\k-2*1,-2);

   \foreach \k in {0,1,...,6} \SOM{2*\k}{0}{}{};
   \foreach \k in {0,1,...,6} \SOM{2*\k}{-2}{}{};

   \bSOM{2*2}{0}{}{}   \node[below] at (2*2,-0.2) {$v_i$}; \node[above] at (2*2,0.2) {$(5)$};

   \gSOM{2*1}{0}{1}{}
   \gSOM{2*3}{0}{1}{}
   \gSOM{2*5}{0}{1}{}
   \gSOM{2*1}{-2}{1}{}
   \gSOM{2*3}{-2}{1}{}
   \gSOM{2*5}{-2}{1}{}

   \node[below] at (2*1,-2.2) {$v_{i+a}$};
   \node[below] at (2*3,-2.2) {$v_{i-a}$};

   \node[below] at (6,-4){(c) Item 2: $f(v_i)=5$, $a=12$ and $r=2$ (so that $p=2$ and $n=26$)};
\end{tikzpicture}

\vskip 0.5cm
\begin{tikzpicture}[scale=0.5]

   \POINTILLE{-1.5}{0}{-0.5}{0}
   \POINTILLE{-1.5}{-2}{-0.5}{-2}
   \POINTILLE{18.5}{0}{19.5}{0}
   \POINTILLE{18.5}{-2}{19.5}{-2}
   \LIGNE{-0.5}{0}{18.5}{0}
   \LIGNE{-0.5}{-2}{18.5}{-2}

   \foreach \k in {0,3,6} \draw[thick] (2*\k,0) to (2*\k+2*1,-2);
   \foreach \k in {3,6,9} \draw[thick] (2*\k,0) to (2*\k-2*2,-2);

   \foreach \k in {0,1,...,9} \SOM{2*\k}{0}{}{};
   \foreach \k in {0,1,...,9} \SOM{2*\k}{-2}{}{};

   \bSOM{2*3}{0}{}{}   \node[below] at (2*3,-0.2) {$v_i$}; \node[above] at (2*3,0.2) {$(6)$};

   \gSOM{2*2}{0}{1}{}
   \gSOM{2*4}{0}{1}{}
   \gSOM{2*7}{0}{1}{}
   \gSOM{2*9}{0}{1}{}
   \gSOM{2*1}{-2}{1}{}
   \gSOM{2*4}{-2}{1}{}
   \gSOM{2*6}{-2}{1}{}

   \node[below] at (2*1,-2.2) {$v_{i+a}$};
   \node[below] at (2*4,-2.2) {$v_{i-a}$};

   \node[below] at (9,-4){(d) Item 3: $f(v_i)=6$, $a=12$ and $r=3$ (so that $d=2$ and $n=27$)};
\end{tikzpicture}

\vskip 0.5cm
\begin{tikzpicture}[scale=0.5]

   \POINTILLE{-1.5}{0}{-0.5}{0}
   \POINTILLE{-1.5}{-2}{-0.5}{-2}
   \POINTILLE{20.5}{0}{21.5}{0}
   \POINTILLE{20.5}{-2}{21.5}{-2}
   \LIGNE{-0.5}{0}{20.5}{0}
   \LIGNE{-0.5}{-2}{20.5}{-2}

   \foreach \k in {3,6,9} \draw[thick] (2*\k,0) to (2*\k-2*2,-2);
   \foreach \k in {3,6,9} \draw[thick] (2*\k,0) to (2*\k+2*1,-2);

   \foreach \k in {0,1,...,10} \SOM{2*\k}{0}{}{};
   \foreach \k in {0,1,...,10} \SOM{2*\k}{-2}{}{};

   \bSOM{2*3}{0}{}{}   \node[below] at (2*3,-0.2) {$v_i$}; \node[above] at (2*3,0.2) {$(5)$};

   \gSOM{2*2}{0}{1}{}
   \gSOM{2*4}{0}{1}{}
   \gSOM{2*7}{0}{1}{}
   \gSOM{2*1}{-2}{1}{}
   \gSOM{2*4}{-2}{1}{}
   \gSOM{2*6}{-2}{1}{}

   \node[below] at (2*1,-2.2) {$v_{i+a}$};
   \node[below] at (2*4,-2.2) {$v_{i-a}$};

   \node[below] at (10,-4){(e) Item 3: $f(v_i)=5$, $a=12$ and $r=3$ (so that $d=1$ and $n=27$)};
\end{tikzpicture}

\caption{\label{fig:mapping-g-2a+r}Construction of the mapping $g$ in the proof of Lemma~\ref{lem:broadcast-at-most-2-2a+r}.}
\end{center}
\end{figure}

\begin{proof}
Note that it is enough to prove that for every independent broadcast $f$ on $C(n;1,a)$,
there exists an independent broadcast $g$ on $C(n;1,a)$ such that $\sigma(g)\ge\sigma(f)$
and $g(v)\leq 2$ for every vertex $v \in V_{g}^+$.

Let $f$ be any independent broadcast on $C(n;1,a)$, and $g$ be the mapping
from $V(C(n;1,a))$ to $\{0,1,2\}$ defined as follows
(the construction of the mapping $g$ is illustrated in Figure~\ref{fig:mapping-g-2a+r},
where the value of $f(v_i)$ is indicated in brackets;
not all $a$-edges are drawn, but the missing $a$-edges are parallel to the
drawn ones;  note also that $v_{i-a}=v_{i+a+r}$, and thus $v_{i+a}$ and $v_{i-a}$
are separated by $r-1$ vertices).

\begin{enumerate}
\item If $v_i$ is an $f$-broadcast vertex such that $2 < f(v_{i})\leq r$, then we let
$$g(v_j) =
\left\{
   \begin{array}{ll}
   0 & \text{if } j=i, \\ [1ex]
   1 & \text{if $j=i-a$ }, \\ [1ex]

     & \text{or $i-1 \le j\le i+p+1$ and $j-i+1$ is even}, \\ [1ex]

    & \text{or $i+a \le j\le i+a+p$ and $j-i-a$ is even},
   \end{array}
\right.
$$
where $p=f(v_i)-3$ if $f(v_i)$ is odd, and $p=f(v_i)-4$ if $f(v_i)$ is even
(see Figure~\ref{fig:mapping-g-2a+r}(a,b)).

\item If $v_i$ is an $f$-broadcast vertex such that $\left \lceil \dfrac{a}{2}\right\rceil > f(v_{i}) \ge r+1$ and $r$ is even, then we let
$$g(v_j) =
\left\{
   \begin{array}{ll}
   0 & \text{if } j=i, \\ [1ex]
    1   & \text{if $i-1 \le j\le i+p+1$ and $j-i+1$ is even}, \\ [1ex]

    & \text{or $i+a \le j\le i+a+r+p$ and $j-i-a$ is even},
   \end{array}
\right.
$$
where $p=f(v_i)-3$ if $f(v_i)$ is odd, and $p=f(v_i)-4$ if $f(v_i)$ is even
(see Figure~\ref{fig:mapping-g-2a+r}(c)).

\item If $v_i$ is an $f$-broadcast vertex such that $f(v_{i}) \ge r+1$ and $r$ is odd, then we let
$$g(v_j) =
\left\{
   \begin{array}{ll}
   0 & \text{if } j=i, \\ [1ex]
   1 & \text{if } i-2\le j \le i + \left\lfloor\dfrac{d+2}{2}\right\rfloor r + (d \mod 2)\\ [1ex]
     & \hskip 0.5cm \text{and $(j-i+2)\mod (r+2)$ is odd}, \\ [1ex]
    & \text{or } i+a \le j \le i+a + \left\lceil\dfrac{d+2}{2}\right\rceil r + 1-(d \mod 2)\\ [1ex]
     & \hskip 0.5cm \text{and $(j-i-a+3)\mod (r+2)$ is odd},
   \end{array}
\right.
$$
where $d = f(v_i) - (r+1)$
(see Figure~\ref{fig:mapping-g-2a+r}(d,e)).

\item For every other vertex $v_k$, we let $g(v_{k})=f(v_{k})$.
\end{enumerate}

Note that, in particular, $g(v_i)=f(v_i)$ for every $f$-broadcast vertex $v_i$ with $f(v_i)\le 2$.
Moreover, all vertices set to~1 in the above items are distinct from $v_i$ and at distance
not greater than $f(v_i)$ from $v_i$, which means that their $f$-value was~0.

\medskip

We now prove that $g$ is an independent broadcast on $C(n;1,a)$.
For that, we first prove the following claim.

\begin{claim}\label{cl:distance-2a+r}
For every vertex $v_j$ whose $g$-value is set to~1 in Item 1, 2 or~3 above, we have
$d(v_i,v_j)\le f(v_i)-2$.
\end{claim}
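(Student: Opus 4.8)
The plan is to verify the inequality $d(v_i,v_j)\le f(v_i)-2$ separately for the three items, in each case by bounding the "horizontal'' position of the vertex $v_j$ that was assigned $g$-value~$1$ and then using the natural bound on distances in $C(n;1,a)$. The key observation is that the vertices $v_j$ produced in each item fall into two families: those lying on the "top side'' near $v_i$ (indices of the form $i-c$ up to $i+p+1$ or similar), and those lying "one $a$-edge away'' near $v_{i+a}$ (indices of the form $i+a$ up to $i+a+(\cdots)$). For the first family, a vertex $v_{i+t}$ with $|t|\le f(v_i)-2$ is clearly at distance $\le f(v_i)-2$ along the $1$-edges, so it suffices to check that the largest index offset $t$ occurring in each item is at most $f(v_i)-2$. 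For the second family, a vertex $v_{i+a+s}$ is at distance at most $1+\lceil s/\,?\rceil$... more precisely, one reaches $v_{i+a}$ from $v_i$ in one step (an $a$-edge) and then $v_{i+a+s}$ in $s$ further steps, giving $d(v_i,v_{i+a+s})\le s+1$; so it suffices to check that the largest offset $s$ from $v_{i+a}$ occurring in each item is at most $f(v_i)-3$.

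First I would handle Item~1, where $p=f(v_i)-3$ if $f(v_i)$ is odd and $p=f(v_i)-4$ if $f(v_i)$ is even. The top-side vertices have index offset at most $p+1\le f(v_i)-2$, and the special vertex $v_{i-a}=v_{i+a+r}$ is at distance $1$ from $v_i$ (a single $a$-edge), so $1\le f(v_i)-2$ since $f(v_i)>2$; the vertices near $v_{i+a}$ have offset at most $p\le f(v_i)-3$, giving distance $\le p+1\le f(v_i)-2$. For Item~2 (with $r$ even and $f(v_i)\ge r+1$), the top-side offset is again $p+1\le f(v_i)-2$, while the vertices near $v_{i+a}$ now have offset up to $r+p$; I must check $r+p+1\le f(v_i)-2$, i.e. $r+p\le f(v_i)-3$, which holds because $p=f(v_i)-3$ or $f(v_i)-4$ forces $r\le 0$ or $r\le 1$... so here the bound must instead be obtained by going "the other way around'': $v_{i+a+r}=v_{i-a}$ is reached in one step, and $v_{i+a+r+t}=v_{i-a+t}$ in $t+1$ steps, so one uses whichever of the two endpoints $v_{i+a}$ or $v_{i-a}$ is closer; reworking the offsets through $v_{i-a}$ gives the required bound. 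For Item~3 (with $r$ odd), the quantity $d=f(v_i)-(r+1)$ controls the ranges; the top-side offset is at most $\lfloor(d+2)/2\rfloor r+(d\bmod 2)$ shifted by $2$, and the bottom-side offset from $v_{i+a}$ is at most $\lceil(d+2)/2\rceil r+1-(d\bmod 2)$, and again I would route the far vertices through whichever of $v_{i+a}$, $v_{i-a}$ is nearer, splitting on the parity of $d$ as the construction does.

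The main obstacle I expect is bookkeeping the two "directions'' around the cycle consistently: because $v_{i-a}=v_{i+a+r}$, a vertex written with a large positive offset from $v_{i+a}$ may actually be very close to $v_{i-a}$, and the honest distance is the minimum over the two routes (plus possibly a route using two $a$-edges, which is never shorter here for the ranges involved since $f(v_i)<\lceil a/2\rceil$ in Items~2 and~3). So the real content is to show that for each $v_j$ in the three lists, at least one of the expressions $s+1$ (through $v_{i+a}$) or $s'+1$ (through $v_{i-a}$) or $t$ (purely horizontal) is $\le f(v_i)-2$; this is a finite case analysis on the parity of $f(v_i)$ (Items~1,~2) or of $d$ (Item~3), and each case reduces to an elementary inequality between $p$, $r$, $d$ and $f(v_i)$ that follows directly from their defining relations. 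Once Claim~A is established, it immediately gives that any two vertices set to~$1$ in the construction are pairwise non-dominating (their pairwise distances inside $D_f(v_i)$ exceed... are $\ge 2>1$) and, together with the observation that they all lie in $D_f(v_i)$, that $g$ does not violate independence against the vertices outside, which is how the claim feeds into the proof that $g$ is an independent broadcast.
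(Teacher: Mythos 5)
Your treatment of Items 1 and 2 is sound and matches the paper's: for Item 1 the horizontal offset is at most $p+1\le f(v_i)-2$ and the bottom vertices are one $a$-edge plus at most $p$ steps away, and for Item 2 you correctly notice that the naive route through $v_{i+a}$ fails and that one must also route through $v_{i-a}=v_{i+a+r}$, after which the worst case is about $1+\lfloor r/2\rfloor\le r-1\le f(v_i)-2$.

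However, there is a genuine gap in your handling of Item 3, caused by your explicit dismissal of routes using more than one $a$-edge (``a route using two $a$-edges, which is never shorter here''). In $C(2a+r;1,a)$ one has $2a\equiv -r\pmod n$, so each \emph{pair} of $a$-edges realizes a horizontal shift by $r$ at cost $2$; chaining $k$ such pairs reaches $v_{i\pm kr}$ in $2k$ steps. This is exactly what the paper uses: the top-side vertices of Item 3 extend to offset $\left\lfloor\frac{d+2}{2}\right\rfloor r+(d\bmod 2)$, and the distance to the farthest one is computed as $2\left\lfloor\frac{d+2}{2}\right\rfloor+\left\lfloor\frac{r}{2}\right\rfloor-1$, i.e.\ via $2\left\lfloor\frac{d+2}{2}\right\rfloor$ $a$-edges followed by $1$-edges. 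Your three admissible routes (purely horizontal, through $v_{i+a}$, through $v_{i-a}$) are not enough: take $a=12$, $r=3$, $f(v_i)=6$ (so $d=2$, $n=27$, $f(v_i)-2=4$) and the top-side vertex $v_{i+6}$; your routes give $\min\{6,\,1+6,\,1+9\}=6>4$, whereas the true distance is $4$, attained only by using four $a$-edges ($-4a\equiv 6\pmod{27}$). So the finite case analysis you propose would terminate with distances exceeding $f(v_i)-2$ and the claim would not be established for Item 3. (Also, the restriction $f(v_i)<\lceil a/2\rceil$ you invoke belongs only to Item 2, not Item 3.) The fix is precisely the missing idea: include the routes built from an even number of $a$-edges, and bound the residual $1$-edge correction by $\left\lfloor\frac{r}{2}\right\rfloor-1$ as in the paper.
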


\begin{proof}
In Item~1, every vertex whose $g$-value is set to~1 is at distance at
most $p+1\le f(v_i)-2$ from $v_i$.
In Item~2, every vertex whose $g$-value is set to~1 is at distance at
most $\max\{p+1,r-1\}\le f(v_i)-2$ from $v_i$.

Consider now a 
vertex $v_j$ whose $g$-value is set to~1 in Item~3 and whose distance to $v_i$ is maximal
(see Figure~\ref{fig:mapping-g-2a+r}(d,e)), and
suppose first that $i-2\le j \le i + \left\lfloor\frac{d+2}{2}\right\rfloor r + (d \mod 2)$.
Since $r\ge 3$ and $r$ is odd, $v_j$ is at distance
$d(v_i,v_j) = 2\left\lfloor\frac{d+2}{2}\right\rfloor + \left\lfloor\frac{r}{2}\right\rfloor - 1$
from $v_i$ (going to $v_{i+\left\lfloor\frac{d+2}{2}\right\rfloor r}$
using $(2\left\lfloor\frac{d+2}{2}\right\rfloor)$ $a$-edges, and then
back to $v_j$ using $(\left\lfloor\frac{r}{2}\right\rfloor - 1)$ $1$-edges).
Since $d=f(v_i)-r-1$ and $r\ge 3$, this gives
$$d(v_i,v_j) \le f(v_i) - r + 1 + \frac{r-1}{2} - 1 = f(v_i) - \frac{r+1}{2} \le f(v_i) - 2.$$
Suppose finally that $i+a \le j \le i+a + \left\lceil\frac{d+2}{2}\right\rceil r + 1-(d \mod 2)$.
In that case, $v_j$ is at distance
$d(v_i,v_j) = 1 + 2(\left\lfloor\frac{d+2}{2}\right\rfloor - 1) + \left\lfloor\frac{r}{2}\right\rfloor - 1$
from $v_i$ (going to $v_{i+a + \left\lceil\frac{d+2}{2}\right\rceil r}$
using $(1 + 2(\left\lfloor\frac{d+2}{2}\right\rfloor-1))$ $a$-edges, and then
back to $v_j$ using $(\left\lfloor\frac{r}{2}\right\rfloor - 1)$ $1$-edges).
As before, since $d=f(v_i)-r-1$ and $r\ge 3$, this gives
\begin{align*}
d(v_i,v_j)
  & \le 1 + 2\left(\left\lfloor\frac{d+2}{2}\right\rfloor - 1\right) + \frac{r-1}{2} - 1
 = 2\left(\left\lfloor\frac{f(v_i) - r + 1}{2}\right\rfloor - 1\right) + \frac{r-1}{2}
  \\[2ex]
  & \le 2\left(\frac{f(v_i) - r + 1}{2} \right) + \frac{r-1}{2} - 2
  = f(v_i) - \frac{r-1}{2} - 2
  \\[2ex]
  &   \le f(v_i) - 2,
\end{align*}
which concludes the proof of the claim.
\end{proof}

Thanks to this claim, and since $f$ was an independent broadcast on $C(n;1,a)$,
no $g$-broadcast vertex $v_i$ with $g(v_i)=f(v_i)\in\{1,2\}$ $g$-dominates
a vertex whose $g$-value has been set to~1.
Therefore, in order to prove that $g$ is indeed an independent broadcast on $C(n;1,a)$,
it remains to prove that the set of vertices whose $g$-value has been set to~1 is an
independent set.
Moreover, thanks to Claim~\ref{cl:distance-2a+r}, Items 1, 2 and~3 can be considered
separately.
This is readily the case for vertices whose $g$-value has been set to~1 in Item 1 and~2.
In Item~3, thanks to the parity of their subscript, no two such vertices are linked
by a $1$-edge.
Moreover, any two such vertices cannot be linked by an $a$-edge since
$(j-i+2)-(j-i-a+3)=a-1$.

\medskip

In order to finish the proof, we only need to show that we have $\sigma(g)\ge\sigma(f)$.
Indeed, in Item~1, the number of vertices set to~1 is
$n_1 = 1 + \frac{p+4}{2} + \frac{p+2}{2} = p + 4$,
which gives $n_1 = f(v_i)$ if $f(v_i)$ is even,
or $n_1 = f(v_i) + 1 > f(v_i)$ if $f(v_i)$ is odd.

In Item~2, the number of vertices set to~1 is
$n_2 = \frac{p+4}{2} + \frac{p+2}{2} + \frac{r}{2} = \frac{2p+6+r}{2}$,
which gives $n_2 = \frac{2f(v_i)-2+r}{2}\ge f(v_i)$ (recall that $r\ge 2$)
if $f(v_i)$ is even,
or $n_2 = \frac{2f(v_i)+r}{2} > f(v_i)$ if $f(v_i)$ is odd.

Finally consider Item~3.
Observe that, since $r+2$ is odd, in every sequence of $r+2$ consecutive
vertices lying between $v_{i-2}$ and $v_{i + \left\lfloor\frac{d+2}{2}\right\rfloor r + (d \mod 2)}$,
or between $v_{i+a}$ and $v_{i+a + \left\lceil\frac{d+2}{2}\right\rceil r + 1-(d \mod 2)}$,
exactly $\frac{r+1}{2}$ vertices are set to~$1$.
Therefore, the number of vertices set to~1 in Item~3 is either
$$n_3 = \left\lceil \frac{r+1}{2(r+2)} \left(\frac{d+2}{2}r+3 \right) \right\rceil
+  \left\lceil \frac{r+1}{2(r+2)} \left(\frac{d+2}{2}r+2 \right) \right\rceil,$$
if $d$ is even, or
$$n_3 = \left\lceil \frac{r+1}{2(r+2)} \left(\frac{d+1}{2}r+4 \right) \right\rceil
+ \left\lceil  \frac{r+1}{2(r+2)} \left(\frac{d+3}{2}r+1 \right) \right\rceil,$$
if $d$ is odd.
In both cases, we get
$$n_3 \ge \frac{r+1}{2(r+2)} \big((d+2)r+5 \big)
= r+1 + \dfrac{(dr + 1)(r+1)}{2(r+2)} \geq r+1 + \dfrac{dr(r+1)}{2(r+2)}.$$
Since $r\ge 3$, we have $r(r+1) \ge 2(r+2)$, and thus
$$n_3 \ge r + 1 + d = f(v_i).$$

We thus have $\sigma(g)\ge\sigma(f)$, as required.
This completes the proof.
\end{proof}


We now consider the case $n > 3a$.

\begin{lemma}\label{lem:broadcast-at-most-2-general}
If $n$ and $a$ are two integers such that $3\le a\le\left\lfloor\frac{n}{2}\right\rfloor$ and $n>3a$,
then $C(n;1,a)$ admits a $2$-bounded $\beta_b$-broadcast.
\end{lemma}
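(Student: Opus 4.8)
The strategy parallels the proof of Lemma~\ref{lem:broadcast-at-most-2-2a+r}: I want to show that any independent broadcast $f$ on $C(n;1,a)$ with $n>3a$ can be replaced by a $2$-bounded independent broadcast $g$ with $\sigma(g)\ge\sigma(f)$. It suffices to treat each $f$-broadcast vertex $v_i$ with $f(v_i)\ge 3$ separately: within the ball $D_f(v_i)$ I will redistribute the single value $f(v_i)$ among a collection of vertices each receiving $g$-value $1$, keeping $g=f$ on all other vertices (and on broadcast vertices of $f$-value at most $2$). The key geometric fact for $n>3a$ is that $v_i$, $v_{i+a}$ and $v_{i-a}=v_{i-2a}$ are now ``far apart'' along the cycle (the gaps between consecutive $a$-translates of $v_i$ have length at least $a$, and there is a third gap of length $n-2a>a$), so each of the three local stretches of the Hamiltonian cycle $C_n$ through $v_i$, $v_{i+a}$, $v_{i-a}$ can be used independently to host value-$1$ vertices without interference across stretches.

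The concrete construction: go to $v_i$ and, using $a$-edges to reach the translates and $1$-edges to fan out, place $1$'s on an arithmetic progression of vertices (step $2$ along $1$-edges) inside each of the three stretches, arranged so that (i) every vertex set to $1$ lies within distance $f(v_i)$ of $v_i$, in fact within distance $f(v_i)-1$ or $f(v_i)-2$, exactly as in Claim~\ref{cl:distance-2a+r}, so that no $g$-broadcast vertex of value $1$ or $2$ inherited from $f$ can $g$-dominate one of the new vertices; (ii) the set of new value-$1$ vertices is itself independent --- no two are joined by a $1$-edge (guaranteed by the parity/step-$2$ spacing) nor by an $a$-edge (guaranteed because the three stretches are centred at points pairwise at cycle-distance $>a$, and within a single stretch the $a$-edge would connect to another stretch, not inside it); and (iii) the total number of new $1$'s is at least $f(v_i)$. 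Since reaching $v_{i\pm a}$ costs one $a$-edge and then each additional pair of $1$-edges buys two more hosts at the cost of distance $2$, a stretch of ``radius'' $\rho$ around a translate hosts roughly $\rho+1$ new vertices while consuming distance $\rho+1$ from the budget $f(v_i)$; summing the three stretches gives on the order of $f(v_i)$ hosts, and the $n>3a$ slack (in particular the long third stretch of length $n-2a$) is exactly what makes room for enough of them. I would phrase this as a clean case analysis on $f(v_i)$ versus $a$ (for $3\le f(v_i)\le a$ one or two stretches suffice; for larger $f(v_i)$ one uses all three and possibly wraps further), mirroring Items~1--3 of the previous lemma.

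The verification then splits, as before, into three checks: the distance bound (a Claim analogous to Claim~\ref{cl:distance-2a+r}, proved by exhibiting the explicit shortest path that alternates a block of $a$-edges with a block of $1$-edges and estimating its length against $f(v_i)$); independence of the new value-$1$ set (a parity argument for $1$-edges, and a ``the three windows are too far apart'' argument for $a$-edges, using $n>3a$ so that $a$-translation maps each window outside itself); and the counting inequality $\#\{\text{new }1\text{'s}\}\ge f(v_i)$, which is a routine sum of ceilings. Because the modification is local to $D_f(v_i)$ and because distinct $f$-broadcast vertices $v_i$ have disjoint balls $D_f$ (they are at pairwise distance $>\max f$-values, as $f$ is independent), the modifications do not collide and can be performed simultaneously.

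\textbf{Main obstacle.} The delicate point is the simultaneous fulfilment of (i) and (iii): pushing enough value-$1$ vertices to reach the count $f(v_i)$ while keeping every one of them within the shrunken radius $f(v_i)-2$ needed for the independence check against inherited small-value broadcast vertices. This forces the hosts close to the three translate-centres, so I must verify that the three local windows, even after being enlarged to accommodate a large $f(v_i)$, remain pairwise at distance exceeding $\max f$-value in $C_n$ and internally $1$-edge-independent --- which is precisely where the hypothesis $n>3a$ (rather than merely $n\ge 2a+2$) is consumed, and where the bookkeeping, though elementary, must be done carefully enough to cover the boundary cases $f(v_i)$ close to $a$ or close to $n/2$.
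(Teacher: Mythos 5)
Your plan follows essentially the same route as the paper's proof: replace each $f$-broadcast vertex $v_i$ with $f(v_i)\ge 3$ by a local constellation of value-$1$ vertices placed with step~$2$ in windows around $v_i$ and its $a$-translates, then verify a distance claim ($d(v_i,v_j)\le f(v_i)-2$ for each new vertex $v_j$), independence of the new set, and the count $\ge f(v_i)$. Two points in your justification are, however, concretely wrong or missing. First, you assert that the new value-$1$ set contains no $a$-edge ``because the three stretches are centred at points pairwise at cycle-distance $>a$, and within a single stretch the $a$-edge would connect to another stretch, not inside it.'' But $v_i$ and $v_{i+a}$ are at cycle-distance exactly $a$ along $C_n$, so an $a$-edge from a vertex in the window around $v_i$ lands squarely inside the window around $v_{i+a}$; ruling out such edges requires choosing the parity offsets of the two windows coherently (in the paper, $j-i+1$ even around $v_i$ versus $j-i\mp a$ even around $v_{i\mp a}$, which makes the $a$-shift always miss), not a distance argument. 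Your parity remark covers only the $1$-edges, so as stated the independence check for $a$-edges does not go through.

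Second, the case $f(v_i)\ge a+1$ is only gestured at (``possibly wraps further''), yet it cannot be handled by enlarging the three stretches: once the windows reach length about $a$ they collide with the neighbouring translates, and one needs a genuinely different pattern spanning $d+2$ or $d+3$ consecutive $a$-translates, with the selection rule depending on the parity of $a$ (offsets even when $a$ is odd, offsets taken modulo $a+1$ when $a$ is even) precisely so that neither $1$-edges nor $a$-edges join two selected vertices; the associated counting ($n_2,n_3\ge f(v_i)$) is also not a consequence of your ``three stretches of radius $\rho$'' heuristic. Finally, your appeal to ``distinct $f$-broadcast vertices have disjoint balls $D_f$'' is false in general (independence only gives $d(v_i,v_j)>\max\{f(v_i),f(v_j)\}$, not $>f(v_i)+f(v_j)$); the non-interference between constellations of different centres must instead be derived from the distance claim $d(v_i,v_j)\le f(v_i)-2$ combined with the independence of $f$, as the paper does.
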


\begin{figure}
\begin{center}
\begin{tikzpicture}[scale=0.5]
   \node[above] at (0,0.2){$f$};
   \node[below] at (0,-0.2){$g$};

   \POINTILLE{0}{0}{1}{0}   \LIGNE{1}{0}{24}{0}   \POINTILLE{24}{0}{25}{0}

   \foreach \k in {2,9,16} \draw[thick] (\k,0) .. controls (\k+1,2) and (\k+6,2) .. (\k+7,0);
   \foreach \k in {1,2,...,24} \SOM{\k}{0}{}{};

   \bSOM{9}{0}{}{}   \node[above] at (9,0.5) {5,6};

   \gSOM{2}{0}{}{1}
   \gSOM{4}{0}{}{1}
   \gSOM{8}{0}{}{1}
   \gSOM{10}{0}{}{1}
   \gSOM{12}{0}{}{1}
   \gSOM{16}{0}{}{1}
   \gSOM{18}{0}{}{1}

   \node[below] at (12.5,-2){(a) Item 1: $f(v_i)\in\{5,6\}$ and $a=7$ (so that $p=2$)};
\end{tikzpicture}

\vskip 0.5cm
\begin{tikzpicture}[scale=0.5]
   \node[above] at (0,0.2){$f$};
   \node[below] at (0,-0.2){$g$};

   \POINTILLE{0}{0}{1}{0}   \LIGNE{1}{0}{18}{0}   \POINTILLE{18}{0}{19}{0}

   \foreach \k in {2,7,12} \draw[thick] (\k,0) .. controls (\k+1,1.8) and (\k+4,1.8) .. (\k+5,0);
   \foreach \k in {1,2,...,18} \SOM{\k}{0}{}{};

   \gSOM{2}{0}{}{1}
   \gSOM{4}{0}{}{1}
   \gSOM{6}{0}{}{1}   \bSOM{7}{0}{7}{}
   \gSOM{8}{0}{}{1}
   \gSOM{10}{0}{}{1}
   \gSOM{12}{0}{}{1}
   \gSOM{14}{0}{}{1}
   \gSOM{16}{0}{}{1}

   \node[below] at (9,-2){(b) Item 2: $f(v_i)=7$ and $a=5$ (so that $d=1$)};
\end{tikzpicture}

\vskip 0.5cm
\begin{tikzpicture}[scale=0.5]
   \node[above] at (0,0.2){$f$};
   \node[below] at (0,-0.2){$g$};

   \POINTILLE{0}{0}{1}{0}   \LIGNE{1}{0}{23}{0}   \POINTILLE{23}{0}{24}{0}

   \foreach \k in {2,6,10,14,18} \draw[thick] (\k,0) .. controls (\k+1,1.5) and (\k+3,1.5) .. (\k+4,0);
   \foreach \k in {1,2,...,23} \SOM{\k}{0}{}{};

   \gSOM{2}{0}{}{1}
   \gSOM{4}{0}{}{1}       \bSOM{6}{0}{7}{}
   \gSOM{7}{0}{}{1}
   \gSOM{9}{0}{}{1}
   \gSOM{12}{0}{}{1}
   \gSOM{14}{0}{}{1}
   \gSOM{17}{0}{}{1}
   \gSOM{19}{0}{}{1}
   \gSOM{22}{0}{}{1}

   \node[below] at (12,-2){(c) Item 3: $f(v_i)=7$ and $a=4$ (so that $d=2$)};
\end{tikzpicture}

\caption{\label{fig:mapping-g}Construction of the mapping $g$ in the proof of Lemma~\ref{lem:broadcast-at-most-2-general}.}
\end{center}
\end{figure}

\begin{proof}
Again, it is enough to prove that for every independent broadcast $f$ on $C(n;1,a)$,
there exists an independent broadcast $g$ on $C(n;1,a)$ such that $\sigma(g)\ge\sigma(f)$
and $g(v)\leq 2$ for every vertex $v \in V_{g}^+$.

Let $f$ be any independent broadcast on $C(n;1,a)$, and $g$ be the mapping
from $V(C(n;1,a))$ to $\{0,1,2\}$ defined as follows
(the construction of the mapping $g$ is illustrated in Figure~\ref{fig:mapping-g},
not all $a$-edges being drawn).

\begin{enumerate}

\item If $v_i$ is an $f$-broadcast vertex such that $2 < f(v_{i})\leq a$, then we let
$$g(v_j) =
\left\{
   \begin{array}{ll}
   0 & \text{if } j=i, \\ [1ex]
   1 & \text{if $i-1 \le j\le i+p+1$ and $j-i+1$ is even,} \\ [1ex]
     & \text{or $i-a\le j\le i-a+p$ and $j-i+a$ is even,} \\ [1ex]
     &\text{or $i+a\le j\le i+a+p$ and $j-i-a$ is even},
   \end{array}
\right.
$$
where $p=f(v_i)-3$ if $f(v_i)$ is odd, and $p=f(v_i)-4$ if $f(v_i)$ is even
(see Figure~\ref{fig:mapping-g}(a)).

\item If $v_i$ is an $f$-broadcast vertex such that $f(v_{i}) \ge a+1$ and $a$ is odd, then we let
$$g(v_j) =
\left\{
   \begin{array}{ll}
   0 & \text{if } j=i, \\ [1ex]
   1 & \text{if } i-a \le j \le i+(1+d)a\ \text{and $j-i+a$ is even},
   \end{array}
\right.
$$
where $d = f(v_i) - (a+1)$
(see Figure~\ref{fig:mapping-g}(b)).

\item If $v_i$ is an $f$-broadcast vertex such that $f(v_{i}) \ge a+1$ and $a$ is even, then we let
$$g(v_j) =
\left\{
   \begin{array}{ll}
   0 & \text{if } j=i \\ [1ex]
   1 & \text{if } i-a-1 \le j \le i+(2+d)a\ \text{and $(j-i+a+1) \mod (a+1)$ is odd},
   \end{array}
\right.
$$
where $d = f(v_i) - (a+1)$
(see Figure~\ref{fig:mapping-g}(c)).

\item For every other vertex $v_k$, we let $g(v_{k})=f(v_{k})$.
\end{enumerate}

Note that, as in the proof of the previous lemma,
$g(v_i)=f(v_i)$ for every $f$-broadcast vertex $v_i$ with $f(v_i)\le 2$.
Moreover, all vertices set to~1 in the above items are also
distinct from $v_i$ and at distance
not greater than $f(v_i)$ from $v_i$, which means that their $f$-value was~0.

\medskip

We now prove that $g$ is an independent broadcast on $C(n;1,a)$.
For that, we first prove the following claim.

\begin{claim}\label{cl:distance-general}
For every vertex $v_j$ whose $g$-value is set to~1 in Item 1, 2 or~3 above, we have
$d(v_i,v_j)\le f(v_i)-2$.
\end{claim}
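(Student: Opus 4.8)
The plan is to verify the claimed distance bound $d(v_i,v_j)\le f(v_i)-2$ separately for each of the three items defining $g$, exactly as was done in Claim~\ref{cl:distance-2a+r} for the previous lemma, but now using the geometry of the case $n>3a$, where the three ``copies'' (the segment around $v_i$, the segment around $v_{i-a}$, and the segment around $v_{i+a}$) are genuinely disjoint and far apart, so no wrap-around shortcuts interfere. The key point throughout is that a vertex reached by travelling $k$ steps along $a$-edges and then $m$ steps along $1$-edges lies at distance at most $k+m$ from $v_i$, and one only has to check that for the \emph{farthest} vertex set to $1$ in each item this quantity is at most $f(v_i)-2$.

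First I would handle Item~1. Here the vertices set to $1$ lie in three short blocks: those with $i-1\le j\le i+p+1$ are reached using only $1$-edges, so their distance to $v_i$ is at most $p+1$; those with $i-a\le j\le i-a+p$ (resp.\ $i+a\le j\le i+a+p$) are reached by one $a$-edge and at most $p$ $1$-edges, hence at distance at most $p+1$. Since $p=f(v_i)-3$ or $p=f(v_i)-4$, in all cases $p+1\le f(v_i)-2$, which settles Item~1. For Item~2 ($a$ odd, $f(v_i)\ge a+1$, $d=f(v_i)-(a+1)$), the vertices set to $1$ range over $i-a\le j\le i+(1+d)a$ with $j-i+a$ even. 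The farthest one from $v_i$ is near $j=i+(1+d)a$; to reach a vertex $v_{i+sa+t}$ with $s$ around $1+d$ and $|t|$ small one uses $s$ $a$-edges. Because the chosen subscripts have $j-i+a$ even and $a$ is odd, the actual value $j-i$ alternates parity in a controlled way, so $v_j$ is within one $1$-edge of $v_{i+sa}$ for the appropriate $s\le 1+d$; counting gives $d(v_i,v_j)\le (1+d)+1=d+2=f(v_i)-a+1\le f(v_i)-2$ since $a\ge 3$. For Item~3 ($a$ even, $f(v_i)\ge a+1$), the vertices set to $1$ run over $i-a-1\le j\le i+(2+d)a$ with $(j-i+a+1)\bmod(a+1)$ odd; here the relevant ``reach'' uses up to $2+d$ $a$-edges plus a bounded number of $1$-edges, and one checks the worst case gives $d(v_i,v_j)\le (2+d)+1$ or so, which again reduces to $f(v_i)-a+\text{const}\le f(v_i)-2$ using $a\ge 4$ (note $a$ even and $a\ge 3$ forces $a\ge 4$). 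In each subcase I would exhibit an explicit short path (so many $a$-edges followed by so many $1$-edges) and then plug in $d=f(v_i)-(a+1)$ to collapse the estimate.

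After the claim, the remainder of the proof of the lemma proceeds just as in Lemma~\ref{lem:broadcast-at-most-2-2a+r}: the claim guarantees that no $g$-broadcast vertex with value in $\{1,2\}$ dominates any vertex set to $1$ (since all such vertices had $f$-value $0$ and $f$ was independent), so it remains only to check that the set of vertices set to $1$ is itself independent — which follows from the parity condition on subscripts ruling out $1$-edges, together with the observation that two such vertices in different blocks are too far apart (using $n>3a$) and two in the same block cannot be joined by an $a$-edge because the block has length less than $a$ — and finally that $\sigma(g)\ge\sigma(f)$, obtained by counting the vertices set to $1$ in each item and comparing with $f(v_i)$.

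The main obstacle I anticipate is purely bookkeeping: getting the parity-of-subscript conditions in Items~2 and~3 to line up correctly with the ``$a$ odd'' versus ``$a$ even'' distinction, so that the farthest flagged vertex really is within one or two $1$-edges of a vertex of the form $v_{i\pm sa}$, and confirming that the floor/ceiling arithmetic in the vertex-count $n_2,n_3$ never drops below $f(v_i)$. There is no conceptual difficulty — the case $n>3a$ is actually the easiest geometrically because wrap-around is impossible at these distances — but the index juggling mirrors the somewhat intricate computation already carried out in Claim~\ref{cl:distance-2a+r}, and care is needed with the two parity sub-cases of each item.
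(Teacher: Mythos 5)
Your treatment of Item~1 is correct and matches the paper's. But in Items~2 and~3 you misidentify the extremal vertex, and this is not mere bookkeeping. In Item~2 the vertices set to~$1$ are \emph{all} vertices $v_j$ with $i-a\le j\le i+(1+d)a$ and $j-i$ odd, so they fill every block between consecutive multiples of $a$, not just the neighbourhoods of the vertices $v_{i+sa}$. A flagged vertex sitting in the middle of a block, namely $v_j$ with $j=i+da+\frac{a+1}{2}$, is at distance $d+\frac{a+1}{2}$ from $v_i$ (reach $v_{i+da}$ by $d$ $a$-edges and then use $\frac{a+1}{2}$ $1$-edges, or equivalently use $d+1$ $a$-edges and come back by $\frac{a-1}{2}$ $1$-edges); for $a\ge 5$ this exceeds your claimed bound $d+2$, so the assertion that every flagged vertex is within one $1$-edge of some $v_{i+sa}$ with $s\le 1+d$ is false, and your estimate covers only the flagged vertices near the multiples of $a$, not the actual worst case. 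The same issue arises in Item~3, where the extremal vertex is $v_j$ with $j=i+(d+1)a+\frac{a}{2}$, at distance $d+1+\frac{a}{2}$, not $d+3$.

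The claim itself survives, because the true maxima still satisfy the required inequality: substituting $d=f(v_i)-(a+1)$ gives $d+\frac{a+1}{2}=f(v_i)-\frac{a+1}{2}\le f(v_i)-2$ for odd $a\ge 3$, and $d+1+\frac{a}{2}=f(v_i)-\frac{a}{2}\le f(v_i)-2$ for even $a\ge 4$ --- which is exactly the computation the paper carries out. So the gap is repairable, but as written your argument omits the mid-block vertices, which are precisely the farthest ones, and the intermediate bounds $d+2$ and $d+3$ you propose are not valid upper bounds on the distances involved.
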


\begin{proof}
In Item~1, every vertex whose $g$-value is set to~1 is at distance at most $p+1\le f(v_i)-2$ from $v_i$.

Among the vertices whose $g$-value might be set to~1 in Item~2, the
vertex whose distance to $v_i$ is maximal is, since $a$ is odd,
the vertex $v_j$ with $j=da+\frac{a+1}{2}$, which gives
$$d(v_i,v_j)=d+\frac{a+1}{2}=  f(v_i)-\frac{a+1}{2} \leq f(v_i)-2$$
(recall that, in that case, we have  $a\geq 3$).

Similarly, among the vertices whose $g$-value might be set to~1 in Item~3,
the vertex whose distance to $v_i$ is maximal is, since $a$ is even,
the vertex $v_j$ with $j=(d+1)a+\frac{a}{2}$, which gives
$$d(v_i,v_j)=d+1+\frac{a}{2}=f(v_i)-\frac{a}{2}  \leq f(v_i)-2$$
(recall that, in that case, we have $a\ge 4$).

This concludes the proof of the claim.
\end{proof}

Thanks to this claim, and since $f$ was an independent broadcast on $C(n;1,a)$,
no $g$-broadcast vertex $v_i$ with $g(v_i)=f(v_i)\in\{1,2\}$ $g$-dominates
a vertex whose $g$-value has been set to~1.
Therefore, in order to prove that $g$ is indeed an independent broadcast on $C(n;1,a)$,
it remains to prove that the set of vertices whose $g$-value has been set to~1 is an
independent set.
Moreover, thanks to Claim~\ref{cl:distance-general}, Items 1, 2 and~3 can be considered
separately.
This is readily the case for vertices whose $g$-value has been set to~1 in Item~1.
It follows from the parity of their subscript in Item~2 (neither a $1$-edge nor
an $a$-edge, since $a$ is odd, can link any two such vertices),
and from the value modulo $(a+1)$
of their subscript in Items~3 (which, again, implies that neither a $1$-edge nor
an $a$-edge, since $a$ is even, can link any two such vertices).

\medskip

In order to finish the proof, we only need to show that we have $\sigma(g)\ge\sigma(f)$.
Indeed, in Item~1, the number of vertices set to~1 is
$n_1 = \frac{p+4}{2} + \frac{p+2}{2} + \frac{p+2}{2} = \frac{3p+8}{2}$,
which gives $n_1 = \frac{3f(v_i)-4}{2}\ge f(v_i)$ if $f(v_i)$ is even (in that case, $f(v_i)\ge 4$, and
equality holds only when $f(v_i)=4$),
or $n_1 = \frac{3f(v_i)-1}{2} > f(v_i)$ if $f(v_i)$ is odd (in that case, $f(v_i)\ge 3$).

In Item~2, the number of vertices set to~1 is
$n_2 = \left\lceil\frac{(2+d)a+1}{2}\right\rceil = \left\lceil\frac{2a+1+da}{2}\right\rceil$.
If $d=0$, since $a$ is odd, we get $n_2 = \left\lceil\frac{2a+1}{2}\right\rceil = a+1=f(v_i)$.
Otherwise, that is, if $d\ge 1$, since $a\ge 3$,
we get $n_2 = \left\lceil\frac{2a+1+da}{2}\right\rceil
\ge a+\frac{1}{2}+\frac{ad}{2}\ge a+d+1 = f(v_i)$.


Finally, in Item~3, note that, for every sequence of $a+1$ consecutive vertices, $\frac{a}{2}$
of them are set to~1.
Therefore, the total number of vertices set to~1 is
$$n_3 = \left\lceil \frac{a}{2(a+1)}\left((d+3)a+2\right)\right\rceil
\ge \frac{3a^2+da^2+2a}{2(a+1)}
= a + \frac{da^2}{2(a+1)} + \frac{a^2}{2(a+1)} .$$
Since $a\ge 4$, we have $\frac{a^2}{2(a+1)}>1$, which gives $n_3>a+d+1=f(v_i)$.

We thus have $\sigma(g)\ge\sigma(f)$, as required.
This completes the proof.
\end{proof}


From Proposition~\ref{prop:2-bounded-2a-3a} and Lemmas \ref{lem:broadcast-at-most-2-2a+r}
and~\ref{lem:broadcast-at-most-2-general}, we directly get the following theorem.

\begin{theorem}\label{th:2-bounded}
Every circulant graph of the form $C(n;1,a)$, $3\le a\le\left\lfloor\frac{n}{2}\right\rfloor$,
admits a $2$-bounded $\beta_b$-broadcast if none of the following conditions is satisfied: (i) $n=2a$ and $a$ is even, or (ii) $n=2a+1$.
\end{theorem}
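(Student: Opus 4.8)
The plan is to prove the theorem by a straightforward case analysis on the value of $n$ relative to $a$, invoking in each case the appropriate result already established. Since $a\le\left\lfloor\frac{n}{2}\right\rfloor$ we have $n\ge 2a$, and I would partition the admissible values of $n$ into the five cases $n=2a$, $n=2a+1$, $2a+2\le n\le 3a-1$, $n=3a$, and $n>3a$, which together exhaust all $n\ge 2a$ without overlap.

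First, if $n=2a$, then since condition (i) is assumed not to hold, $a$ must be odd, and Proposition~\ref{prop:2-bounded-2a-3a}(1) immediately provides a $2$-bounded $\beta_b$-broadcast. The case $n=2a+1$ is exactly condition (ii) and is therefore excluded, so it need not be considered. If $2a+2\le n\le 3a-1$, I would write $n=2a+r$ with $2\le r\le a-1$; then $3\le a\le\left\lfloor\frac{n}{2}\right\rfloor$ and $2\le r<a$, so the hypotheses of Lemma~\ref{lem:broadcast-at-most-2-2a+r} are met and it applies. If $n=3a$, Proposition~\ref{prop:2-bounded-2a-3a}(2) applies directly. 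Finally, if $n>3a$, then $3\le a\le\left\lfloor\frac{n}{2}\right\rfloor$ together with $n>3a$ are exactly the hypotheses of Lemma~\ref{lem:broadcast-at-most-2-general}, which applies. In every case $C(n;1,a)$ admits a $2$-bounded $\beta_b$-broadcast, completing the proof.

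The only point requiring attention is the bookkeeping: one must check that the five cases indeed cover all $n\ge 2a$ and that, in each case, the side conditions of the cited statement (namely $a\ge 3$, $a\le\left\lfloor\frac{n}{2}\right\rfloor$, the range $2\le r<a$ for the parameter $r=n-2a$, and $n>3a$) are satisfied; each of these verifications is immediate. There is no genuine obstacle here: the entire substance of the result resides in Lemmas~\ref{lem:broadcast-at-most-2-2a+r} and~\ref{lem:broadcast-at-most-2-general}, and in Theorems~\ref{th:C(2a;1,a)} and~\ref{th:C(3a;1,a)} underlying Proposition~\ref{prop:2-bounded-2a-3a}. This theorem is simply their assembly.
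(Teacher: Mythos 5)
Your proof is correct and matches the paper's, which obtains the theorem directly by assembling Proposition~\ref{prop:2-bounded-2a-3a} and Lemmas~\ref{lem:broadcast-at-most-2-2a+r} and~\ref{lem:broadcast-at-most-2-general} exactly as you do; your case analysis simply makes the bookkeeping explicit.
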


The following example will show that when $a=2$ or $n=2a+1$, not all circulant graphs of the form $C(n;1,a)$ admit a $2$-bounded $\beta_b$-broadcast.
Consider  the circulant graph  $C(21;1,2)$, 
and let $f$ be the mapping from $V(C(21;1,2))$ to $\{0,3\}$ defined by $f(v_0)=f(v_7)=f(v_{14})=3$ and $f(v_i)=0$ otherwise.
Since $2$ is even,  $f$ is clearly an independent broadcast on $C(21;1,2)$, with cost $\sigma(f)=9$, and thus $9 \leq \beta_b(C(21;1,2))$.
Now, suppose that there exists a $2$-bounded $\beta_b$-broadcast $g$ on $C(21;1,2)$.
If $|V_g^+| \le 4$, we  immediately get $\sigma(g) \leq 8$, since $g$ is $2$-bounded.
Suppose now $|V_g^+| > 4$.
Each vertex $v_i\in V_g^+$ dominates at most three vertices among $\{v_i,v_{i+1},v_{i+2}\}$ (subscripts are taken modulo~$21$), and none of these vertices is dominated more than once.
Therefore, since $g$ is an independent broadcast, we get
$$\sum_{v_i\in V_g^+}(1+2g(v_i)) = |V_g^+| + 2g(V_g^+) \le 21,$$
which gives (recall that $|V_g^+| > 4$)
$$\sigma(g) =  g(V_g^+) \leq  \frac{21 - |V_g^+|}{2} \le 8.$$
In both cases, we get a contradiction to the optimality of  $g$.
Finally, since $C(21;1,10)$ is isomorphic to  $C(21;1,2)$,  we get that
there also exist circulant graphs of the form $C(2a+1;1,a)$ that do not admit any $2$-bounded $\beta_b$- broadcast.

\section{General bounds on the independence broadcast number of \texorpdfstring{$C(n;1,a)$}{C(n;1,a)}}\label{sec:bounds}

In this section, we will provide some general upper and lower bounds on the cost of independent broadcasts on
circulant graphs of the form $C(n;1,a)$, $2\le a\le\left\lfloor\frac{n}{2}\right\rfloor$, that will be useful in the next section.

We first introduce some notation and a useful lemma.
%
Let $f$ be an independent broadcast on $C(n;1,a)$.
We then let
$$V_f^1=\{v_i\in V_f^+\ |\ f(v_i) = 1\},\  V_f^{2}=\{v_j\in V_f^+\ |\ f(v_j) = 2\} \mbox{ and } V_f^{\ge 2}=\{v_j\in V_f^+\ |\ f(v_j) \ge 2\}.$$
In particular, if $f$ is $2$-bounded, we then have $V_f^+ = V_f^1 \cup V_f^{2}$.

Consider now a $2$-bounded independent broadcast $f$ and  any vertex $v_i\in V_f^1$ such that $f(v_{i-1}) = f(v_{i-2}) = 0$.
Since $f$ is an independent broadcast, we necessarily have $f(v_{i+1}) = 0$.
Moreover, we then have either $f(v_{i+2}) = 0$ or $f(v_{i+2}) = 1$.
Therefore, the broadcast values of the sequence of vertices $v_iv_{i+1}v_{i+2}\dots$
is of the form either $100$, $10100$ or $1010\dots100$.

For each vertex $v_i\in V_f^1$ such that $f(v_{i-1}) = f(v_{i-2}) = 0$,
we then let
$$A_f^i = \{v_{i+\ell},\ 0\le \ell\le 2p+2\}$$
be the set of vertices satisfying
(i) $f(v_{i+2k})=1$ and $f(v_{i+2k+1})=0$ for every $k$, $0\le k\le p$,
and (ii) $f(v_{i+2p+2})=0$.

Now, for each vertex $v_j\in V_f^{2}$, we let
$$B_f^j =  \{v_{j-a+1}\}
  \cup  \{v_j,v_{j+1},v_{j+2}\}
  \cup \{v_{j+a+1}\}.$$

\begin{figure}
\begin{center}
 \begin{tikzpicture}[scale=0.5]

 \POINTILLE{0}{0}{1}{0}   \LIGNE{1}{0}{18}{0}   \POINTILLE{18}{0}{19}{0}

   \foreach \k in {3,10} \draw[thick] (\k,0) .. controls (\k+1,2) and (\k+6,2) .. (\k+7,0);
   \foreach \k in {1,2,...,18} \SOM{\k}{0}{}{};

   \bSOM{9}{0}{}{1}

   \gSOM{1}{0}{}{0}
   \gSOM{2}{0}{}{0}
   \bSOM{3}{0}{$v_i$}{1}
   \gSOM{4}{0}{}{0}
   \bSOM{5}{0}{}{1}
   \gSOM{6}{0}{}{0}
   \bSOM{7}{0}{}{1}
   \gSOM{8}{0}{}{0}
   \gSOM{10}{0}{}{0}
   \gSOM{11}{0}{}{0}

     \LIGNE{3}{-1.2}{3}{-1.5}
   \LIGNE{3}{-1.5}{11}{-1.5}
   \LIGNE{11}{-1.5}{11}{-1.2}
 \node[below] at (7,-1.6){ $A_f^i$  } ;
   \node[below] at (10,-2.8){The vertices of the set $A_f^i$  , $a=7$ } ;
\end{tikzpicture}

\begin{tikzpicture}[scale=0.5]
   \POINTILLE{0}{0}{1}{0}   \LIGNE{1}{0}{24}{0}   \POINTILLE{24}{0}{25}{0}

   \foreach \k in {2,9,16} \draw[thick] (\k,0) .. controls (\k+1,2) and (\k+6,2) .. (\k+7,0);
   \foreach \k in {1,2,...,24} \SOM{\k}{0}{}{};

   \bSOM{9}{0}{$v_j$}{2}

   \gSOM{1}{0}{}{}
   \gSOM{2}{0}{}{}
   \bSOM{3}{0}{}{}
   \gSOM{7}{0}{}{}
   \gSOM{8}{0}{}{}
   \bSOM{10}{0}{}{}
   \bSOM{11}{0}{}{}
   \gSOM{15}{0}{}{}
   \gSOM{16}{0}{}{}
   \bSOM{17}{0}{}{}
   \gSOM{23}{0}{}{}

   \node[below] at (12.5,-2){The vertices of the set $B_f^j$ (the black vertices) , $a=7$ } ;
\end{tikzpicture}
\caption{\label{fig:AiBj}The sets $A_f^i$ and $B_f^j$.}
\end{center}
\end{figure}

The definition of these two sets is illustrated in Figure~\ref{fig:AiBj}.
These sets have the following properties.


\begin{lemma}\label{lem:AiBj}
For every $2$-bounded independent broadcast $f$ on $C(n;1,a)$, $2\le a\le \left\lfloor\frac{n}{2}\right\rfloor$, the following holds.
\begin{enumerate}
\item For every vertex $v_i\in V_f^1$, $|A_f^i| = 2f(A_f^i)+1$.
\item For every vertex $v_j\in V_f^{2}$, $|B_f^j| = 5$.
\item $\sum_{v_i\in V_f^1}|A_f^i|  +  \sum_{v_j\in V_f^{2}}|B_f^j| \le n$.
\end{enumerate}
\end{lemma}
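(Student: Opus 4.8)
The plan is to verify the three items in turn, with the third being the one that carries the combinatorial weight.

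For Item~1, I would simply unwind the definition of $A_f^i$. By construction, $A_f^i=\{v_{i+\ell}: 0\le\ell\le 2p+2\}$ for the appropriate $p\ge 0$, so $|A_f^i|=2p+3$. On the other hand the only $f$-broadcast vertices inside $A_f^i$ are $v_i,v_{i+2},\dots,v_{i+2p}$, each with $f$-value~$1$, so $f(A_f^i)=p+1$ and hence $|A_f^i|=2(p+1)+1=2f(A_f^i)+1$. Item~2 is immediate: the five listed vertices $v_{j-a+1}$, $v_j$, $v_{j+1}$, $v_{j+2}$, $v_{j+a+1}$ are pairwise distinct because $a\ge 2$ forces $a+1\ge 3$ (so $v_{j-a+1},v_{j+a+1}\notin\{v_j,v_{j+1},v_{j+2}\}$) and $v_{j-a+1}\ne v_{j+a+1}$ since $2a-2\not\equiv 0\pmod n$ when $2\le a\le\lfloor n/2\rfloor$ (the only borderline case $n=2a$, $a=2$ gives $C(4;1,2)=K_4$, which has no vertex of $f$-value~$2$ as its diameter is~$1$); I would dispatch this degenerate case separately or note $V_f^2=\emptyset$ there.

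Item~3 is the heart of the lemma, and the argument I would use is a disjointness/covering count inside the $n$-vertex graph. The claim I want is that all the sets $A_f^i$ (over $v_i\in V_f^1$ with $f(v_{i-1})=f(v_{i-2})=0$) and all the sets $B_f^j$ (over $v_j\in V_f^2$) are pairwise disjoint; then $\sum|A_f^i|+\sum|B_f^j|\le n$ follows at once since they all live in $V(C(n;1,a))$. Note first that by definition of $A_f^i$, the broadcast vertices of $V_f^1$ are partitioned into ``runs'' $v_i,v_{i+2},\dots,v_{i+2p}$ each of which begins a distinct block $A_f^i$; distinct runs give disjoint $A$-blocks because each $A_f^i$ ends with a vertex $v_{i+2p+2}$ whose $f$-value is~$0$ and which is the last of its run, and the next $V_f^1$-vertex (if any) starts a fresh run only after two consecutive $0$'s. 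So the $A$-blocks are pairwise disjoint and together contain every vertex of $V_f^1$. Next, an $A$-block cannot meet a $B$-block: a vertex $v$ of $B_f^j$ that lies in $\{v_j,v_{j+1},v_{j+2}\}$ has $f(v_j)=2$ and $v_{j+1},v_{j+2}$ are non-broadcast (by independence, since $d(v_j,v_{j+1})=1$, $d(v_j,v_{j+2})=2$), but if such a vertex lay in some $A_f^i$ it would have to be within distance~$f(v_j)=2$ of the broadcast vertex $v_i$ or another $V_f^1$-vertex of that run — contradicting independence of $f$; the two ``wing'' vertices $v_{j\pm a\mp 1}$ are handled the same way after checking they are at distance~$2$ from $v_j$ along an $a$-edge plus a $1$-edge. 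Finally, two $B$-blocks $B_f^j$, $B_f^{j'}$ with $v_j,v_{j'}\in V_f^2$ are disjoint: if they shared a vertex, one checks in each of the (few) cases which coordinates could coincide, and each case yields $d(v_j,v_{j'})\le 2=\max\{f(v_j),f(v_{j'})\}$, again contradicting independence. (The wing vertices are precisely placed so that $v_{j+a+1}$ of $B_f^j$ and $v_{j'-a+1}$ of $B_f^{j'}$ collide only when $j'=j+2a$, i.e. $d(v_j,v_{j'})=\min\{2a,n-2a\}\le\dots$, which one rules out or else it forces $v_j$ and $v_{j'}$ too close.)

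The main obstacle I anticipate is the case analysis in the $B$--$B$ and $A$--$B$ disjointness, because ``distance~$\le 2$'' in $C(n;1,a)$ can be realized in several ways ($1+1$, $a$, $a\pm 1$, and, for small $n$, also $n-a$, $n-2a$, etc.), so I would need to be careful that the wing vertices $v_{j-a+1}$ and $v_{j+a+1}$ really are at distance exactly~$2$ from $v_j$ (true since $a\ge 2$: go along one $a$-edge and one $1$-edge) and that no two distinct blocks can overlap without some pair of broadcast vertices landing within the forbidden distance. Once disjointness is established, summing the cardinalities and bounding by $|V(C(n;1,a))|=n$ finishes the proof. It is worth remarking that combining Items~1–3 immediately yields the clean inequality $2f(V_f^1)+|V_f^1|+5|V_f^2|\le n$, which is presumably what the next section exploits for the upper bounds.
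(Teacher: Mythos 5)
Your proof is correct and follows essentially the same route as the paper: Items 1 and 2 are read off from the definitions, and Item 3 is obtained by showing that all the blocks $A_f^i$ and $B_f^j$ are pairwise disjoint (any overlap forcing two broadcast vertices within distance at most $2$, contradicting independence) and then summing cardinalities inside the $n$ vertices. One cosmetic slip: the indices of $v_{j-a+1}$ and $v_{j+a+1}$ differ by $2a$, not $2a-2$, so the degenerate coincidence you flag occurs exactly when $n=2a$.
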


\begin{proof}
The first two items directly follow from the definition of the sets $A_f^i$ and $B_f^j$.
It also follows from the definition that $A_f^i \cap A_f^{i'} = \emptyset$ for every
two distinct vertices $v_i$ and $v_{i'}$ in $V_f^1$.
Similarly, we necessarily have  $B_f^j \cap B_f^{j'} = \emptyset$
for every two distinct vertices $v_j$ and $v_{j'}$ in $V_f^2$,
since otherwise we would have $d(v_{j},v_{j'})\leq \max \{f(v_{j}),f(v_{j'})\} = 2$,
contradicting the fact that $f$ is an independent broadcast.
The same argument gives $A_f^i \cap B_f^{j} = \emptyset$
for every two vertices $v_i\in V_f^1$ and $v_j\in V_f^{2}$.
All together, these three properties imply that Item~3 also holds.
\end{proof}

The next result provides a general upper bound on the broadcast
independence number of circulant graphs of the form $C(n;1,a)$,
with $3\le a\le\left\lfloor\frac{n}{2}\right\rfloor$ and $3a\leq n$.

\begin{proposition}\label{prop:bound1}
If $n$ and $a$ are two integers such that
$3\le a\le\left\lfloor\frac{n}{2}\right\rfloor$ and $3a\leq n$,
then, for every $2$-bounded independent broadcast $f$ on $C(n;1,a)$, we have
$$\sigma(f)\leq \left\lfloor\frac{n-\left|V_f^{2}\right|}{2}\right\rfloor.$$
\end{proposition}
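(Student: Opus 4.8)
The plan is to use the partition machinery from Lemma~\ref{lem:AiBj}. Let $f$ be a $2$-bounded independent broadcast on $C(n;1,a)$ with $3a\le n$. The idea is to cover the cost $\sigma(f)$ by the sets $A_f^i$ (for $v_i\in V_f^1$, chosen so that $f(v_{i-1})=f(v_{i-2})=0$) together with the sets $B_f^j$ (for $v_j\in V_f^{2}$), and to bound the total number of vertices these sets occupy. First I would observe that every vertex of $V_f^1$ lies in exactly one set $A_f^{i}$: starting from any $v_k\in V_f^1$, walk backwards along the cycle $v_k,v_{k-1},v_{k-2},\dots$ until reaching an index $v_i$ with $f(v_{i-1})=f(v_{i-2})=0$; such an index exists because, as explained before Lemma~\ref{lem:AiBj}, the $f$-values along a maximal run of $1$'s form a pattern $1010\cdots100$, and this vertex $v_i$ is unique. (One must note this is where $3a\le n$ matters: the cycle is long enough that a maximal run of $1$-valued vertices, with the surrounding $0$'s, does not wrap around, so the sets $A_f^i$ are genuinely well-defined sub-intervals; for $2a+2\le n<3a$ the argument of the earlier sections already applies, and the hypothesis here is exactly $3a\le n$.)

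Next I would carry out the counting. By Lemma~\ref{lem:AiBj}(1), $\sum_{v_i\in V_f^1}|A_f^i| = \sum_{v_i\in V_f^1}\bigl(2f(A_f^i)+1\bigr)$. Since the sets $A_f^i$ partition $V_f^1$ (by the previous paragraph) and each $v\in A_f^i$ with $f(v)>0$ satisfies $f(v)=1$, we have $\sum_{v_i\in V_f^1} f(A_f^i) = |V_f^1| = f(V_f^1)$, and the number of sets $A_f^i$ is some integer $t$ with $t\le |V_f^1|$; hence $\sum_{v_i\in V_f^1}|A_f^i| = 2f(V_f^1) + t$. On the other side, by Lemma~\ref{lem:AiBj}(2), $\sum_{v_j\in V_f^{2}}|B_f^j| = 5|V_f^{2}|$, while $f(V_f^{2}) = 2|V_f^{2}|$. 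Plugging into Lemma~\ref{lem:AiBj}(3):
$$2f(V_f^1) + t + 5|V_f^{2}| \le n.$$
Now $\sigma(f) = f(V_f^1) + f(V_f^{2}) = f(V_f^1) + 2|V_f^{2}|$, so $2\sigma(f) = 2f(V_f^1) + 4|V_f^{2}|$, and the displayed inequality gives $2\sigma(f) \le n - t - |V_f^{2}| \le n - |V_f^{2}|$ (using $t\ge 0$; in fact $t\ge 1$ whenever $V_f^1\ne\emptyset$, but $t\ge 0$ suffices). Dividing by $2$ and taking the floor (legitimate since $\sigma(f)$ is an integer) yields $\sigma(f)\le\left\lfloor\frac{n-|V_f^{2}|}{2}\right\rfloor$, as claimed.

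The main obstacle I anticipate is the well-definedness and disjointness bookkeeping in the first step: one has to be careful that the ``backward walk'' to find the canonical starting vertex $v_i$ of each block of $1$'s terminates correctly and that two different blocks of $1$-valued vertices really do produce disjoint sets $A_f^i$ — this is where the $3a\le n$ hypothesis (ensuring the intervals do not overlap cyclically) is quietly used, and it should be stated explicitly. Everything after that is the elementary arithmetic above, and the disjointness of the $A$'s from each other, the $B$'s from each other, and the $A$'s from the $B$'s is already packaged in Lemma~\ref{lem:AiBj}(3), so no further distance computations are needed.
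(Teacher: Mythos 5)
Your overall route is the same as the paper's: plug the counts from Lemma~\ref{lem:AiBj} into $\sigma(f)=f(V_f^1)+2|V_f^{2}|$ and divide by two. Your bookkeeping in the second paragraph is in fact more careful than the paper's own, since you correctly sum $|A_f^i|=2f(A_f^i)+1$ over the $t$ blocks rather than over all of $V_f^1$; that part of the argument is fine and needs no change.

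The gap is in your first step, and your explanation of why $3a\le n$ rescues it is wrong: that hypothesis does not prevent a maximal run of $1$-valued vertices from wrapping around the entire cycle. Take $C(12;1,3)$ (so $a=3$ and $n=12\ge 3a$) and $f(v_i)=1$ if and only if $i$ is even. Since $a$ is odd this is a $1$-bounded independent broadcast, yet no vertex $v_i\in V_f^1$ satisfies $f(v_{i-1})=f(v_{i-2})=0$, so no set $A_f^i$ is defined and your backward walk never terminates; the claim that every vertex of $V_f^1$ lies in exactly one $A_f^i$ is false here. (The bound $|A_f^i|\le a+1$ used in the proof of Proposition~\ref{prop:bound2} relies on $a$ being even; for odd $a$ a run may be arbitrarily long.) In this degenerate case Lemma~\ref{lem:AiBj}(3) collapses to $5|V_f^{2}|\le n$ and gives no control on $f(V_f^1)$, so your key inequality does not follow. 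The repair is short but must be stated: if $V_f^1\ne\emptyset$ and no block starter exists, then for any $v_k\in V_f^1$ independence forces $f(v_{k-1})=0$, and the absence of a starter together with independence forces $f(v_{k-2})=1$; iterating, the pattern $10$ propagates around the whole cycle, which is impossible for odd $n$ and for even $n$ yields $V_f^{2}=\emptyset$ and $\sigma(f)=n/2=\bigl\lfloor (n-|V_f^{2}|)/2\bigr\rfloor$, so the bound holds (with equality) in that case. Once this case is disposed of, every maximal run of $1$'s has a genuine starter, the sets $A_f^i$ do cover $V_f^1$, and the rest of your argument goes through as written.
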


\begin{proof}
Let $f$ be any $2$-bounded independent broadcast on $C(n;1,a)$ (it follows from Theorem~\ref{th:2-bounded} that such broadcasts exist).
From Lemma~\ref{lem:AiBj}, we get
$$\sum_{v_i\in V_f^1}|A_f^i|  +  \sum_{v_j\in V_f^{ 2}}|B_f^j|
   =  2f(V_f^1) + \left|V_f^1\right| + 3f(V_f^{ 2}) - \left|V_f^{ 2}\right| \le n,$$
which gives
$$ 2f(V_f^+) = 2f(V_f^1) + 2f(V_f^{ 2}) \le n - \left|V_f^1\right| + \left|V_f^{2}\right| - f(V_f^{ 2})
\le  n + \left|V_f^{ 2}\right| - f(V_f^{ 2}).$$
Now, since $f(v_j)= 2$ for every $v_j\in V_f^{ 2}$,
we have $f(V_f^{ 2}) = 2\left|V_f^{ 2}\right|$, and thus
$$\sigma(f) = f(V_f^+) \leq \left\lfloor\frac{n-\left|V_f^{2}\right|}{2}\right\rfloor.$$
This completes the proof.
\end{proof}

When $a$ is even, the upper bound given in Proposition~\ref{prop:bound1} can be
improved as follows.

\begin{proposition}\label{prop:bound2}
If $n$ and $a$ are two integers such that
$2\le a\le\left\lfloor\frac{n}{2}\right\rfloor$ and $a$ is even, then,
for every $2$-bounded independent broadcast $f$ on $C(n;1,a)$, we have
$$\sigma(f)\leq \left\lfloor\frac{a}{2(a+1)}\left(n-\frac{a-4}{a}\left|V_f^{2}\right|\right) \right\rfloor.$$
\end{proposition}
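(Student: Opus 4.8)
The plan is to refine the packing argument of Proposition~\ref{prop:bound1} by exploiting the fact that, when $a$ is even, a $1$-valued broadcast vertex $v_i\in V_f^1$ actually ``blocks'' more room than just the set $A_f^i$: the $a$-edges incident to such a vertex forbid additional vertices from being broadcast vertices. Concretely, for each $v_i\in V_f^1$ I would enlarge the set $A_f^i$ to a set $\widehat A_f^i$ that also records, for every $1$-valued vertex $v_{i+2k}$ counted inside $A_f^i$, the two vertices $v_{i+2k-a}$ and $v_{i+2k+a}$ reached along $a$-edges (these must have $f$-value $0$, since $f$ is independent and $a\ge 2$). Because $a$ is even and the $1$-valued vertices of $A_f^i$ have subscripts of a fixed parity, the residues modulo $a+1$ of the newly added vertices can be controlled so that these enlarged sets, together with the sets $B_f^j$ for $v_j\in V_f^2$, are still pairwise disjoint; this is exactly the parity/modulo-$(a+1)$ bookkeeping already used in Item~3 of the proof of Lemma~\ref{lem:broadcast-at-most-2-general}. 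The key quantitative point will be that $|\widehat A_f^i|$ grows like $\frac{a+1}{a}$ times $|A_f^i|$ (each block of $a+1$ consecutive positions can host at most $\frac a2$ broadcast vertices of value $1$), and similarly each $v_j\in V_f^2$ forbids a cluster of size roughly $\frac{a+1}{a}\cdot 4$ rather than $5$, which accounts for the correction term $\frac{a-4}{a}|V_f^2|$.

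The steps, in order, would be: (1) For $v_i\in V_f^1$ with $f(v_{i-1})=f(v_{i-2})=0$, define $\widehat A_f^i$ by adjoining to $A_f^i$ the $a$-neighbours of its $1$-valued vertices, and show $|\widehat A_f^i|\ge \frac{a+1}{a}\,|A_f^i| = \frac{a+1}{a}\bigl(2f(A_f^i)+1\bigr)$ (using Item~1 of Lemma~\ref{lem:AiBj}), possibly after discarding duplicates — here I would be careful that two $a$-neighbours belonging to the same $A_f^i$ coincide only in degenerate small cases, which can be absorbed into the floor. (2) For $v_j\in V_f^2$, show the cluster $B_f^j$ can likewise be counted with weight $\frac{a+1}{a}$ relative to a ``per $(a+1)$-window'' density argument, giving an effective contribution of $5-\frac{a-4}{a} = \frac{a+6}{a}$ after comparison, matching the statement. (3) Prove pairwise disjointness of the enlarged sets: $\widehat A_f^i\cap\widehat A_f^{i'}=\emptyset$, $B_f^j\cap B_f^{j'}=\emptyset$ (already in Lemma~\ref{lem:AiBj}), and the cross terms $\widehat A_f^i\cap B_f^j=\emptyset$ — all following because any overlap would force two broadcast vertices within distance $\max\{f(u),f(v)\}$, with the new $a$-edge vertices handled by the evenness of $a$ (an $a$-edge cannot join two vertices of the same fixed parity class when $a$ is even, preventing collisions among the adjoined vertices). (4) Sum up: $\sum |\widehat A_f^i| + \sum|B_f^j|\le n$, substitute the inequalities from (1)–(2), use $f(V_f^2)=2|V_f^2|$, and solve for $\sigma(f)=f(V_f^1)+f(V_f^2)$, finally applying the floor since $\sigma(f)$ is an integer.

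I expect the main obstacle to be step~(3), and within it the precise combinatorics of how the adjoined $a$-neighbours of different sets $A_f^i$ (and of the $B_f^j$'s) interact — in particular ensuring that the weight factor $\frac{a+1}{a}$ is genuinely achievable simultaneously for all sets without double-counting. The honest way to do this is to reorganize the whole count as a density statement: partition (a cyclic stretch of) the $n$ vertices into windows of $a+1$ consecutive vertices and argue that each such window contains at most $\frac a2$ broadcast vertices of value $1$ and interacts with at most a bounded number of the $B_f^j$-clusters; this converts the disjointness juggling into a clean averaging bound and makes the $\frac{a-4}{a}$ correction transparent. The boundary effects of the cyclic structure and the small-$a$ cases ($a=2$, where $\frac{a-4}{a}<0$ and the bound is weakest, and $a=4$, where the correction vanishes) would need a brief separate check, but these are routine once the density framework is set up.
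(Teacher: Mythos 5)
There is a genuine gap, and it sits exactly where you predicted: the enlarged sets $\widehat A_f^i$ are \emph{not} pairwise disjoint in general. The sets $A_f^i$ contain $0$-valued vertices as well as $1$-valued ones, and nothing prevents an adjoined $a$-neighbour $v_{i+2k\pm a}$ of a $1$-valued vertex of $A_f^i$ from being one of the $0$-valued members of some other block $A_f^{i'}$, or from lying inside some $B_f^j$. Independence of $f$ only forbids the adjoined vertex from being a \emph{broadcast} vertex; it does not keep it out of the other sets. So step~(3) fails as stated, and the fallback you mention (a window/density averaging over stretches of $a+1$ consecutive vertices, plus a separate accounting of how the $B_f^j$-clusters depress the count) is only sketched, not carried out; in particular the claim in step~(2) that each $B_f^j$ contributes with ``effective weight'' $5-\frac{a-4}{a}$ is asserted rather than proved, and it is not how the correction term actually arises.

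The paper avoids all of this with one short observation that makes any enlargement unnecessary: since $a$ is even, $|A_f^i|\le a+1$ (otherwise $v_i$ and $v_{i+a}$ would both have $f$-value $1$ while being joined by an $a$-edge), hence $f(A_f^i)\le\frac{a}{2}$, hence
$\frac{|A_f^i|}{f(A_f^i)}=2+\frac{1}{f(A_f^i)}\ge 2+\frac{2}{a}=\frac{2(a+1)}{a}$.
Plugging this improved ratio into the \emph{same} disjointness inequality $\sum|A_f^i|+\sum|B_f^j|\le n$ from Lemma~\ref{lem:AiBj}, with $|B_f^j|=5=3f(v_j)-1$ unchanged, and then using $f(V_f^2)=2|V_f^2|$, produces the $\frac{a-4}{a}|V_f^2|$ term by pure algebra. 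Your underlying density intuition (at most $\frac{a}{2}$ value-$1$ vertices per $a+1$ consecutive positions) is the right one and is equivalent to the paper's $|A_f^i|\le a+1$, but you should apply it to cap $f(A_f^i)$ and improve the ratio $|A_f^i|/f(A_f^i)$, rather than trying to manufacture larger disjoint sets.
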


\begin{proof}
Let $f$ be any $2$-bounded independent broadcast on $C(n;1,a)$.
Observe first that we necessarily have $|A_f^i|\le a+1$ for every vertex $v_i\in V_f^1$,
since otherwise this would give $f(v_i)=f(v_{i+a})=1$, contradicting the fact
that $f$ is an independent broadcast.
This implies $f(A_f^i)\le \frac{a}{2}$.
Using item~1 of Lemma~\ref{lem:AiBj}, we then get
$$\frac{|A_f^i|}{f(A_f^i)} = \frac{2f(A_f^i)+1}{f(A_f^i)} = 2 + \frac{1}{f(A_f^i)}
\ge 2 + \frac{2}{a} = \frac{2(a+1)}{a},$$
and thus
$$|A_f^i| \ge \frac{2(a+1)}{a}f(A_f^i).$$

From Lemma~\ref{lem:AiBj}, we then get
$$n \ge \sum_{v_i\in V_f^1}|A_f^i|  +  \sum_{v_j\in V_f^{ 2}}|B_f^j|
   \ge  \frac{2(a+1)}{a}f(V_f^1) + 3f(V_f^{ 2}) - \left|V_f^{2}\right|,$$
which gives
$$n \ge  \frac{2(a+1)}{a}f(V)  + \frac{a-2}{a}f(V_f^{ 2}) - \left|V_f^{ 2}\right|.$$

Finally, since $f(V_f^{ 2}) = 2\left|V_f^{ 2}\right|$, we get
$$\frac{2(a+1)}{a}f(V) \le n + \left|V_f^{ 2}\right| - \frac{2(a-2)}{a}\left|V_f^{ 2}\right|
= n - \frac{a-4}{a}\left|V_f^{ 2}\right|,$$
and thus
$$\sigma(f) = f(V) \le \left\lfloor\frac{a}{2(a+1)}\left(n-\frac{a-4}{a}\left|V_f^{2}\right|\right) \right\rfloor.$$
This completes the proof.
\end{proof}

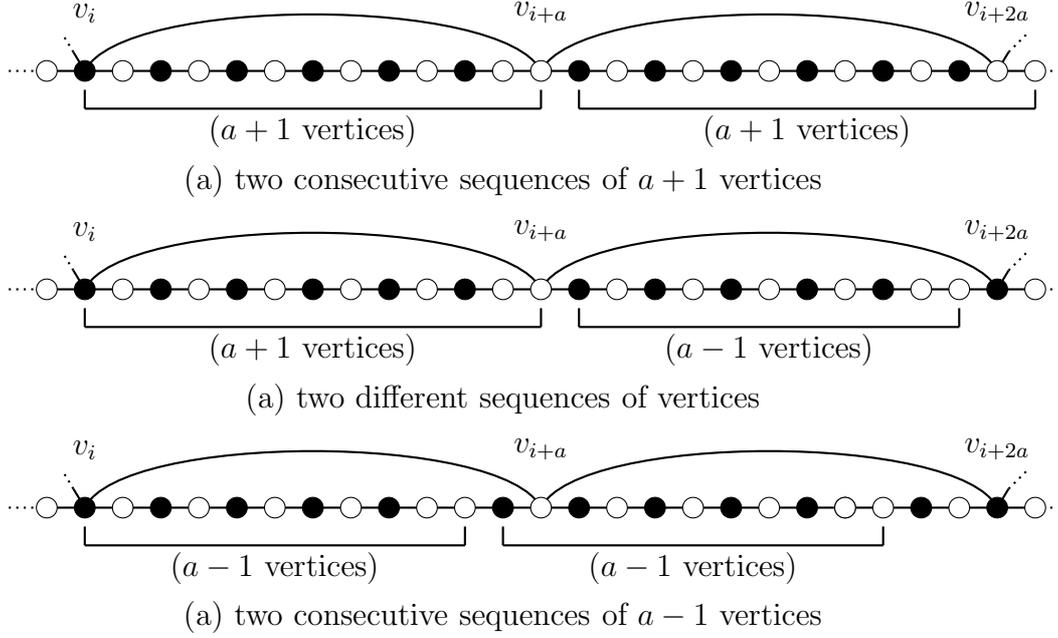
\begin{figure}
\begin{center}
\begin{tikzpicture}[scale=0.5]
   \POINTILLE{-1}{0}{0}{0}
   \LIGNE{0}{0}{26}{0}
    \POINTILLE{26}{0}{26.5}{0}

   \LIGNE{1}{0}{0.7}{0.5}
   \POINTILLE{0.7}{0.5}{0.4}{1}
   \LIGNE{25}{0}{25.3}{0.5}
   \POINTILLE{25.3}{0.5}{25.8}{1}

   \LIGNE{1}{-0.5}{1}{-1}
   \LIGNE{1}{-1}{13}{-1}
   \LIGNE{13}{-1}{13}{-0.5}

   \LIGNE{14}{-0.5}{14}{-1}
   \LIGNE{14}{-1}{26}{-1}
   \LIGNE{26}{-1}{26}{-0.5}

   \foreach \k in {1,13}
      \draw[thick] (\k,0) .. controls (\k+1,2) and (\k+11,2) .. (\k+12,0);

   \SOM{0}{0}{}{}
   \bSOM{1}{0}{}{}
   \SOM{2}{0}{}{}
   \bSOM{3}{0}{}{}
   \SOM{4}{0}{}{}
   \bSOM{5}{0}{}{}
   \SOM{6}{0}{}{}
   \bSOM{7}{0}{}{}
   \SOM{8}{0}{}{}
   \bSOM{9}{0}{}{}
   \SOM{10}{0}{}{}
   \bSOM{11}{0}{}{}
   \SOM{12}{0}{}{}
   \SOM{13}{0}{}{}

   \bSOM{14}{0}{}{}
   \SOM{15}{0}{}{}
   \bSOM{16}{0}{}{}
   \SOM{17}{0}{}{}
   \bSOM{18}{0}{}{}
   \SOM{19}{0}{}{}
   \bSOM{20}{0}{}{}
   \SOM{21}{0}{}{}
   \bSOM{22}{0}{}{}
   \SOM{23}{0}{}{}
   \bSOM{24}{0}{}{}
   \SOM{25}{0}{}{}
   \SOM{26}{0}{}{}

\tobedone

   \node[above] at (1,1) {$v_{i}$};
   \node[above] at (13,1) {$v_{i+a}$};
   \node[above] at (25,1) {$v_{i+2a}$};

   \node[below] at (7,-0.9) {($a+1$ vertices)};
   \node[below] at (20,-0.9) {($a+1$ vertices)};

  \node[below] at (12,-2.2){(a) two consecutive sequences of $a+1$ vertices };

\end{tikzpicture}

  \begin{tikzpicture}[scale=0.5]
   \POINTILLE{-1}{0}{0}{0}
   \LIGNE{0}{0}{26}{0}
    \POINTILLE{26}{0}{26.5}{0}

   \LIGNE{1}{0}{0.7}{0.5}
   \POINTILLE{0.7}{0.5}{0.4}{1}
   \LIGNE{25}{0}{25.3}{0.5}
   \POINTILLE{25.3}{0.5}{25.8}{1}

   \LIGNE{1}{-0.5}{1}{-1}
   \LIGNE{1}{-1}{13}{-1}
   \LIGNE{13}{-1}{13}{-0.5}

   \LIGNE{14}{-0.5}{14}{-1}
   \LIGNE{14}{-1}{24}{-1}
   \LIGNE{24}{-1}{24}{-0.5}

   \foreach \k in {1,13}
      \draw[thick] (\k,0) .. controls (\k+1,2) and (\k+11,2) .. (\k+12,0);

   \SOM{0}{0}{}{}
   \bSOM{1}{0}{}{}
   \SOM{2}{0}{}{}
   \bSOM{3}{0}{}{}
   \SOM{4}{0}{}{}
   \bSOM{5}{0}{}{}
   \SOM{6}{0}{}{}
   \bSOM{7}{0}{}{}
   \SOM{8}{0}{}{}
   \bSOM{9}{0}{}{}
   \SOM{10}{0}{}{}
   \bSOM{11}{0}{}{}
   \SOM{12}{0}{}{}
   \SOM{13}{0}{}{}

   \bSOM{14}{0}{}{}
   \SOM{15}{0}{}{}
   \bSOM{16}{0}{}{}
   \SOM{17}{0}{}{}
   \bSOM{18}{0}{}{}
    \SOM{19}{0}{}{}
   \bSOM{20}{0}{}{}
   \SOM{21}{0}{}{}
   \bSOM{22}{0}{}{}
   \SOM{23}{0}{}{}
   \SOM{24}{0}{}{}
   \bSOM{25}{0}{}{}
   \SOM{26}{0}{}{}

\tobedone

  \node[above] at (1,1) {$v_{i}$};
   \node[above] at (13,1) {$v_{i+a}$};
   \node[above] at (25,1) {$v_{i+2a}$};

   \node[below] at (7,-0.9) {($a+1$ vertices)};
   \node[below] at (19,-0.9) {($a-1$ vertices)};

\node[below] at (12,-2.2){(a) two  different sequences of vertices };

\end{tikzpicture}

 \begin{tikzpicture}[scale=0.5]
    \POINTILLE{-1}{0}{0}{0}
   \LIGNE{0}{0}{26}{0}
    \POINTILLE{26}{0}{26.5}{0}

   \LIGNE{1}{0}{0.7}{0.5}
   \POINTILLE{0.7}{0.5}{0.4}{1}
   \LIGNE{25}{0}{25.3}{0.5}
   \POINTILLE{25.3}{0.5}{25.8}{1}

   \LIGNE{1}{-0.5}{1}{-1}
   \LIGNE{1}{-1}{11}{-1}
   \LIGNE{11}{-1}{11}{-0.5}

   \LIGNE{12}{-0.5}{12}{-1}
   \LIGNE{12}{-1}{22}{-1}
   \LIGNE{22}{-1}{22}{-0.5}

   \foreach \k in {1,13}
      \draw[thick] (\k,0) .. controls (\k+1,2) and (\k+11,2) .. (\k+12,0);

   \SOM{0}{0}{}{}
   \bSOM{1}{0}{}{}
   \SOM{2}{0}{}{}
   \bSOM{3}{0}{}{}
   \SOM{4}{0}{}{}
   \bSOM{5}{0}{}{}
   \SOM{6}{0}{}{}
   \bSOM{7}{0}{}{}
   \SOM{8}{0}{}{}
   \bSOM{9}{0}{}{}
   \SOM{10}{0}{}{}
   \SOM{11}{0}{}{}
   \bSOM{12}{0}{}{}
   \SOM{13}{0}{}{}

   \bSOM{14}{0}{}{}
   \SOM{15}{0}{}{}
   \bSOM{16}{0}{}{}
   \SOM{17}{0}{}{}
   \bSOM{18}{0}{}{}
      \SOM{19}{0}{}{}
   \bSOM{20}{0}{}{}
   \SOM{21}{0}{}{}
   \SOM{21}{0}{}{}
   \SOM{22}{0}{}{}
   \bSOM{23}{0}{}{}
   \SOM{24}{0}{}{}
   \bSOM{25}{0}{}{}
   \SOM{26}{0}{}{}

\tobedone

  \node[above] at (1,1) {$v_{i}$};
   \node[above] at (13,1) {$v_{i+a}$};
   \node[above] at (25,1) {$v_{i+2a}$};

   \node[below] at (6,-0.9) {($a-1$ vertices)};
   \node[below] at (17,-0.9) {($a-1$ vertices)};

  \node[below] at (12,-2.2){(a) two consecutive sequences of $a-1$ vertices };

\end{tikzpicture}

 \caption{\label{fig:} Construction of the mapping $f$ in the proof of  Proposition~\ref{prop:bound4} ($a=12$) .}
\end{center}
\end{figure}

\begin{proposition}\label{prop:bound4}
If $n$, $a$, $k_1$ and $k_2$  are four integers such that $n=k_1 (a+1)+ k_2 (a-1)$, $6\le a\le\left\lfloor\frac{n}{2}\right\rfloor$, and $a$ is even,
then, for every independent broadcast $f$ on $C(n;1,a)$, we have
$$\sigma(f)\geq k_1 \left(\dfrac{a}{2} \right)+k_2 \left(\dfrac{a}{2}-1\right).$$
\end{proposition}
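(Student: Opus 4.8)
The plan begins by fixing the correct reading of the statement. Interpreted literally, as an assertion about \emph{every} independent broadcast $f$, the inequality fails: the broadcast giving value $1$ to a single vertex and $0$ elsewhere is independent and has cost $1$, far below $k_1\frac a2+k_2(\frac a2-1)$ once $n$ is large. The content of the proposition is therefore the lower bound $\beta_b(C(n;1,a))\ge k_1\frac a2+k_2(\frac a2-1)$; since $\beta_b$ is the \emph{maximum} cost over all independent broadcasts, proving this bound is the same as producing one independent broadcast of that cost, and I will in fact produce a $1$-bounded one, so that the bound already holds for $\alpha(C(n;1,a))$.

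The construction tiles the cycle carried by the $1$-edges. Using $n=k_1(a+1)+k_2(a-1)$, I partition the cyclically ordered vertices $v_0,\dots,v_{n-1}$ into $k_1$ consecutive arcs of length $a+1$ and $k_2$ consecutive arcs of length $a-1$, in any fixed cyclic order. Each arc has odd length because $a$ is even, so an arc of length $L=2m+1$ can carry the pattern $(10)^m0$: value $1$ at the even offsets $0,2,\dots,2m-2$ and value $0$ elsewhere, ending in two consecutive $0$'s (see Figure~\ref{fig:}, drawn for $a=12$). An arc of length $a+1$ then holds $\frac a2$ broadcast vertices and an arc of length $a-1$ holds $\frac a2-1$, so summing over all arcs gives $\sigma(f)=k_1\frac a2+k_2(\frac a2-1)$ exactly. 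As all values equal $1$, $f$ is an independent broadcast precisely when its broadcast vertices form an independent set in $C(n;1,a)$, i.e. no two are joined by a $1$-edge or an $a$-edge.

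The $1$-edge condition is easy: inside an arc the broadcast vertices lie at even offsets and so are at pairwise distance $\ge 2$, while the two trailing $0$'s of each arc leave a gap of at least $3$ across every arc boundary, the wrap-around boundary included. The main obstacle is the $a$-edge condition, and here the hypotheses do the work. Let an arc start at $s$ and let $v_p$, $p=s+2i$, be one of its broadcast vertices. Since every arc length is $a-1$ or $a+1$, a short estimate shows that $p+a$ lands either in the same arc (only when that arc has length $a+1$ and $i=0$, namely at its last, value-$0$ vertex) or in the immediately following arc, and never reaches the arc two steps ahead. If $p+a$ lay on a broadcast vertex $s'+2j$ of the following arc, with $s'=s+L$ and $L=a\pm1$ odd, then $s+a+2i=s+L+2j$ would force $\pm1=L-a=2(i-j)$, which is impossible. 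Hence $v_{p+a}$ always has value $0$, so no $a$-edge joins two broadcast vertices; this parity argument is exactly where ``$a$ even'' and ``arc length within $1$ of $a$'' are used. Therefore $f$ is an independent broadcast of cost $k_1\frac a2+k_2(\frac a2-1)$, which establishes the asserted bound.
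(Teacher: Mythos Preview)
Your proof is correct and follows essentially the same construction as the paper: both partition the cyclic $1$-edge path into $k_1$ blocks of length $a+1$ and $k_2$ blocks of length $a-1$, fill each block with the pattern $(10)^m0$, and use the parity of $a$ to rule out $a$-edge adjacencies. Your write-up is in fact more careful, as you explicitly verify the $a$-edge condition via the parity argument $a-L=\pm1\neq 2(i-j)$ and correctly note that the ``for every independent broadcast'' in the statement must be read as a lower bound on $\beta_b$.
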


\begin{proof}
Let $n=k_1 (a+1)+ k_2 (a-1)$. The circulant graph $C(n;1,a)$ consists of $k_1$ sequences of $a+1$ vertices and $k_{2}$ sequences of $a-1$ vertices. Let $f$ be a mapping from $V(C(n;1,a)$ to $\{0,1\}$, defined as follows (see Figure~\ref{fig:}  for the case $a=12$). For every sequence of $a+1$ or $a-1$ vertices, we let the broadcast values of the form $1010\ldots10100$. Since $a$ is even, for every two consecutive sequences, the $f$-broadcast vertices are pairwise non adjacent and then,  $f$ is  an independent broadcast on $C(n;1,a)$, with cost $\sigma(f)=k_1 \left(\dfrac{a}{2} \right)+k_2 \left(\dfrac{a}{2}-1\right)$. Hence,
$$\sigma(f)\geq k_1 \left(\dfrac{a}{2} \right)+k_2 \left(\dfrac{a}{2}-1\right).$$
This completes the proof.
\end{proof}

\section{Some exact values}\label{sec:exact}

We determine in this section the broadcast independence number of circulant graphs of the form $C(n;1,a)$, for various values of $n$ and $a$.
In several cases, we prove, thanks to Observation~\ref{obs:beta=alpha}, that the independence number and the broadcast independence number of these graphs coincide.

In \cite{LZY09}, Liancheng, Zunquan and Yuansheng determined the exact value of the independence number of some circulant graphs of the form $C(n;1,a)$.

\begin{proposition}[Liancheng {\it et al.}~\cite{LZY09}]\label{prop:independent}
For every two integers $n$ and $a$ with $2\le a\le\left\lfloor\frac{n}{2}\right\rfloor$, we have
\begin{enumerate}
\item $\alpha(C(n;1,a)) = \frac{n}{2}, \mbox { for even } n \mbox { and odd } a$,
\item $\alpha(C(n;1,a)) = \frac{n-k}{2}, \mbox{ for odd } n \mbox { and }  a\in \{3,5\}$,
\item $\alpha(C(n;1,2)) = \left \lfloor\frac{n}{3}\right \rfloor$,
\item $\alpha(C(n;1,4)) = \left \lfloor\frac{2n}{5} \right \rfloor$.
\end{enumerate}
\end{proposition}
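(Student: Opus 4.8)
\medskip
\noindent\emph{Proof plan.} The plan is to treat the four items separately, in each case combining an explicit (almost) periodic independent set for the lower bound with an upper bound obtained either by comparison with a spanning subgraph or by a local counting argument. For Item~1 (even $n$, odd $a$), the $1$-edges of $C(n;1,a)$ form a spanning cycle $C_n$, so $\alpha(C(n;1,a))\le\alpha(C_n)=n/2$; conversely, the set $\{v_i : i\text{ even}\}$ has size $n/2$ and is independent, since the difference of two even indices is even and hence, $n$ being even, never congruent to the odd numbers $\pm1$ or $\pm a$ modulo $n$.

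For Items~3 and~4 I would encode an independent set $S\subseteq\mathbb{Z}_n$ by the cyclic sequence $g_1,\dots,g_{|S|}\ge1$ of gaps between consecutive elements of $S$, so that $g_1+\dots+g_{|S|}=n$, and translate adjacency into forbidden local gap patterns. For $a=2$ the constraints reduce to $g_i\ge3$ for every $i$, which forces $3|S|\le n$, hence $|S|\le\lfloor n/3\rfloor$, this value being attained by the multiples of $3$ (one checks the wrap-around gap is at least $3$). For $a=4$ the constraints become $g_i\ge2$, $g_i\ne4$, and no two cyclically consecutive gaps both equal to $2$ (since $2+2=4$); equivalently, every window $\{v_i,v_{i+1},\dots,v_{i+4}\}$ of five consecutive vertices induces a subgraph containing the cycle $v_iv_{i+1}v_{i+2}v_{i+3}v_{i+4}v_i$, so $S$ meets it in at most $2$ vertices, and summing over the $n$ windows yields $5|S|\le2n$, i.e.\ $|S|\le\lfloor 2n/5\rfloor$. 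The matching construction is the period-$5$ set $\{v_j : j\bmod5\in\{0,2\}\}$ when $5\mid n$, suitably patched near one ``seam'' when $5\nmid n$ (a short case analysis on $n\bmod5$).

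For Item~2 (odd $n$, $a\in\{3,5\}$) the same encoding applies: for $a=3$ the forbidden patterns are $g_i\le1$ and $g_i=3$, while for $a=5$ they are $g_i\le1$, $g_i=5$, and two cyclically consecutive gaps forming the multiset $\{2,3\}$ (since $2+3=5$). Maximising $|S|$ subject to $\sum g_i=n$ with $n$ odd, one argues that the ``excess'' $n-2|S|$ is odd and cannot be smaller than $a$: an excess of $1$ would force a single gap equal to $3$, which is excluded for both values of $a$; for $a=5$, every realisation of excess $3$, including attempts to shield a gap of size $3$ with gaps of size $4$, produces a forbidden $\{2,3\}$ pair; and taking all gaps equal to $2$ except a single one of size $a+2$ is always admissible and has excess exactly $a$. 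This gives $|S|\le(n-a)/2$ with equality, which is the claimed value (so the constant $k$ in the statement equals $a$).

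The step I expect to be the crux is the upper bound of Item~2 for $a=5$: ruling out \emph{every} gap sequence of excess $1$ or $3$ --- in particular those that try to use gaps of size $3$ buffered by gaps of size $4$ --- calls for a careful finite case analysis, and one must also confirm that the extremal sequence with a lone gap of size $a+2$ contains no consecutive $\{2,3\}$ pair. By comparison, the seam adjustments in the lower bound for Item~4 are a secondary and entirely routine matter.
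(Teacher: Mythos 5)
The paper never proves this proposition: it is imported verbatim from Liancheng \emph{et al.}~\cite{LZY09} and used as a black box, so there is no in-paper argument to compare yours with; what you have written is a self-contained proof of a cited result, and as far as I can check it is correct. The cyclic gap encoding is legitimate once one observes (as you implicitly do) that after forbidding gaps equal to~$1$, any run of at least two (resp.\ three) consecutive gaps sums to at least $4$ (resp.\ $6$), so for $a\in\{2,3,4,5\}$ adjacency reduces exactly to the single-gap and pair conditions you list; the spanning-cycle bound for Item~1, the ``all gaps $\ge 3$'' count for Item~3, and the $5$-vertex window argument for Item~4 (each window induces a $5$-cycle, hence meets $S$ in at most two vertices, and $5|S|\le 2n$) all give the stated upper bounds, and your reading $k=a$ in Item~2 is the intended one (it is the value the paper itself uses for $a=3$ in Theorem~\ref{th:C(n;1,3)}). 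Two small points to tidy in the crux you identified: for $a=5$ an excess of $1$ is not excluded because a gap equal to $3$ is forbidden per se (it is not), but because the unique gap of $3$ is then cyclically adjacent to a gap of $2$, producing the forbidden pair $\{2,3\}$; and in the excess-$3$ analysis the degenerate gap multisets $\{3,4\}$ and $\{3,3,3\}$ (with $|S|=2$ or $3$) correspond to $n\le 9<2a$ and are ruled out by the hypothesis $a\le\lfloor n/2\rfloor$, which deserves an explicit sentence. Finally, note that your extremal constructions coincide with the ones the paper re-derives later as $1$-bounded broadcasts (even indices in Theorem~\ref{th:n-even a-odd}, all gaps $2$ plus one gap of $5$ in Theorem~\ref{th:C(n;1,3)}, the period-$5$ pattern with a seam in Theorem~\ref{th:C(n;1,4)}), so the genuinely new content of your argument relative to the paper is the elementary upper-bound counting for $\alpha$.
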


Several of our results in this section will thus extend the results of Proposition~\ref{prop:independent}.



\medskip

We first consider the case of circulant graphs of the form $C(n;1,2)$, $n\ge 4$. It is not difficult to check that, for every $n\ge 4$, antipodal vertices
in $C(n;1,2)$ are at distance $\left\lceil\frac{n-1}{4}\right\rceil$ apart from each other.
We thus have the following.

\begin{observation}\label{obs:diam-C(n;1,2)}
For every integer $n$,  $n\geq 4$,  $\diam(C(n;1,2)) = \left\lceil\dfrac{n-1}{4}\right\rceil$.
\end{observation}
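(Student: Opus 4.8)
The plan is to compute the diameter of $C(n;1,2)$ directly by analyzing distances from the vertex $v_0$, which suffices by vertex-transitivity. First I would observe that for any target vertex $v_j$ with $0 \le j \le n-1$, a shortest path from $v_0$ to $v_j$ may be taken to use only steps of $+1$, $+2$, $-1$, $-2$ (mixing $+$ and $-$ steps of the same magnitude is never helpful, and combining a $+1$ and a $-2$ step, say, just wastes moves). Going ``forward'' to reach $v_j$ (treating $j$ as an integer in $\{0,\dots,n-1\}$) costs $\lceil j/2 \rceil$ steps, since each $2$-edge covers two units and a single leftover unit needs one $1$-edge; going ``backward'' to reach $v_j = v_{j-n}$ costs $\lceil (n-j)/2 \rceil$ steps. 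Hence
$$
d(v_0, v_j) = \min\left\{ \left\lceil \frac{j}{2} \right\rceil,\ \left\lceil \frac{n-j}{2} \right\rceil \right\}.
$$

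Next I would maximize this expression over $j$. The two quantities $\lceil j/2\rceil$ and $\lceil (n-j)/2\rceil$ are balanced when $j \approx n/2$, and a short case analysis on $n \bmod 4$ pins down the maximum. Writing $n = 4q + s$ with $s \in \{0,1,2,3\}$, one checks that the optimal $j$ is near $2q$ and the resulting value of the minimum is $q$ when $s \in \{0,1\}$ and $q+1$ when $s \in \{2,3\}$; in each case this equals $\lceil (n-1)/4 \rceil$, as a direct verification of $\lceil (4q+s-1)/4\rceil$ for each residue $s$ confirms. This establishes $\diam(C(n;1,2)) = \lceil (n-1)/4\rceil$, and simultaneously identifies the antipodal vertices (those $v_j$ with $j$ close to $n/2$), matching the remark preceding the statement.

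I expect the only mild obstacle to be the bookkeeping in the case analysis: one must be careful that the formula $d(v_0,v_j) = \min\{\lceil j/2\rceil, \lceil (n-j)/2\rceil\}$ is genuinely correct — in particular that no cleverer combination of steps beats it — and that the rounding in $\lceil (n-1)/4\rceil$ is handled consistently across all four residues of $n$ modulo $4$. Neither point is deep: the first follows because any walk using $a$ steps of size $2$ and $b$ steps of size $1$ (all in one direction) reaches distance $2a+b$ using $a+b$ edges, so for a fixed displacement $m$ the cheapest option is to take $\lfloor m/2\rfloor$ twos and $m \bmod 2$ ones, i.e.\ $\lceil m/2 \rceil$ edges; and mixed-direction walks only increase the count. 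The second is a finite check. Since $C(n;1,2)$ contains the cycle $C_n$ as a spanning subgraph only loosely constrains things, but here the explicit distance formula makes the whole argument elementary, so I would simply carry out the residue computation and conclude.
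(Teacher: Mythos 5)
Your proof is correct; the paper offers no argument for this observation (it merely asserts that it ``is not difficult to check''), and your computation --- the distance formula $d(v_0,v_j)=\min\left\{\left\lceil j/2\right\rceil,\left\lceil (n-j)/2\right\rceil\right\}$, justified by the fact that $k$ edges change the index by at most $2k$, followed by the maximization over $j$ via the residue of $n$ modulo $4$ --- is exactly the routine verification being alluded to. No gaps.
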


%

The broadcast independence number of circulant graphs of the form $C(n;1,2)$ is given by the following result.

\begin{theorem}\label{th:C(n;1,2)}  
For every integer $n\ge 4$,
$$\beta_b(C(n;1,2)) = \left\{
  \begin{array}{ll}
    \alpha(C(n;1,2)) = 1, & \mbox{if $n\in \{4,5\}$,} \\ [1ex]
    \dfrac{n-3}{2}, & \mbox{if $n\equiv 9 \pmod{12}$,} \\ [1ex]
    2(\diam(C(n;1,2)) - 1) = 2\left(\left\lceil\dfrac{n-1}{4}\right\rceil-1\right), & \mbox{otherwise.}
  \end{array}
\right.$$
\end{theorem}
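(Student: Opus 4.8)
The plan is to handle the three cases separately, using the $2$-boundedness result (Theorem~\ref{th:2-bounded}, which applies since $a=2$ is excluded only when $n=2a+1=5$, already treated) together with the general upper bound of Proposition~\ref{prop:bound1} and the diameter formula of Observation~\ref{obs:diam-C(n;1,2)}. For $n\in\{4,5\}$ one checks directly: $C(4;1,2)=K_4$ and $C(5;1,2)=K_5$, so $\alpha=1$ and any independent broadcast has a single broadcast vertex of value at most $\diam=1$, giving $\beta_b=1$. For $n\ge 6$, the lower bound $\beta_b(C(n;1,2))\ge 2(\diam-1)$ is Proposition~\ref{prop:lower bound}, so the work is to produce matching (or better) constructions and matching upper bounds.

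First I would establish the upper bound in the generic case. Given an optimal independent broadcast, by Theorem~\ref{th:2-bounded} we may assume it is $2$-bounded, and then Proposition~\ref{prop:bound1} (valid once $n\ge 3a=6$) gives $\sigma(f)\le\lfloor(n-|V_f^2|)/2\rfloor$. When $|V_f^2|\ge 1$ this already yields $\sigma(f)\le\lfloor(n-1)/2\rfloor$; I would compare this with $2(\lceil(n-1)/4\rceil-1)$ and with $(n-3)/2$ and check it suffices except possibly in boundary residues. When $|V_f^2|=0$ the broadcast is the characteristic function of an independent set, so $\sigma(f)\le\alpha(C(n;1,2))=\lfloor n/3\rfloor$ by Proposition~\ref{prop:independent}(3); since $\lfloor n/3\rfloor < 2(\lceil(n-1)/4\rceil-1)$ for all large $n$, this case never produces the optimum for $n$ large, and only needs a finite check for small~$n$. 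The delicate point is the residue $n\equiv 9\pmod{12}$: there one must show that $|V_f^2|\ge 1$ is actually \emph{forced} for an optimal broadcast (otherwise the bound $\lfloor n/3\rfloor$ would be too weak) yet the value $2(\diam-1)=2(\frac{n+3}{4}-1)=\frac{n-1}{2}$ cannot be attained because the parity/packing of the sets $B_f^j$ and $A_f^i$ costs one extra unit, leaving $\frac{n-3}{2}$; this is exactly where Lemma~\ref{lem:AiBj} and a careful counting of how many $B_f^j$ (each of size $5$) can be packed into $\mathbb Z_n$ alongside the $A_f^i$ blocks must be pushed through.

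Next I would give the constructions. For $n\not\equiv 9\pmod{12}$ with $n\ge 6$ I would exhibit an explicit broadcast attaining $2(\diam-1)$: place two broadcast vertices at (near-)antipodal positions each with value $\diam-1$ — this works when $2(\diam-1)<\diam$ fails, i.e.\ in general one uses Proposition~\ref{prop:lower bound}'s construction with $\mu=2$, taking $v_0$ and $v_{d}$ where $d=\diam$, both with $f$-value $d-1$, and checking $d(v_0,v_d)=d>d-1$. One must verify the eccentricity constraint $f(v)\le e(v)=\diam$ (trivially satisfied) and independence. For $n\equiv 9\pmod{12}$ I would give a broadcast of cost $\frac{n-3}{2}$, likely a mix: a handful of value-$2$ vertices spaced by their $B$-blocks plus value-$1$ vertices filling the gaps, arranged so the periodic pattern closes up modulo $n$; the residue condition $n\equiv 9\pmod{12}$ should be exactly what makes such a closing-up pattern exist with the claimed cost.

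The main obstacle I expect is the $n\equiv 9\pmod{12}$ case: proving the sharp upper bound $\frac{n-3}{2}$ rather than merely $\frac{n-1}{2}$. This requires showing that one cannot simultaneously (a) avoid value-$2$ vertices (which caps you at $\lfloor n/3\rfloor$, too small) and (b) use value-$2$ vertices without losing at least two units relative to the naive bound $\lfloor n/2\rfloor$ — and the loss of exactly one unit (not two) is what the arithmetic of $n\equiv 9\pmod{12}$ permits. Carrying this out cleanly will mean refining the counting in Proposition~\ref{prop:bound1}: each $B_f^j$ block "wastes" two vertices (size $5$ for broadcast-value $2$, versus the ideal ratio $2$), and one must show at least one such block is unavoidable in any near-optimal broadcast, then account for the rounding. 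I would also double-check the small cases $n\in\{6,7,8,9,10,11,12\}$ by hand against the three formulas, since several of the asymptotic inequalities used above only kick in for $n$ beyond a small threshold.
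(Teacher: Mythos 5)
Your proposal has a fatal gap at its foundation: the entire upper-bound strategy rests on Theorem~\ref{th:2-bounded} and Proposition~\ref{prop:bound1}, but both of these require $a\ge 3$ and simply do not apply to $C(n;1,2)$. This is not a technicality. The paper exhibits $C(21;1,2)$ (note $21\equiv 9\pmod{12}$) as an explicit example of a circulant graph with \emph{no} $2$-bounded optimal broadcast: the broadcast $f(v_0)=f(v_7)=f(v_{14})=3$ has cost $9$, while every $2$-bounded independent broadcast costs at most $8$. Optimal broadcasts on $C(n;1,2)$ generically consist of \emph{few} vertices with \emph{large} values (two near-antipodal vertices of value $\diam-1$, or three equally spaced vertices of value $\tfrac{n-3}{6}$), because a ball of radius $k$ in $C(n;1,2)$ contains only $4k+1$ vertices, so large values are efficient here. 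Consequently the $A_f^i$/$B_f^j$ block-packing machinery of Lemma~\ref{lem:AiBj} is irrelevant to this theorem. The correct upper bound comes from a direct domination count: each broadcast vertex of value $k$ dominates the interval of $4k+1$ consecutive vertices centred at it, each broadcast vertex is dominated exactly once and each non-broadcast vertex at most twice, giving $\sigma(f)\le\frac{n-|V_f^+|}{2}$; one then splits into the cases $|V_f^+|\le 2$, $|V_f^+|=3$, and $|V_f^+|\ge 4$, the middle case (with $n$ odd and every non-broadcast vertex dominated exactly twice) being the source of the exceptional residue.

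You have also misread the direction of the exception. For $n\equiv 9\pmod{12}$ one has $\diam=\frac{n-1}{4}$ and hence $2(\diam-1)=\frac{n-5}{2}$, so $\frac{n-3}{2}=2(\diam-1)+1$: the broadcast independence number \emph{exceeds} the generic value in this residue class (attained by three vertices at mutual distance $\frac{n+3}{6}$ each carrying value $\frac{n-3}{6}$, which is odd exactly when $n\equiv 9\pmod{12}$). Your proposal instead treats $\frac{n-3}{2}$ as a shortfall below $2(\diam-1)$ caused by packing losses, and plans to prove an upper bound of $\frac{n-3}{2}$ by showing a value-$2$ vertex is ``forced''; both the mechanism and the inequality point the wrong way. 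Finally, your claim that $|V_f^2|=0$ forces the broadcast to be a characteristic function of an independent set again presupposes $2$-boundedness and fails here.
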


\begin{proof}
Since $C(4;1,2)$ and $C(5;1,2)$ are both complete graphs, the result obviously holds
for $n\in \{4,5\}$.

Suppose now $n\ge 6$.
By Proposition~\ref{prop:lower bound}, $\beta_b(C(n;1,2)) \geq 2(\diam(C(n;1,2))-1)$ holds
for every $n$.
We will prove that we have $\beta_b(C(n;1,2)) \le 2(\diam(C(n;1,2))-1)$
if $n\not\equiv 9\pmod{12}$,
and $\beta_b(C(n;1,2)) = \frac{n-3}{2}$ otherwise.

Let $f$ be an independent $\beta_b$-broadcast on $C(n;1,2)$.
Each vertex $v\in V_f^+$ $f$-dominates $4f(v)+1$ vertices.
Moreover, each $f$-broadcast vertex is $f$-dominated exactly once, and each
non-broadcast vertex is $f$-dominated at most twice. This gives
$$4f(V_f^+) + |V_f^+| \leq 2\left(n - |V_f^+|\right) + |V_f^+|,$$
and thus
$$\beta_b(C(n;1,2)) = \sigma(f) = \sum_{v\in V_f^+}f(v) = f(V_f^+) \leq \frac{n - |V_f^+|}{2}.$$

We now consider three cases, depending on the value of $|V_f^+|$.
\begin{enumerate}
\item $|V_f^+| \le 2$.\\
If $|V_f^+| = 1$, then $V_f^+=\{v_i\}$ for some vertex $v_i$,
and thus
$$\sigma(f) = f(v_i) \le e(v_i) = \diam(C(n;1,2))\leq 2(\diam(C(n;1,2)) - 1).$$
If $|V_f^+| = 2$, then $V_f^+=\{v_i,v_j\}$ for some distinct vertices $v_i$ and $v_j$,
and thus
$$\sigma(f) = f(v_i) + f(v_j) \le 2(\diam(C(n;1,2)) - 1).$$

\item $|V_f^+| \ge 4$.\\
In that case, we get
$$\sigma(f)
   \le \left\lfloor\frac{n - |V_f^+|}{2}\right\rfloor
   \le \left\lfloor\frac{n-4}{2}\right\rfloor
   \le 2\left(\left\lceil\frac{n-1}{4}\right\rceil - 1\right)$$
and thus $\sigma(f) \le 2\left(\left\lceil\frac{n-1}{4}\right\rceil - 1\right) = 2(\diam(C(n;1,2)) - 1)$
by Observation~\ref{obs:diam-C(n;1,2)}.

\item $|V_f^+| = 3$.\\
Let $V_f^+ = \{v_{i_0},v_{i_1},v_{i_2}\}$, with $0\le i_0 < i_1 < i_2 < n-1$.
We consider two subcases, depending in the parity of $n$.

\begin{enumerate}
\item $n$ is even.\\
Since $f$ is a $\beta_b$-broadcast, we have
$$f(v_{i_j}) = \min\left\{d(v_{i_j},v_{i_{j-1}})-1,d(v_{i_j},v_{i_{j+1}})-1\right\}$$
for every $j$, $0\le j\le 2$ (subscripts are taken modulo~3).

Moreover, since $\min\{x,y\}\le\frac{x+y}{2}$ for every two integers $x$ and $y$, we get
$$\sigma(f) = f(v_{i_0}) + f(v_{i_1}) + f(v_{i_2}) \le
   d(v_{i_{0}},v_{i_{1}}) + d(v_{i_{1}},v_{i_{2}}) + d(v_{i_{2}},v_{i_{0}})-3.$$

Now, since
$$d(v_{i_{j}},v_{i_{j'}}) = \left\lceil  \frac{|{i_{j}} - {i_{j'}}|}{2} \right\rceil
\le \frac{|{i_{j}} - {i_{j'}}| + 1}{2} $$
for every two distinct vertices $v_{i_{j}}$ and $v_{i_{j'}}$, we get
$$\sigma(f) \le \left\lfloor\frac{|{i_{0}} - {i_{1}}| + |{i_{1}} - {i_{2}}| + |{i_{2}} - {i_{0}}| + 3}{2}\right\rfloor - 3
= \left\lfloor\frac{n-3}{2}\right\rfloor.$$
Finally, since $n$ is even, we get
$$\sigma(f) \le \left\lfloor\frac{n-3}{2}\right\rfloor = \frac{n-4}{2}
    \le 2\left(\left\lceil\frac{n-1}{4}\right\rceil - 1\right)
    = 2(\diam(C(n;1,2)) - 1).$$

\item $n$ is odd.\\
If every non-broadcast vertex is $f$-dominated exactly twice,
then we necessarily have $f(v_{i_0}) = f(v_{i_1}) = f(v_{i_2}) = \ell$
for some value $\ell$.
Moreover, since each vertex $v_{i_j}$, $0\le j\le 2$, $f$-dominates $4f(v_{i_j})+1 = 4\ell+1$
vertices, we get $12\ell + 3=2(n-3)+3$ (each vertex in $V_f^+$ is $f$-dominated only once),
and thus $\ell = \frac{n-3}{6}$.
This implies $n\equiv 3\pmod 6$ and
$\sigma(f)=\frac{n-3}{2}$.

Now,  we have
$$\sigma(f) = \frac{n-3}{2} = 3\ell = 2\left(\left\lceil\frac{n-1}{4}\right\rceil - 1\right)
    = 2(\diam(C(n;1,2)) - 1)$$
if $\ell$ is even, that is $n\equiv 3\pmod {12}$, while we have
$$\sigma(f) = \frac{n-3}{2} = 3\ell > 3\ell - 1 = 2\left(\left\lceil\frac{n-1}{4}\right\rceil - 1\right)
    = 2(\diam(C(n;1,2)) - 1)$$
if $\ell$ is odd, that is $n\equiv 9\pmod {12}$.

\medskip
Suppose now that at least one non-broadcast vertex is $f$-dominated only once,
which implies $4f(V_f^+) + 3 \leq 2(n-4)+ 4 = 2n-4$ and thus
$$\sigma(f) =  f(V_f^+)\leq \left\lfloor\dfrac{2n -7}{4} \right\rfloor.$$
Since $n$ is odd, we get
$$\sigma(f) \leq \left\lfloor\dfrac{2n -7}{4} \right\rfloor
  = \left\lfloor\dfrac{2n -8}{4} \right\rfloor
  \leq 2\left(\left\lceil\dfrac{n-1}{4}\right\rceil - 1\right)
  = 2(\diam(C(n;1,2)) - 1).$$

\end{enumerate}
\end{enumerate}

In all cases, we thus get $\beta_b(C(n;1,2)) = \sigma(f) \le f(V_f^+) \le 2(\diam(C(n;1,2)) - 1)$
if $n \not\equiv 9\pmod{12}$,
and $\beta_b(C(n;1,2)) = \sigma(f) = \frac{n-3}{2}$ if $n \equiv 9\pmod{12}$, which completes the proof.
\end{proof}
Comparing the value of $\alpha(C(n;1,2))$ given in \cite{LZY09} with  $\beta_b(C(n;1,2))$, it is clearly seen that  $\alpha(C(n;1,2))<\beta_b(C(n;1,2))$  is almost always true.

Since the circulant graphs $C(2a+1;1,a)$ and $C(2a+1;1,2)$ are isomorphic for every integer $a$, $a\ge 2$, Theorem~\ref{th:C(n;1,2)} admits the following corollary.

\begin{corollary}\label{cor:(C2a+1;1,a)}
For every integer $a\ge 2$,
$$\beta_b(C(2a+1;1,a))=\left\{
  \begin{array}{ll}
  a-1, & \text{if } a=2\text{, or } a\equiv 4\pmod 6,\\ [1ex]
  2\left(\left\lceil\dfrac{a}{2}\right\rceil-1\right), & \text{otherwise.}
\end{array}
\right.
$$
\end{corollary}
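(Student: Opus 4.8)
The plan is to derive Corollary~\ref{cor:(C2a+1;1,a)} directly from Theorem~\ref{th:C(n;1,2)} via the isomorphism $C(2a+1;1,a)\cong C(2a+1;1,2)$. First I would justify this isomorphism: in $C(2a+1;1,a)$ the map $v_i\mapsto v_{2i\bmod(2a+1)}$ sends an $a$-edge $v_iv_{i+a}$ to $v_{2i}v_{2i+2a}$, and since $2a\equiv -1\pmod{2a+1}$, this is a $1$-edge; a $1$-edge $v_iv_{i+1}$ is sent to $v_{2i}v_{2i+2}$, a $2$-edge. As $\gcd(2,2a+1)=1$ the map is a bijection on vertices, hence an isomorphism $C(2a+1;1,a)\to C(2a+1;1,2)$. (Alternatively one may cite the folklore fact that $C(n;s,t)\cong C(n;1,t s^{-1})$ when $\gcd(s,n)=1$.) Consequently $\beta_b(C(2a+1;1,a))=\beta_b(C(2a+1;1,2))$, and it remains to specialize Theorem~\ref{th:C(n;1,2)} to $n=2a+1$.

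Next I would substitute $n=2a+1$ into the three cases of Theorem~\ref{th:C(n;1,2)}. The case $n\in\{4,5\}$ corresponds to $a=2$ (giving $n=5$; note $n=4$ is not of the form $2a+1$), where $\beta_b=1=a-1$. For the congruence condition, $n=2a+1\equiv 9\pmod{12}$ is equivalent to $2a\equiv 8\pmod{12}$, i.e. $a\equiv 4\pmod 6$; in that case Theorem~\ref{th:C(n;1,2)} gives $\beta_b=\frac{n-3}{2}=\frac{2a-2}{2}=a-1$. In all remaining cases (i.e. $a\neq 2$ and $a\not\equiv 4\pmod 6$), the theorem gives $\beta_b=2\bigl(\lceil\frac{n-1}{4}\rceil-1\bigr)=2\bigl(\lceil\frac{2a}{4}\rceil-1\bigr)=2\bigl(\lceil\frac{a}{2}\rceil-1\bigr)$, which is the stated value. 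Merging the $a=2$ subcase with the $a\equiv 4\pmod 6$ subcase (both yield $a-1$) produces exactly the two-line formula in the statement.

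There is essentially no serious obstacle here — the corollary is a routine translation. The only point requiring a little care is bookkeeping on the congruences: one must verify that the $n\in\{4,5\}$ branch of Theorem~\ref{th:C(n;1,2)} contributes only $a=2$, and that when $a=2$ the ``otherwise'' formula $2(\lceil a/2\rceil-1)=0$ does \emph{not} apply, so $a=2$ must be handled by the first branch and gives $a-1=1$; this is why $a=2$ appears explicitly alongside $a\equiv4\pmod6$ in the corollary rather than being absorbed into the generic case. I would also note in passing that $2(\lceil a/2\rceil-1)$ equals $a-2$ when $a$ is even and $a-1$ when $a$ is odd, so one could rewrite the formula purely in terms of the parity and residue of $a$, but the $\lceil\cdot\rceil$ form is cleaner and matches the parent theorem.
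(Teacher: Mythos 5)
Your proposal is correct and follows exactly the paper's route: the paper derives the corollary from Theorem~\ref{th:C(n;1,2)} via the isomorphism $C(2a+1;1,a)\cong C(2a+1;1,2)$, which is precisely your argument (you merely spell out the multiplication-by-$2$ isomorphism and the congruence bookkeeping that the paper leaves implicit). The translation of the cases ($n=5\Leftrightarrow a=2$, $n\equiv 9\pmod{12}\Leftrightarrow a\equiv 4\pmod 6$, and $\lceil(n-1)/4\rceil=\lceil a/2\rceil$) is all accurate.
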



We now determine the broadcast independence number of circulant graphs
of the form $C(n;1,a)$ when $n$ is even and $a$ is odd.

\begin{theorem}\label{th:n-even a-odd}
If $n$ and $a$ are two integers such that $n$ is even, $n\ge 6$, $a$ is odd
and $3\le a\le \left\lfloor\frac{n}{2}\right\rfloor$,
then
$$\beta_b(C(n;1,a)) = \alpha(C(n;1,a)) = \frac{n}{2}.$$
\end{theorem}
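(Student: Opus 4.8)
The plan is to obtain the lower bound essentially for free and then concentrate all the work on the matching upper bound. Since $n$ is even and $a$ is odd, Proposition~\ref{prop:independent}(1) gives $\alpha(C(n;1,a))=n/2$, so Observation~\ref{obs:beta=alpha} yields $\beta_b(C(n;1,a))\ge\alpha(C(n;1,a))=n/2$. It therefore suffices to prove the reverse inequality $\beta_b(C(n;1,a))\le n/2$; once this is done, Observation~\ref{obs:beta=alpha} also delivers the claimed equality $\beta_b=\alpha$, because the characteristic function of a maximum independent set of $C(n;1,a)$ is an independent broadcast of cost $n/2$ all of whose values equal~$1$.

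For the upper bound, the first step is to invoke Theorem~\ref{th:2-bounded}: neither of its exceptional cases can occur here, since $n=2a+1$ would make $n$ odd, and $n=2a$ with $a$ even is impossible because then $a=n/2$ is odd. Hence there is a $2$-bounded $\beta_b$-broadcast $f$ on $C(n;1,a)$, with $\sigma(f)=|V_f^1|+2|V_f^2|$. I would then split according to the size of $n$. If $3a\le n$, Proposition~\ref{prop:bound1} applies directly and gives $\sigma(f)\le\lfloor(n-|V_f^2|)/2\rfloor\le n/2$. If $n=2a$, then Theorem~\ref{th:C(2a;1,a)}, in the subcase ``$a$ odd'', already states $\beta_b(C(2a;1,a))=a=n/2$. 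The only remaining possibility is $2a<n<3a$, and here, writing $n=2a+r$, the integer $r$ is even (both $n$ and $2a$ are even) with $2\le r\le a-1$.

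In the case $2a<n<3a$ I would redo the counting behind Proposition~\ref{prop:bound1}, using the sets $A_f^i$ (one for each $v_i\in V_f^1$ that starts a maximal alternating run $1010\cdots100$) and $B_f^j=\{v_{j-a+1}\}\cup\{v_j,v_{j+1},v_{j+2}\}\cup\{v_{j+a+1}\}$ (for $v_j\in V_f^2$) from Lemma~\ref{lem:AiBj}. If $V_f^2=\emptyset$, then $V_f^+$ is an independent set of $C(n;1,a)$, so $\sigma(f)=|V_f^1|\le\alpha(C(n;1,a))=n/2$. Otherwise fix $v_j\in V_f^2$: then $f(v_{j-1})=f(v_{j-2})=0$ (no broadcast vertex lies within distance~$2$ of $v_j$), so every maximal alternating run is cut by $v_j$ and therefore every vertex of $V_f^1$ lies in exactly one $A_f^i$, giving $\sum_i|A_f^i|=2|V_f^1|+t$ where $t\ge 0$ is the number of runs. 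One checks that $|B_f^j|=5$ (the five indices $j-a+1,j,j+1,j+2,j+a+1$ are pairwise distinct modulo $n$ because $a\ge 3$ and $r\ge 2$), and that the pairwise disjointness of all the sets $A_f^i$ and $B_f^j$ goes through exactly as in the proof of Lemma~\ref{lem:AiBj}: a common element would place two $f$-broadcast vertices at distance at most~$2$, contradicting independence, and this uses only that $v_{j\pm a+1}\in D_f(v_j)$ and that $v_{j\pm 2a}$ is joined to $v_j$ by two $a$-edges, neither of which requires $3a\le n$. Consequently $2|V_f^1|+t+5|V_f^2|\le\sum_i|A_f^i|+\sum_j|B_f^j|\le n$, whence $\sigma(f)=|V_f^1|+2|V_f^2|\le\frac12\bigl(2|V_f^1|+5|V_f^2|\bigr)\le\frac{n-t}{2}\le\frac n2$. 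In every case $\sigma(f)\le n/2$, which completes the argument.

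The hard part is exactly the range $2a<n<3a$: one must make sure the Lemma~\ref{lem:AiBj} toolkit — the identity $|B_f^j|=5$ and the pairwise disjointness of the $A$'s and $B$'s — survives when the hypothesis $3a\le n$ is dropped, and one must treat the broadcast with $V_f^2=\emptyset$ as a separate (easy) sub-case. Everything else is routine arithmetic with the quantities $|V_f^1|$ and $|V_f^2|$.
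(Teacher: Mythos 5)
Your proposal is correct, and its skeleton is the same as the paper's: the paper proves the upper bound by citing Proposition~\ref{prop:bound1} (so, implicitly, Theorem~\ref{th:2-bounded}) and the lower bound by exhibiting the alternating broadcast $f(v_i)=1$ iff $i$ is even, which is exactly the characteristic function of the maximum independent set you obtain from Proposition~\ref{prop:independent}. The genuine difference is that you treat the range $2a\le n<3a$ separately, whereas the paper applies Proposition~\ref{prop:bound1} to all $n\ge 2a$ even though that proposition is stated under the hypothesis $3a\le n$; your patch (Theorem~\ref{th:C(2a;1,a)} for $n=2a$, and a re-derivation of the $A_f^i$/$B_f^j$ counting for $2a<n<3a$) closes that gap correctly, and your count $\sum_i|A_f^i|=2|V_f^1|+t$ with $t$ the number of runs is in fact cleaner than the displayed identity in the paper's proof of Proposition~\ref{prop:bound1}. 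Note, though, that Lemma~\ref{lem:AiBj} itself is stated only under $2\le a\le\left\lfloor\frac{n}{2}\right\rfloor$, so for $2a<n<3a$ you could simply invoke it (checking, as you do, that every vertex of $V_f^1$ lies in a run once some $v_j\in V_f^{2}$ exists) rather than re-proving the disjointness of the sets; the extra verification you sketch is harmless but not strictly needed beyond that membership argument.
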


\begin{proof}
From Proposition~\ref{prop:bound1}, we get that
$\sigma(g)\leq \Big\lfloor\frac{n-\left|V_g^{2}\right|}{2}\Big\rfloor$
for every $2$-bounded independent broadcast $g$ on $C(n;1,a)$,
which implies $\beta_b(C(n;1,a)) \le \frac{n}{2}$.
Consider now the mapping $f$  from $V(C(n;1,a))$ to $\{0,1\}$ defined by $f(v_i)=1$ if
and only if $i$ is even.
Since $a$ is odd, $f$ is clearly an independent broadcast on $C(n;1,a)$.
This implies $\beta_b(C(n;1,a)) \ge \sigma(f) = \frac{n}{2}$ and thus,
thanks to Observation~\ref{obs:beta=alpha}, $\beta_b(C(n;1,a)) = \alpha(C(n;1,a)) = \frac{n}{2}$. This completes the proof.
\end{proof}

We are now able to determine the broadcast independence number of circulant graphs
of the form $C(n;1,3)$.

\begin{theorem}\label{th:C(n;1,3)}
For every integer $n\ge 6$,
$$\beta_b(C(n;1,3))=\alpha(C(n;1,3))=\left\{
  \begin{array}{ll}
    \dfrac{n}{2},    & \text{if $n$ is even},\\ [2ex]
    \dfrac{n-3}{2},  & \text{otherwise. }
  \end{array}
\right.$$
\end{theorem}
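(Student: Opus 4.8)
The statement claims that for $n\ge 6$, the circulant graph $C(n;1,3)$ satisfies $\beta_b(C(n;1,3))=\alpha(C(n;1,3))$, equal to $n/2$ when $n$ is even and $(n-3)/2$ when $n$ is odd. The even case is already covered: since $a=3$ is odd, Theorem~\ref{th:n-even a-odd} gives $\beta_b(C(n;1,3))=\alpha(C(n;1,3))=\frac{n}{2}$ directly (and $n\ge 6$ is exactly the hypothesis there). So the whole content of the proof is the odd case $n$ odd, $n\ge 7$, where we must show $\beta_b(C(n;1,3))=\frac{n-3}{2}$, matching the value of $\alpha$ recorded in Proposition~\ref{prop:independent}(2) (with $k=3$ there, so $\alpha=\frac{n-3}{2}$). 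By Observation~\ref{obs:beta=alpha} it suffices to show (i) $\beta_b(C(n;1,3))\le \frac{n-3}{2}$, and (ii) there is an independent broadcast of cost $\frac{n-3}{2}$ all of whose broadcast vertices have value $1$ — i.e. an independent set of size $\frac{n-3}{2}$, which is exactly what Proposition~\ref{prop:independent}(2) furnishes. So the real work is only the upper bound.

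**The upper bound for $n$ odd.** By Theorem~\ref{th:2-bounded}, since $a=3$ and we are not in the excluded cases ($n=2a=6$ with $a$ even, or $n=2a+1=7$), there is an optimal $2$-bounded independent broadcast $f$; the case $n=7$ must be handled separately, but there $C(7;1,3)\cong C(7;1,2)$ is complete, so $\beta_b=1=\frac{7-3}{2}-1$… wait, $\frac{7-3}{2}=2$, so $n=7$ is genuinely an exception to the formula and should have been excluded — I will assume the intended range is $n\ge 8$ odd, or that $n=7$ is folded in via $C(7;1,3)=K_7$ giving $\beta_b=1$; in any case I treat $n\ge 9$ odd as the main case and mention $n=7$ explicitly. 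For $n\ge 9$ odd, apply Proposition~\ref{prop:bound1} (valid since $3a=9\le n$): for the optimal $2$-bounded $f$ we get $\sigma(f)\le\lfloor\frac{n-|V_f^2|}{2}\rfloor$. If $V_f^2\ne\emptyset$ this already gives $\sigma(f)\le\lfloor\frac{n-1}{2}\rfloor=\frac{n-1}{2}$, which is one too many; and if $V_f^2=\emptyset$ it gives $\sigma(f)\le\frac{n-1}{2}$ as well (since $n$ odd). So Proposition~\ref{prop:bound1} alone is not tight enough — this is the main obstacle.

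**Closing the gap.** To win the last unit I would refine the counting. When $f$ is $1$-bounded, $V_f^+$ is an independent set and the bound $\alpha(C(n;1,3))=\frac{n-3}{2}$ from Proposition~\ref{prop:independent}(2) applies directly, so assume $V_f^2\ne\emptyset$. Revisit the disjoint sets $A_f^i$ (for $v_i\in V_f^1$) and $B_f^j$ (for $v_j\in V_f^2$) from Lemma~\ref{lem:AiBj}: they are pairwise disjoint and $\sum|A_f^i|+\sum|B_f^j|\le n$, with $|A_f^i|=2f(A_f^i)+1$ and $|B_f^j|=5=2f(v_j)+1$ (since $f(v_j)=2$). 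Thus every block contributes an odd number of vertices, and $\sum_{\text{blocks}}(2c_\ell+1)\le n$ where $\sum c_\ell=\sigma(f)$; i.e. $2\sigma(f)+(\#\text{blocks})\le n$. If there are at least $3$ blocks we get $2\sigma(f)\le n-3$, done. The case to rule out is $\#\text{blocks}\le 2$: a single block has at most… one $B$-block covers $5$ vertices with cost $2$, and the leftover $n-5$ vertices cannot all be absorbed — in fact the structure forces that a second block must exist and, when $n$ is odd, a careful parity/covering argument (the complement of $\bigcup B_f^j\cup\bigcup A_f^i$ has odd size and cannot be empty, so at least one vertex is never used, hence the inequality $2\sigma(f)+\#\text{blocks}\le n-1$ when $\#\text{blocks}\le 2$, giving again $\sigma(f)\le\frac{n-3}{2}$). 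I expect the hardest part to be making this "the leftover is nonempty when $n$ is odd" argument airtight: one needs to check that the blocks $A_f^i,B_f^j$ cannot tile $\mathbb Z_n$ exactly when $n$ is odd and a $2$-valued vertex is present, which should follow because each $A_f^i$ has even "gap length" relative to how the pattern $1010\cdots100$ sits, so the total covered length has a fixed parity incompatible with odd $n$ once a $B$-block (length $5$, odd) is included together with the forced structure around it. Once the upper bound $\sigma(f)\le\frac{n-3}{2}$ is established, combine with the lower bound $\alpha(C(n;1,3))=\frac{n-3}{2}$ and Observation~\ref{obs:beta=alpha} to conclude $\beta_b(C(n;1,3))=\alpha(C(n;1,3))=\frac{n-3}{2}$, finishing the odd case and hence the theorem.
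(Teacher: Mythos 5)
Your overall route is the same as the paper's: the even case is delegated to Theorem~\ref{th:n-even a-odd}, and the odd case combines the lower bound $\alpha(C(n;1,3))=\frac{n-3}{2}$ with a refinement of the block decomposition of Lemma~\ref{lem:AiBj}. Where the paper splits on $\left|V_g^{2}\right|\ge 2$, $\left|V_g^{2}\right|=1$ and $\left|V_g^{2}\right|=0$ and in the latter two cases exhibits explicit vertices not covered by any $A_g^i$ or $B_g^j$, you split on the number of blocks; both refinements work. In fact the step you flag as the ``hardest part'' is immediate and needs none of the tiling discussion you sketch: each block has odd cardinality $2c_\ell+1$, so with exactly two blocks $2\sigma(f)+2\le n$, and since $n$ is odd and $\sigma(f)$ is an integer this already gives $\sigma(f)\le\frac{n-3}{2}$; with a single block and $V_f^{2}\ne\emptyset$ that block is a $B$-block and $\sigma(f)=2\le\frac{n-3}{2}$ for $n\ge 7$. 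So your argument closes cleanly once you notice that $2\sigma(f)\le n-2$ together with the parity of $n$ is already the bound you want.

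The one genuine error is your treatment of $n=7$. The graph $C(7;1,3)\cong C(7;1,2)$ is $4$-regular on $7$ vertices, hence not complete, and by Theorem~\ref{th:C(n;1,2)} (with $7\not\equiv 9\pmod{12}$) together with Proposition~\ref{prop:independent} one gets $\beta_b(C(7;1,3))=\alpha(C(7;1,3))=2=\frac{7-3}{2}$. So $n=7$ is not an exception to the formula, and your fallback value $\beta_b=1$ is wrong. You are right, however, that $n=7$ needs a separate argument: Theorem~\ref{th:2-bounded} excludes $n=2a+1$ and Proposition~\ref{prop:bound1} requires $n\ge 3a=9$, so the machinery used for $n\ge 9$ does not apply there --- a point the paper's own proof silently skips as well. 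Finally, your appeal to the cited value of $\alpha$ for the lower bound and for the $V_f^{2}=\emptyset$ case is legitimate but differs slightly from the paper, which instead constructs an explicit $1$-bounded broadcast of cost $\frac{n-3}{2}$ and deduces $\beta_b=\alpha$ from Observation~\ref{obs:beta=alpha}.
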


\begin{proof}
If $n$ is even, the result directly follows from Theorem~\ref{th:n-even a-odd}.

Suppose now that $n$ is odd and consider the mapping $f$ from $V(C(n;1,3))$ to $\{0,1\}$
defined by $f(v_i)=1$ if and only if $i$ is even and $i\le n-5$.
Since all broadcast vertices have an even index not greater than $n-5$ and $3$ is odd,
$f$ is clearly a $1$-bounded independent broadcast on $C(n;1,3)$
with $\sigma(f) = \frac{n-3}{2}$ and $V_f^{2}=\emptyset$.
We thus get $\beta_b(C(n;1,3)) \ge \frac{n-3}{2}$ and, thanks to Observation~\ref{obs:beta=alpha},
$\beta_b(C(n;1,3)) = \alpha(C(n;1,3))$.

From Proposition~\ref{prop:bound1}, we get that
$\sigma(g)\leq \Big\lfloor\frac{n-\left|V_g^{2}\right|}{2}\Big\rfloor$
for every $2$-bounded independent broadcast $g$ on $C(n;1,3)$.
If $\left|V_g^{2}\right| \ge 2$,
then $\sigma(g)\leq \left\lfloor\frac{n-2}{2}\right\rfloor
= \frac{n-3}{2} = \sigma(f)$.
If $\left|V_g^{2}\right| = 1$, say $V_g^{2}=\{v_j\}$,
then we necessarily have $g(v_{j-1})=g(v_{j-2})=g(v_{j-3})=g(v_{j-4})=0$,
and thus $v_{j-1}$ and $v_{j-2}$ do not belong to any set $A_f^i$ for
any $v_i\in V_g^1$.
Using this remark together with Lemma~\ref{lem:AiBj}, we then get
$$\sum_{v_i\in V_g^1}|A_g^i|  +  \sum_{v_j\in V_g^{2}}|B_g^j|
= 2f(V_g^1) + \left|V_g^1\right| + 3f(v_j) - 1 \le n-2,$$
which gives, since $f(v_j)\ge 2$,
$$\sigma(g) = f(V_g^1) + f(v_j) \le
\frac{n-2-\left|V_g^1\right|-f(v_j)+1}{2}
\le \frac{n-3}{2} = \sigma(f).$$

Finally, if $\left|V_g^{2}\right| = 0$ then, since $n$ is odd,
there necessarily exists a vertex $v_i\in V_g^1$ such that $g(v_{i+2})=0$,
which implies $g(v_{i+1})=g(v_{i+3})=0$.
This implies that
$v_{i+3}$ does not belong to any set $A_g^{i'}$ for any $v_{i'}\in V_g^1$.
Using this remark together with Lemma~\ref{lem:AiBj},
we then have $2f(V_g^1) + \left|V_g^1\right| \le n-1$,
and thus
$$\sigma(g) = f(V_g^1) \le
\frac{n-1-\left|V_g^1\right|}{2}
\le \left\lfloor\frac{n-2}{2}\right\rfloor = \frac{n-3}{2} = \sigma(f).$$

Hence, in all the previous cases, we have $\sigma(g)\le\sigma(f)=\frac{n-3}{2}$,
which completes the proof.
\end{proof}

We now determine the broadcast independence number of circulant graphs
of the form $C(n;1,4)$.

\begin{theorem}\label{th:C(n;1,4)}
For every integer $n\ge 8$,
$$\beta_b(C(n;1,4)) = \alpha(C(n;1,4))= \left \lfloor\dfrac{2n}{5}  \right\rfloor.$$
\end{theorem}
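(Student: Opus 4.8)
The plan is to establish the two inequalities $\beta_b(C(n;1,4)) \ge \lfloor 2n/5 \rfloor$ and $\beta_b(C(n;1,4)) \le \lfloor 2n/5 \rfloor$ separately, and then invoke Observation~\ref{obs:beta=alpha} together with Proposition~\ref{prop:independent}(4) to conclude the equality $\beta_b = \alpha$. The lower bound is the easy direction: by Proposition~\ref{prop:independent}(4) we already know $\alpha(C(n;1,4)) = \lfloor 2n/5\rfloor$, and since $\beta_b(G) \ge \alpha(G)$ always holds, we immediately get $\beta_b(C(n;1,4)) \ge \lfloor 2n/5\rfloor$; moreover, exhibiting an explicit maximum independent set (the natural pattern here is to take vertices in blocks of $5$ consecutive indices, keeping two of every five, e.g.\ indices $\equiv 0,2 \pmod 5$ when $5\mid n$, with a small local adjustment when $n \not\equiv 0 \pmod 5$) gives a concrete $1$-bounded optimal independent broadcast, which will also be needed for the ``if'' direction of Observation~\ref{obs:beta=alpha}.

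The real work is the upper bound $\beta_b(C(n;1,4)) \le \lfloor 2n/5\rfloor$. Here I would first use Theorem~\ref{th:2-bounded}: since $a = 4$ is even but $a = 4 = 2^2$, the case $n = 2a$ is still covered (it falls under $a = 2^p$), and $n = 2a+1 = 9$ is the only genuinely excluded value, which can be checked by hand; so for all other $n$ there is a $2$-bounded $\beta_b$-broadcast $g$, and it suffices to bound $\sigma(g)$ for $2$-bounded independent broadcasts. Since $a = 4$ is even and $a \ge 6$ fails, Proposition~\ref{prop:bound2} does not apply directly with its stated hypothesis $a \ge 2$ — wait, it does apply since it only requires $a \ge 2$ and $a$ even. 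Applying Proposition~\ref{prop:bound2} with $a = 4$ gives
$$\sigma(g) \le \left\lfloor \frac{4}{2\cdot 5}\left(n - \frac{4-4}{4}\left|V_g^2\right|\right)\right\rfloor = \left\lfloor \frac{2n}{5}\right\rfloor,$$
because the coefficient $\frac{a-4}{a}$ vanishes exactly when $a = 4$. This is the crux: the bound of Proposition~\ref{prop:bound2} becomes independent of $|V_g^2|$ precisely for $a=4$, and collapses to exactly $\lfloor 2n/5\rfloor$. So the upper bound is essentially a one-line consequence of the machinery already built.

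What remains, and where I expect the only real friction, is handling the small or exceptional cases not covered by Theorem~\ref{th:2-bounded}: namely $n = 9$ (so $C(9;1,4)$, which is isomorphic to $C(9;1,2)$, already treated by Theorem~\ref{th:C(n;1,2)} or Corollary~\ref{cor:(C2a+1;1,a)}), and possibly $n = 8$ (where $C(8;1,4) = C(2a;1,a)$ with $a = 4 = 2^2$, handled by Theorem~\ref{th:C(2a;1,a)}, giving $\beta_b = a - 1 = 3 = \lfloor 16/5\rfloor$). I would dispatch these by direct appeal to the earlier theorems and a parity check that $\lfloor 2n/5\rfloor$ matches their stated values. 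Finally, combining the two inequalities yields $\beta_b(C(n;1,4)) = \lfloor 2n/5\rfloor = \alpha(C(n;1,4))$; the equality $\beta_b = \alpha$ also follows from Observation~\ref{obs:beta=alpha} since the optimal broadcast exhibited for the lower bound is $1$-bounded. The main obstacle is thus purely bookkeeping on the excluded small values, not the core argument.
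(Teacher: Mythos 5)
Your proposal is correct and follows essentially the same route as the paper: the upper bound comes from Proposition~\ref{prop:bound2} with $a=4$, where the coefficient $\frac{a-4}{a}$ vanishes and the bound collapses to $\left\lfloor\frac{2n}{5}\right\rfloor$ independently of $\left|V_g^{2}\right|$, and the lower bound from a $1$-bounded independent broadcast of cost $\left\lfloor\frac{2n}{5}\right\rfloor$ (the paper constructs one explicitly in five cases according to $n \bmod 5$ rather than citing Proposition~\ref{prop:independent}). You are in fact slightly more careful than the paper in noting that the bound on $2$-bounded broadcasts only yields a bound on $\beta_b$ once a $2$-bounded optimal broadcast is known to exist, and in dispatching the exceptional values $n=8$ and $n=9$ separately.
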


\begin{proof}
From Proposition~\ref{prop:bound2}, we get that
$$\sigma(g)\leq \left\lfloor\frac{a}{2(a+1)}\left(n-\frac{a-4}{a}\left|V_g^{2}\right|\right) \right\rfloor$$
for every $2$-bounded independent broadcast $g$ on $C(n;1,a)$,
which gives $\sigma(g)\leq \left \lfloor\frac{2n}{5}  \right \rfloor$,
and thus $\beta_b(C(n;1,4)) \le \left \lfloor\frac{2n}{5}  \right \rfloor$.

We now construct a mapping $f$ from $V(C(n;1,4))$ to $\{0,1\}$.
Let $n=5k+r$ with $0\le r\le 4$.
We consider five cases, depending on the value of $r$.

\begin{enumerate}
\item $r=0$.\\
We let $f(v_i)=1$ if $(i\mod 5)$ is odd, and $f(v_i)=0$ otherwise.

\item $r=1$. \\
We let $f(v_i)=1$ if $(i\mod 5)$ is odd and $i\leq n-7$,  $f(v_{n-2})=f(v_{n-5})=1$, and $f(v_i)=0$ otherwise.

\item $r=2$. \\
We let $f(v_i)=1$ if $(i\mod 5)$ is odd and $i\leq n-3$, and $f(v_i)=0$ otherwise.

\item $r=3$. \\
We let $f(v_i)=1$ if $(i\mod 5)$ is odd, and $f(v_i)=0$ otherwise.

\item $r=4$. \\
We let $f(v_i)=1$ if $(i\mod 5)$ is odd and $i\leq n-7$,
$f(v_{n-2})=f(v_{n-5})=f(v_{n-5})=1$,  and $f(v_i)=0$ otherwise.
\end{enumerate}

Clearly, in each of the previous cases, $f$ is a $1$-bounded independent broadcast on $C(n;1,4)$
such that $\sigma(f) = \left\lfloor \frac{2n}{5} \right \rfloor $ and $V_f^{2}=\emptyset$.
Hence, $\beta_b(C(n;1,4)) = \left\lfloor \frac{2n}{5} \right\rfloor$ and,
thanks to Observation~\ref{obs:beta=alpha}, $\beta_b(C(n;1,4))=\alpha(C(n;1,4))$. This completes the proof.
\end{proof}

Thanks to Proposition~\ref{prop:bound2},
we are now able to determine the broadcast independence number of circulant graphs
of the form $C((a+1)k;1,a)$ with $a\ge 5$ and $k\geq 2$
(the cases $a=2$, $3$ and $4$ are already covered by Theorems \ref{th:C(n;1,2)},
\ref{th:C(n;1,3)} and~\ref{th:C(n;1,4)}, respectively).

\begin{theorem}\label{th:C((a+1)k;1,a)}
If $a$ and $k$ are two integers such that $a\ge 5$ and $k\geq 2$, then we have
$$\beta_b(C((a+1)k;1,a)) = \alpha(C((a+1)k;1,a)) =
\left\{ \begin{array}{ll}
   \dfrac{ak}{2}, & \text{if $a$ is even},\\[2ex]
   \dfrac{(a+1)k}{2},       & \text{otherwise.}
   \end{array}
\right.
$$
\end{theorem}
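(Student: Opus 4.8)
The plan is to treat the two parities of $a$ separately, assembling the results of the previous sections rather than arguing from scratch.

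\emph{Case 1: $a$ odd.} Then $n=(a+1)k$ is even, $n\ge 2(a+1)\ge 12\ge 6$, and $a\le a+1\le\frac n2$, so Theorem~\ref{th:n-even a-odd} applies directly and gives $\beta_b(C(n;1,a))=\alpha(C(n;1,a))=\frac n2=\frac{(a+1)k}{2}$, as claimed.

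\emph{Case 2: $a$ even} (hence $a\ge 6$, since $a\ge 5$). For the upper bound, I would first check that $C(n;1,a)$ with $n=(a+1)k$ satisfies none of the two exceptions of Theorem~\ref{th:2-bounded}: indeed $n=(a+1)k\ge 2a+2>2a+1$ and $3\le a\le\frac n2$. Hence $C(n;1,a)$ admits a $2$-bounded $\beta_b$-broadcast $f$, and Proposition~\ref{prop:bound2}, together with the fact that $\frac{a-4}{a}>0$ because $a\ge 6$, yields
$$\sigma(f)\le\left\lfloor\frac{a}{2(a+1)}\left(n-\frac{a-4}{a}\left|V_f^{2}\right|\right)\right\rfloor\le\left\lfloor\frac{a\,n}{2(a+1)}\right\rfloor=\left\lfloor\frac{ak}{2}\right\rfloor=\frac{ak}{2},$$
the last equality holding because $a$ is even. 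Thus $\beta_b(C((a+1)k;1,a))\le\frac{ak}{2}$.

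For the matching lower bound, I would invoke Proposition~\ref{prop:bound4} with $k_1=k$ and $k_2=0$ (its hypotheses $6\le a\le\frac n2$ and $a$ even being satisfied), which produces an independent broadcast on $C(n;1,a)$ of cost $k\cdot\frac a2=\frac{ak}{2}$; equivalently, one assigns $f$-value $1$ following the pattern $1010\cdots10100$ on each of the $k$ consecutive blocks of $a+1$ vertices, the parity of $a$ guaranteeing that the set of broadcast vertices is independent. Hence $\beta_b(C((a+1)k;1,a))\ge\frac{ak}{2}$, and combining the two bounds gives equality. Finally, since this optimal broadcast takes only the values $0$ and $1$, Observation~\ref{obs:beta=alpha} gives $\alpha(C((a+1)k;1,a))=\beta_b(C((a+1)k;1,a))=\frac{ak}{2}$. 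The whole argument is essentially bookkeeping and has no real obstacle; the only point needing care is the upper bound, where one must first appeal to Theorem~\ref{th:2-bounded} to reduce to $2$-bounded broadcasts and then observe that the coefficient $\frac{a-4}{a}$ of $\left|V_f^{2}\right|$ in Proposition~\ref{prop:bound2} is positive precisely because $a\ge 6$, so that broadcast vertices of value $2$ can only lower the cost.
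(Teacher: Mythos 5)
Your proof is correct and follows essentially the same route as the paper: the odd case via Theorem~\ref{th:n-even a-odd}, and the even case by combining the Proposition~\ref{prop:bound2} upper bound (with the $2$-boundedness supplied by Theorem~\ref{th:2-bounded}, which the paper leaves implicit) with a $1$-bounded broadcast of cost $\frac{ak}{2}$ and Observation~\ref{obs:beta=alpha}. Your appeal to Proposition~\ref{prop:bound4} with $k_2=0$ produces exactly the same $1010\cdots100$ pattern on blocks of $a+1$ vertices that the paper constructs directly, so the two arguments are essentially identical.
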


\begin{proof}
If $a$ is odd, then $(a+1)k$ is even and the result directly follows from Theorem~\ref{th:n-even a-odd}.

Suppose now that $a$ is even, which implies $a\ge 6$.
From Proposition~\ref{prop:bound2}, we get that
$$\sigma(g)\leq \left\lfloor\frac{a}{2(a+1)}\left((a+1)k-\frac{a-4}{a}\left|V_g^{2}\right|\right) \right\rfloor \le
\left\lfloor\frac{a}{2(a+1)}(a+1)k \right\rfloor = \frac{ak}{2}$$
for every $2$-bounded independent broadcast $g$ on $C((a+1)k;1,a)$,
which implies $\beta_b(C((a+1)k;1,a)) \le \frac{ak}{2}.$

Consider now the mapping $f$  from $V(C((a+1)k;1,a))$ to $\{0,1\}$
defined by $f(v_i)=1$ if and only if $(i \mod a+1)$ is odd.
Since $a$ is even, $f$ is clearly a $1$-bounded independent broadcast on $C((a+1)k;1,a)$
with $\sigma(f) = \frac{ak}{2}$ and $V_f^{2}=\emptyset$.
This implies $\beta_b(C((a+1)k;1,a)) \ge \frac{ak}{2}$ and thus,
thanks to Observation~\ref{obs:beta=alpha},
$\beta_b(C((a+1)k;1,a)) = \alpha(C((a+1)k;1,a)) = \frac{ak}{2}$. This completes the proof.
\end{proof}

%

We now consider the case of circulant graphs $C(n;1,a)$ when $a$ divides $n$.
%
We first introduce two new sets of vertices,
slightly modifying the definition of the sets $A_f^i$ and $B_f^j$
defined in Section~\ref{sec:preliminaries}, using $a$-edges instead of $1$-edges.
Let $f$ be any $2$-bounded independent broadcast on $C(n;1,a)$.
Now consider any vertex $v_i\in V_f^1$ such that $f(v_{i-a}) = f(v_{i-2a}) = 0$.
Since $f$ is an independent broadcast, we necessarily have $f(v_{i+a}) = 0$.
Moreover, we then have either $f(v_{i+2a}) = 0$ or $f(v_{i+2a}) = 1$.
Therefore, the broadcast values of the sequence of vertices $v_iv_{i+a}v_{i+2a}\dots$
is of the form either $100$, $10100$ or $1010\dots100$.

For each vertex $v_i\in V_f^1$ such that $f(v_{i-a}) = f(v_{i-2a}) = 0$,
we then let
$${A'}_f^i = \{v_{i+\ell a},\ 0\le \ell\le 2p+2\}$$
be the set of vertices satisfying
(i) $f(v_{i+2ka})=1$ and $f(v_{i+(2k+1)a})=0$ for every $k$, $0\le k\le p$,
and (ii) $f(v_{i+(2p+2)a})=0$.

Now, for each vertex $v_j\in V_f^{2}$, we let
$${B'}_f^j =  \{v_j\}\cup\{v_{j+a-1},v_{j+a},v_{j+a+1}\}\cup\{v_{j+2a}\}.$$

These sets satisfy the same properties as those of the sets
$A_f^i$ and $B_f^j$ given in Lemma~\ref{lem:AiBj}.
The proof is similar to the proof of Lemma~\ref{lem:AiBj} and is omitted.

\begin{lemma}\label{lem:AiBj-Prime}
For every $2$-bounded independent broadcast $f$ on $C(n;1,a)$, if any, the following holds.
\begin{enumerate}
\item For every vertex $v_i\in V_f^1$, $|{A'}_f^i| = 2f({A'}_f^i)+1$.
\item For every vertex $v_j\in V_f^{2}$, $|{B'}_f^j| = 5$.
\item $\sum_{v_i\in V_f^1}|{A'}_f^i|  +  \sum_{v_j\in V_f^{2}}|{B'}_f^j| \le n$.
\end{enumerate}
\end{lemma}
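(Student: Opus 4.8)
The plan is to mirror, essentially verbatim, the proof of Lemma~\ref{lem:AiBj}, replacing $1$-edges by $a$-edges throughout and checking that nothing in the original argument used a special property of $1$-edges that $a$-edges lack. The statement of Lemma~\ref{lem:AiBj-Prime} has three items, and the first two are immediate from the definitions of ${A'}_f^i$ and ${B'}_f^j$: the set ${A'}_f^i$ consists of $2p+3$ vertices of which exactly $p+1$ have $f$-value $1$ (namely the vertices $v_{i+2ka}$ for $0\le k\le p$) and the remaining $p+2$ have $f$-value $0$, so $|{A'}_f^i|=2p+3=2(p+1)+1=2f({A'}_f^i)+1$; and ${B'}_f^j$ is by construction a union of five named vertices, which are pairwise distinct provided $a\ge 2$ and $n$ is large enough (when $a$ divides $n$, the hypothesis $3\le a\le\lfloor n/2\rfloor$ forces $n\ge 2a$, and in fact the interesting case is $n\ge 4a$; in the degenerate small cases the five vertices still have multiplicity accounting consistent with $|{B'}_f^j|=5$ as a multiset, but it is cleanest to note $n\ge 4a$ whenever $a\mid n$ and $a\le\lfloor n/2\rfloor$ with $n>3a$).

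For Item~3, the heart of the argument is that the sets ${A'}_f^i$ (over $v_i\in V_f^1$ with $f(v_{i-a})=f(v_{i-2a})=0$) and ${B'}_f^j$ (over $v_j\in V_f^2$) are pairwise disjoint, exactly as in the proof of Lemma~\ref{lem:AiBj}. Disjointness of two sets ${A'}_f^i,{A'}_f^{i'}$: each ${A'}_f^i$ is a maximal ``run'' of the pattern $1010\dots100$ read along the arithmetic progression with common difference $a$, and two such maximal runs that overlap would have to coincide, which is impossible since they start at distinct vertices $v_i\ne v_{i'}$ both preceded (along step $a$) by two $0$'s. Disjointness of ${B'}_f^j,{B'}_f^{j'}$ and of ${A'}_f^i,{B'}_f^j$: if these shared a vertex, then a short computation along $1$-edges and $a$-edges would exhibit a $g$-broadcast vertex — sorry, an $f$-broadcast vertex — within distance $\le 2=\max\{f(v_j),f(v_{j'})\}$ of another $f$-broadcast vertex, contradicting independence of $f$; this is verbatim the ``$d(v_j,v_{j'})\le\max\{f(v_j),f(v_{j'})\}$'' argument used for $B_f^j$, and it does not change because ${B'}_f^j$ is obtained from $B_f^j$ merely by a global translation of the ``spread-out'' part from the $1$-direction to the $a$-direction while keeping the $1$-direction spread ($\{v_{j+a-1},v_{j+a},v_{j+a+1}\}$) for the central block. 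Granting pairwise disjointness, Items~1 and~2 give
$$\sum_{v_i\in V_f^1}|{A'}_f^i|+\sum_{v_j\in V_f^2}|{B'}_f^j|\le n,$$
since the left-hand side counts distinct vertices of $C(n;1,a)$, which is Item~3.

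The one point that genuinely needs care — and hence the step I expect to be the main (if modest) obstacle — is verifying that, under the standing hypotheses for which these primed sets are used (namely $a\mid n$, say $n=ka$ with $k\ge 4$), the various vertices named in the definitions of ${A'}_f^i$ and ${B'}_f^j$ really are distinct as vertices of $C(n;1,a)$, i.e. that the indices, taken modulo $n$, do not collide in a way that would either inflate $f({A'}_f^i)$ spuriously or make $|{B'}_f^j|<5$. For ${B'}_f^j$ the only possible collision is $v_{j+2a}=v_j$, which needs $n\mid 2a$, impossible when $n=ka$ with $k\ge 3$; and $v_j,v_{j+a-1},v_{j+a},v_{j+a+1}$ are visibly distinct since $1\le a-1<a<a+1<n$. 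For ${A'}_f^i$, the progression $v_i,v_{i+a},\dots,v_{i+(2p+2)a}$ wraps only if $(2p+2)a\ge n=ka$, i.e. $2p+2\ge k$; since the whole sequence consists of at most $k$ distinct multiples of $a$ before repeating, the maximality of the run forces $2p+3\le k$, so no wraparound-induced collision occurs inside a single ${A'}_f^i$. Once these bookkeeping checks are in place, the proof is word-for-word the proof of Lemma~\ref{lem:AiBj} with $1\leftrightarrow a$, which is exactly why the authors write that it ``is omitted.''
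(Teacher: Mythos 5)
Your proposal is correct and follows exactly the route the paper intends: the paper omits this proof, saying only that it is ``similar to the proof of Lemma~\ref{lem:AiBj}'', and your argument is precisely that proof with $1$-edges and $a$-edges exchanged, together with the (harmless) bookkeeping that the indices involved do not collide modulo $n$. The only point where ``verbatim'' is a slight overstatement is the disjointness ${A'}_f^i\cap {B'}_f^j=\emptyset$, since a shared vertex need not be close to $v_i$; one instead shifts the shared vertex back by $a$ or $2a$ along the run to exhibit a value-$1$ broadcast vertex within distance~$2$ of $v_j$, which is the same small elaboration the paper itself glosses over in Lemma~\ref{lem:AiBj} with ``the same argument gives''.
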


We are now ready to determine the independent broadcast number of circulant graphs of the form $C(qa;1,a)$, $a\ge 5$ and $q\ge 4$.
Recall that the cases $a=2$, $3$ and $4$ are already covered by Theorems \ref{th:C(n;1,2)}, \ref{th:C(n;1,3)} and~\ref{th:C(n;1,4)}, respectively,
while the cases $q=2$ and $q=3$ are covered by Theorems \ref{th:C(2a;1,a)} and~\ref{th:C(3a;1,a)}, respectively.

\begin{theorem}\label{th:C(qa;1,a)}
If $a$ and $q$ are two integers such that $a\ge 5$ and $q\ge 4$, then we have
$$\begin{array}{ll}
\beta_b(C(qa;1,a)) & =\ \alpha(C(qa;1,a)) \\ [2ex]
 & =\
\left\{
  \begin{array}{ll}
    \dfrac{qa}{2},     & \mbox{if $a$ is odd, and $q$ is even,} \\ [2ex]
    \dfrac{(q-1)a}{2}, & \mbox{if $a$ and $q$ are odd,} \\ [2ex]
    \left\lfloor \dfrac{qa^2}{2(a+1)} \right\rfloor, & \mbox{if $a$ and $q$ are even,} \\ [2ex]
    \min\left\{ \left\lfloor \dfrac{qa^2}{2(a+1)} \right\rfloor, \dfrac{(q-1)a}{2}\right\},
              & \mbox{otherwise.}
  \end{array}
\right.
\end{array}$$
\end{theorem}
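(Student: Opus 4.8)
The plan is to split into four cases according to the parities of $a$ and $q$, matching the four lines of the statement, prove matching lower and upper bounds on $\beta_b(C(qa;1,a))$ in each, and then, since every extremal broadcast I will exhibit is $1$-bounded, invoke Observation~\ref{obs:beta=alpha} to conclude $\beta_b=\alpha$. Since $n=qa\ge 4a>2a+1$ and $a\le\lfloor n/2\rfloor$, Theorem~\ref{th:2-bounded} supplies a $2$-bounded $\beta_b$-broadcast, so every upper-bound argument will be applied to such a broadcast $g$. The case ``$a$ odd, $q$ even'' is immediate: $n$ is even and $a$ odd, so Theorem~\ref{th:n-even a-odd} already gives $\beta_b=\alpha=n/2=qa/2$.

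The decisive new ingredient for the other three cases is a refinement that applies whenever $q$ is odd. The $a$-edges partition $V(C(qa;1,a))$ into $a$ pairwise disjoint \emph{induced} $q$-cycles $Z_0,\dots,Z_{a-1}$, where $Z_r$ collects the indices $\equiv r\pmod a$. Working with the primed sets ${A'}_g^i$, ${B'}_g^j$ of Lemma~\ref{lem:AiBj-Prime}, note that each ${A'}_g^i$ lies inside one $Z_r$, each ${B'}_g^j$ contributes a $3$-vertex ``core'' to one $Z_r$ and one ``side'' vertex to each of $Z_{r-1},Z_{r+1}$, and that value-$2$ vertices of a fixed $Z_{r'}$ are pairwise at $Z_{r'}$-distance at least $3$, hence number at most $\lfloor q/3\rfloor$. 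From this I would deduce, for each $r$, that (number of ${A'}$-sets inside $Z_r$)$\,+\,|V_g^2\cap Z_r|\,+\,$(number of vertices of $Z_r$ in no primed set) $\ge 1$: it is clear unless $Z_r$ carries no broadcast vertex, in which case its $q$ zero-vertices cannot all be ${B'}$-side vertices because at most $|V_g^2\cap Z_{r-1}|+|V_g^2\cap Z_{r+1}|\le 2\lfloor q/3\rfloor<q$ of them are. Summing over $r$ the exact identity $n=2|V_g^1|+5|V_g^2|+R+U$ from Lemma~\ref{lem:AiBj-Prime} (with $R$ the total number of primed ${A'}$-sets and $U$ the number of uncovered vertices) then gives $\sigma(g)=\tfrac12(qa-|V_g^2|-R-U)\le\tfrac12(qa-a)=\tfrac{(q-1)a}{2}$.

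With this refinement, I would finish as follows. For ``$a,q$ both odd'' the refinement gives $\beta_b\le\frac{(q-1)a}{2}$, while $f(v_i)=1\iff i$ even and $i\le(q-1)a-2$ is a $1$-bounded independent broadcast of cost $\frac{(q-1)a}{2}$ (the $1$-vertices being even, no $a$-edge with $a$ odd joins two of them, and the surviving $\ge a+1$ indices form a gap), so $\beta_b=\alpha=\frac{(q-1)a}{2}$. For ``$a,q$ both even'' (so $a\ge 6$), Proposition~\ref{prop:bound2} with $|V_g^2|=0$ gives $\beta_b\le\lfloor\frac{qa^2}{2(a+1)}\rfloor$, and Proposition~\ref{prop:bound4}, applied after writing $qa=k_1(a+1)+k_2(a-1)$ with $k_1+k_2$ as small as admissibility permits (namely $k_1+k_2=2\lceil\frac{qa}{2(a+1)}\rceil$, which one checks is even and feasible), produces a $1$-bounded broadcast of cost exactly $\lfloor\frac{qa^2}{2(a+1)}\rfloor$; hence $\beta_b=\alpha=\lfloor\frac{qa^2}{2(a+1)}\rfloor$. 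For ``$a$ even, $q$ odd'' (so $a\ge 6$), both upper bounds $\lfloor\frac{qa^2}{2(a+1)}\rfloor$ (Proposition~\ref{prop:bound2}) and $\frac{(q-1)a}{2}$ (the refinement) hold, so $\beta_b\le\min$; for the matching construction I would use the Proposition~\ref{prop:bound4}-type block construction when $q\ge a+1$ (giving $\lfloor\frac{qa^2}{2(a+1)}\rfloor$), and otherwise a cyclic arrangement of $q$ length-$a$ blocks alternating $1010\cdots10$ and $0101\cdots01$ with one block set to all zeros to absorb the parity mismatch at the ``seam'' (giving $\frac{(q-1)a}{2}$); the better of the two matches $\min$, and both are $1$-bounded, so $\beta_b=\alpha=\min\{\lfloor\frac{qa^2}{2(a+1)}\rfloor,\frac{(q-1)a}{2}\}$.

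The hard part will be the odd-$q$ refinement --- isolating exactly the ``$\ge a$ wasted slots'' forced by the $a$ odd-length $a$-cycles, which requires combining the disjointness bookkeeping of Lemma~\ref{lem:AiBj-Prime} with the spacing bound $|V_g^2\cap Z_r|\le\lfloor q/3\rfloor$ --- together with the routine but fiddly verification, in the ``$a$ even'' cases, that the $(a+1)/(a-1)$-block construction really realizes $\lfloor\frac{qa^2}{2(a+1)}\rfloor$ whenever $a+1\nmid qa$, and the check that in ``$a$ even, $q$ odd'' the switch between the two expressions of the minimum occurs exactly at the comparison of $q$ with $a+1$.
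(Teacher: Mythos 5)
Most of your plan is sound, and in two places it genuinely differs from the paper: your per-cycle refinement for odd $q$ (working inside the $a$ induced $q$-cycles $Z_r$ and counting one ``wasted slot'' per cycle) is a correct, more structural route to the upper bound $\frac{(q-1)a}{2}$ than the paper's ratio argument $|{A'}_f^i|\ge\frac{2q}{q-1}f({A'}_f^i)$ based on $|{A'}_f^i|\le q$; and your ``minimal even $m$'' use of Proposition~\ref{prop:bound4} does reproduce the paper's lower bounds $\left\lfloor\frac{qa^2}{2(a+1)}\right\rfloor$ in the even-$a$ cases with $q\ge a+1$ (the feasibility you defer does hold: for $q$ even and $q<2(a+1)$ one simply gets $k_1=k_2=q/2$, and for larger $q$ the inequality $m(a-1)\le qa$ follows from $q\ge a$). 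One point you should make explicit in the refinement: the identity $n=2|V_g^1|+5|V_g^2|+R+U$ requires every value-$1$ vertex to lie in some primed ${A'}$-set, which is true exactly because $q$ is odd (an odd cycle cannot carry a perfect alternation, so every maximal run has a leader).

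The genuine gap is the lower-bound construction in the subcase $a$ even, $q$ odd, $q\le a-1$. A cyclic arrangement of $q$ blocks of length exactly $a$ alternating $1010\cdots10$ and $0101\cdots01$, with a single all-zero block, is not an independent broadcast: because the blocks have length $a$, the $a$-edges join equal positions of consecutive blocks, so two consecutive blocks of the same type clash, while a block $0101\cdots01$ followed by a block $1010\cdots10$ places two broadcast vertices on the $1$-edge at the seam. Consequently any legal arrangement of such full-density blocks has runs of at most two non-zero blocks between all-zero blocks, so one zero block cannot ``absorb the seam'' once $q\ge 5$; you would need roughly $q/3$ zero blocks and the cost drops to about $\frac{qa}{3}<\frac{(q-1)a}{2}$. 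What is needed here (and what the paper does with its sets $S_0,\dots,S_{a-1}$) is a transversal construction: in each of the $a$ induced $q$-cycles pick an independent set of size $\frac{q-1}{2}$, shifting the offset from one cycle to the next, and adjusting at the wrap-around, where the $1$-edges between $Z_{a-1}$ and $Z_0$ are twisted by one $a$-step, so that no $1$-edge joins chosen vertices of consecutive cycles. Without such a construction your case ``$a$ even, $q$ odd, $q\le a-1$'' has only the upper bound $\frac{(q-1)a}{2}$, so the theorem (and the equality $\beta_b=\alpha$ via Observation~\ref{obs:beta=alpha}) is not established there.
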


\begin{proof}
We consider the four cases separately.
\begin{enumerate}
\item $a$ is odd and $q$ is even.\\
In that case, $qa$ is even and the result directly follows from Theorem~\ref{th:n-even a-odd}.

\item $a$ and $q$ are odd.\\
In that case, $q\ge 5$ and we know by Lemma~\ref{lem:broadcast-at-most-2-general} that $C(qa;1,a)$ admits a $2$-bounded $\beta_b$-broadcast.
Let $f$ be any $2$-bounded independent broadcast on $C(qa;1,a)$.
Observe first that, since $q$ is odd, we necessarily have $|{A'}_f^i|\le q$ for every vertex $v_i\in V_f^1$,
since otherwise this would give $f(v_i)=f(v_{i+(q-1)a})=f(v_{i-a})=1$, contradicting the fact that $f$ is an independent broadcast.
Therefore  $f({A'}_f^i)\le \frac{q-1}{2}$, and  thanks to Item 1 of Lemma~\ref{lem:AiBj-Prime}, we get
$$\frac{|{A'}_f^i|}{f({A'}_f^i)} = \frac{2f({A'}_f^i)+1}{f({A'}_f^i)} = 2 + \frac{1}{f({A'}_f^i)}
\ge 2 + \frac{2}{q-1} = \frac{2q}{q-1},$$
which gives
$$|{A'}_f^i| \ge \frac{2q}{q-1}f({A'}_f^i).$$

Now, using Item 2 and Item 3 of Lemma~\ref{lem:AiBj-Prime}, we get
$$qa \ge \sum_{v_i\in V_f^1}|{A'}_f^i|  +  \sum_{v_j\in V_f^{2}}|{B'}_f^j|
   \ge  \frac{2q}{q-1}f(V_f^1) + \dfrac{5}{2}f(V_f^{2}) ,$$
which gives
$$qa \ge  \frac{2q}{q-1}\sigma(f)  + \dfrac{q-5}{2(q-1)}f(V_f^{2})\geq \frac{2q}{q-1}\sigma(f),$$
and then
$$\sigma(f) \le \dfrac{(q-1)a}{2}.$$

Consider now the mapping $g$ from $V(C(qa;1,a))$ to $\{0,1\}$
defined by  $g(v_i)=1$ if $i$ is even and $i\leq (q-1)a-1$, and $g(v_i)=0$ otherwise.
Clearly, $g$ is a $1$-bounded independent broadcast on $C(qa;1,a)$.
Moreover,
$$\beta_b(C(qa;1,a)) \ge \sigma(g) =  \frac{(q-1)a}{2},$$ and thus,
thanks to Observation~\ref{obs:beta=alpha},
$$\beta_b(C(qa;1,a)) = \alpha(C(qa;1,a)) = \frac{(q-1)a}{2}.$$

\item $a$ and $q$ are even.\\
Note first that if $a+1$ divides $q$, say $q=\ell(a+1)$ for some integer $\ell\ge 1$,
which gives $qa=\ell a(a+1)$,
the result directly follows from Theorem~\ref{th:C((a+1)k;1,a)}
for $k=\ell a$, since
$$\dfrac{ak}{2} = \dfrac{\ell a^2}{2} = \dfrac{qa^2}{2(a+1)} =
\left\lfloor \dfrac{qa^2}{2(a+1)} \right\rfloor.$$

Assume now that this is not the case, so that $k(a+1) < q < (k+2)(a+1)$ for some even integer $k\geq 2$.
Let $q= k(a+1) + 2\ell$ (recall that $k$ and $q$ are even) for some integer $\ell$, $1\le\ell\le a$.  From Proposition~\ref{prop:bound2}, we get that
$$\sigma(f) \leq \left\lfloor\frac{a}{2(a+1)}\left(qa-\frac{a-4}{a}\left|V_f^{2}\right|\right) \right\rfloor \le \left\lfloor\frac{qa^2}{2(a+1)} \right\rfloor$$
for every $2$-bounded independent broadcast $f$ on $C(qa;1,a)$.  Moreover, we have
$$ \left\lfloor\frac{qa^2}{2(a+1)} \right\rfloor  = \dfrac{(a-1)q}{2}
+ \left\lfloor\frac{q}{2(a+1)} \right\rfloor  =  \dfrac{(a-1)q}{2} + \dfrac{k}{2},  $$
which implies
$$\beta_b(C(qa;1,a)) \le  \dfrac{(a-1)q}{2} + \dfrac{k}{2}.$$

Since $qa=(ka+ \ell )(a+1) + \ell(a-1) $, and thanks to Proposition~\ref{prop:bound4}, we get
$$\beta_b(C(qa;1,a)) \geq  \frac{(ka+ \ell)a}{2}+ \frac{\ell(a-2) }{2}  =\frac{ka^2}{2} + \ell(a-1)= \frac{ka^2}{2} + \dfrac{q-k(a+1)}{2}(a-1),$$
which gives
$$\beta_b(C(qa;1,a)) \geq \dfrac{(a-1)q}{2} + \dfrac{k}{2}.$$

Finally, by Observation~\ref{obs:beta=alpha}, we get
$$\beta_b(C(qa;1,a)) = \alpha(C(qa;1,a)) = \left\lfloor\frac{qa^2}{2(a+1)} \right\rfloor.$$

\begin{figure}
\begin{center}
\begin{tikzpicture}[scale=0.7]

  \LIGNE {0}{0}{1}{-0.7}
  \LIGNE {1}{-0.7}{1.5}{-1.5}
  \LIGNE {1.5}{-1.5}{1.7}{-2.5}
  \LIGNE {1.7}{-2.5}{1.6}{-3.5}
  \LIGNE {-2.5}{-1.5}{-2}{-0.7}
  \LIGNE {-2}{-0.7}{-1}{0}
  \LIGNE {-1}{0}{0}{0}

  \LIGNE {1}{0.5}{2}{0.5}
  \LIGNE {2}{0.5}{3}{-0.2}
  \LIGNE {3}{-0.2}{3.5}{-1}
  \LIGNE {3.5}{-1}{3.7}{-2}
  \LIGNE {3.7}{-2}{3.6}{-3}

   \LIGNE {1}{0.5}{-1}{0}
  \LIGNE {2}{0.5}{0}{0}
  \LIGNE {3}{-0.2}{1}{-0.7}
  \LIGNE {3.5}{-1}{1.5}{-1.5}
  \LIGNE {3.7}{-2}{1.7}{-2.5}
  \LIGNE {3.6}{-3}{1.6}{-3.5}

    \LIGNE {5}{1.5}{6}{1.5}
  \LIGNE {6}{1.5}{7}{0.8}
  \LIGNE {7}{0.8}{7.5}{0}
  \LIGNE {7.5}{0}{7.7}{-1}
  \LIGNE {7.7}{-1}{7.6}{-2}

  \LIGNE {9}{2.5}{10}{2.5}
  \LIGNE {10}{2.5}{11}{1.8}
  \LIGNE {11}{1.8}{11.5}{1}
  \LIGNE {11.5}{1}{11.7}{0}
  \LIGNE {11.7}{0}{11.6}{-1}

  \LIGNE {1.6}{-3.5}{1.5}{-3.9}
  \POINTILLE {1.5}{-3.9}{1.4}{-4.2}
  \LIGNE {-2.5}{-1.5}{-2.6}{-2}
  \POINTILLE {-2.6}{-2}{-2.7}{-2.3}

  \LIGNE {3.6}{-3}{3.5}{-3.4}
  \POINTILLE {3.5}{-3.4}{3.4}{-3.7}

  \LIGNE {2}{0.5}{2.4}{0.6}
  \POINTILLE {2.4}{0.6}{2.8}{0.7}
  \LIGNE {3}{-0.2}{3.4}{-0.1}
  \POINTILLE {3.4}{-0.1}{3.8}{0}
  \LIGNE {1}{0.5}{1.4}{0.6}
  \POINTILLE {1.4}{0.6}{1.8}{0.7}*

  \LIGNE {5}{1.5}{4.6}{1.4}
  \POINTILLE {4.6}{1.4}{4.2}{1.3}
   \LIGNE {7}{0.8}{6.6}{0.7}
  \POINTILLE {6.6}{0.7}{6.2}{0.6}
  \LIGNE {7.5}{0}{7.1}{-0.1}
  \POINTILLE {7.1}{-0.1}{6.7}{-0.2}
\LIGNE {7.7}{-1}{7.3}{-1.1}
  \POINTILLE {7.3}{-1.1}{6.9}{-1.2}
 \LIGNE {7.6}{-2}{7.2}{-2.1}
  \POINTILLE {7.2}{-2.1}{6.8}{-2.2}
  \LIGNE {6}{1.5}{5.6}{1.4}
  \POINTILLE {5.6}{1.4}{5.2}{1.3}

  \LIGNE {10}{2.5}{9.4}{2.4}
  \POINTILLE {9.4}{2.4}{9}{2.3}
   \LIGNE {11}{1.8}{10.6}{1.7}
  \POINTILLE {10.6}{1.7}{10.2}{1.6}
  \LIGNE {11.5}{1}{11.1}{0.9}
  \POINTILLE {11.1}{0.9}{10.7}{0.8}

\LIGNE {11.7}{0}{11.3}{-0.1}
  \POINTILLE {11.3}{-0.1}{10.9}{-0.2}
 \LIGNE {11.6}{-1}{11.2}{-1.1}
  \POINTILLE {11.2}{-1.1}{10.8}{-1.2}
  \LIGNE {9}{2.5}{8.6}{2.4}
  \POINTILLE {8.6}{2.4}{8.2}{2.3}

  \LIGNE {5}{1.5}{5.4}{1.6}
  \POINTILLE {5.4}{1.6}{5.8}{1.7}
   \LIGNE {7}{0.8}{7.4}{0.9}
  \POINTILLE {7.4}{0.9}{7.8}{1}
  \LIGNE {6}{1.5}{6.4}{1.6}
  \POINTILLE {6.4}{1.6}{6.8}{1.7}

 \LIGNE {7.6}{-2}{7.5}{-2.4}
  \POINTILLE {7.5}{-2.4}{7.4}{-2.7}

  \LIGNE {11.6}{-1}{11.5}{-1.4}
  \POINTILLE {11.5}{-1.4}{11.4}{-1.7}

  \POINTILLE {13}{3}{12}{3}
  \POINTILLE {13}{3}{14}{2.3}
  \POINTILLE {14}{2.3}{14.5}{1.5}
  \POINTILLE {14.5}{1.5}{14.7}{0.5}
  \POINTILLE {14.7}{0.5}{14.6}{-0.5}
  \POINTILLE {14.6}{-0.5}{14.5}{-1.1}

  \POINTILLE {10}{2.5}{13}{3}
  \POINTILLE {11}{1.8}{14}{2.3}
  \POINTILLE {11.5}{1}{14.5}{1.5}
  \POINTILLE {11.7}{0}{14.7}{0.5}
  \POINTILLE {11.6}{-1}{14.6}{-0.5}

   \SOM{0}{0}{}{\scriptsize 0}
   \drSOM{1}{-0.7}{}{{\scriptsize a}}
   \drSOM{1.5}{-1.5}{}{\scriptsize 2a}
   \drSOM{1.7}{-2.5}{}{\scriptsize 3a}
   \drSOM{1.6}{-3.5}{}{\scriptsize 4a}
 \ghSOM{-1}{0}{}{\scriptsize (q-1)a}
 \ghSOM{-2}{-0.7}{}{\scriptsize (q-2)a}
 \ghSOM{-2.5}{-1.5}{}{\scriptsize (q-3)a}

  \SOM{2}{0.5}{}{\scriptsize 1}
   \drSOM{3}{-0.2}{}{\quad \scriptsize a+1}
   \drSOM{3.5}{-1}{}{\quad \scriptsize 2a+1}
   \drSOM{3.7}{-2}{}{\quad \scriptsize 3a+1}
   \drSOM{3.6}{-3}{}{\quad \scriptsize 4a+1}
 \SOM{1}{0.5}{\scriptsize (q-1)a+1}{ }

   \SOM{6}{1.5}{}{\scriptsize i}
   \drSOM{7}{0.8}{}{\quad \scriptsize a+i}
   \drSOM{7.5}{0}{}{\quad \scriptsize 2a+i}
   \drSOM{7.7}{-1}{}{ \quad \scriptsize 3a+i}
   \drSOM{7.6}{-2}{}{\quad \scriptsize 4a+i}
   \SOM{5}{1.5}{\scriptsize (q-1)a+i}{}

  \SOM{10}{2.5}{\scriptsize a-1}{}
   \drSOM{11}{1.8}{}{\quad \scriptsize 2a-1}
   \drSOM{11.5}{1}{}{\quad \scriptsize 3a-1}
   \drSOM{11.7}{0}{}{\quad \scriptsize 4a-1}
   \drSOM{11.6}{-1}{}{\quad \scriptsize  5a-1}
 \SOM{9}{2.5}{\scriptsize qa-1}{}

\SOM{12}{3}{\scriptsize 0}{}
\SOM{13}{3}{\scriptsize a}{}
   \drSOM{14}{2.3}{}{\scriptsize 2a}
   \drSOM{14.5}{1.5}{}{\scriptsize 3a}
   \drSOM{14.7}{0.5}{}{\scriptsize 4a}
   \drSOM{14.6}{-0.5}{}{\scriptsize 5a}

   \LIGNE {-2}{-4.8}{-1}{-5.5}
   \LIGNE {-2}{-4.8}{-2.4}{-3.8}
    \POINTILLE {-1}{-5.5}{-0.4}{-5.6}
     \POINTILLE {-2.4}{-3.8}{-2.6}{-3.2}
   \ghSOM{-2}{-4.8}{}{\scriptsize (i+1)a}
    \ghSOM{-1}{-5.5}{}{\scriptsize ia}
   \ghSOM{-2.4}{-3.8}{}{\scriptsize (i+2)a }

 \node[below] at (1.3,-4.5){$C_0$};
 \node[below] at (3.3,-4){$C_1$};
 \node[below] at (7.3,-3){$C_i$};
 \node[below] at (11.4,-1.9){$C_ {(a-1)} $};
  \node[below] at (14.3,-1.5){$C_0$};

\end{tikzpicture}
\caption{\label{fig:Ci} The circulant graph $C(qa;1,a)$ (only subscripts of vertices are indicated)}
\end{center}
\end{figure}
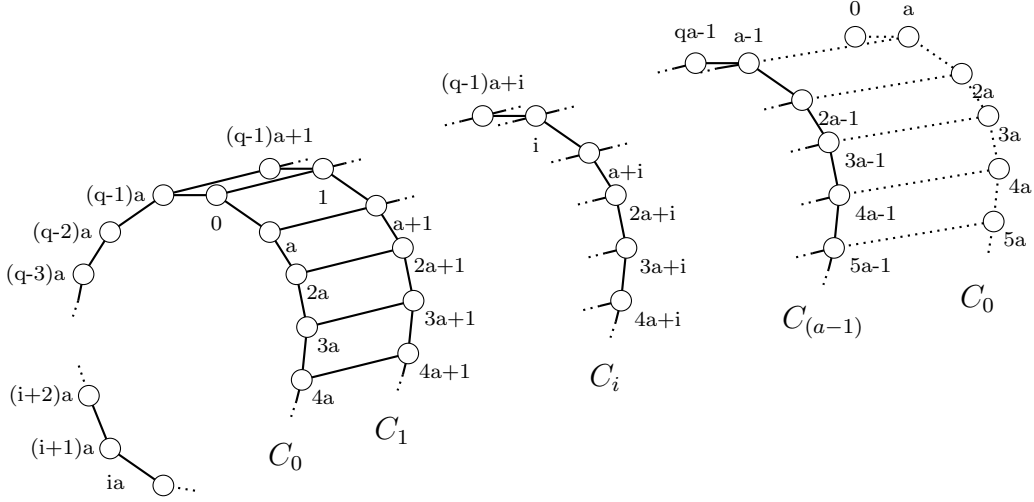

\item $a$ is even and $q$ is odd.\\
The graph $C(qa;1,a)$  can be seen as $a$ copies $C_0,\dots,C_{a-1}$ of a $q$-cycle, with $C_k=\{v_k,v_{k+a},v_{k+2a} \ldots ,v_{k+(q-1)a}\}$, 
for every $k$, $0\le k \leq  a-1$, cyclically connected as depicted in Figure~\ref{fig:Ci}.

%

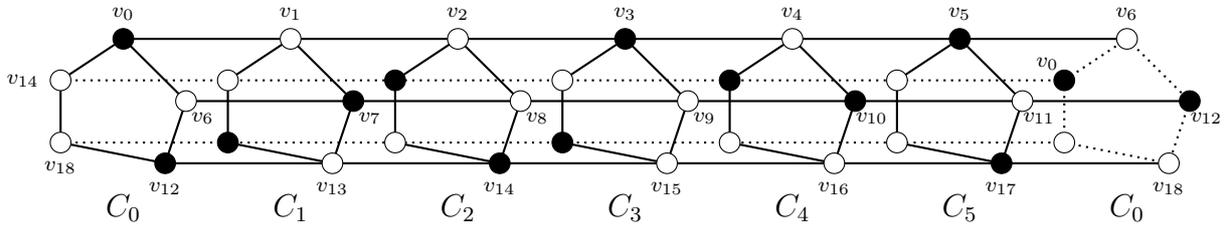
\begin{figure}
\begin{center}
\begin{tikzpicture}[scale=0.55]

  \LIGNE {0}{0}{24}{0}
  \LIGNE {25.5}{-1.5}{1.5}{-1.5}
  \LIGNE {1}{-3}{25}{-3}

  \POINTILLE {-1.5}{-1}{22.5}{-1}
  \POINTILLE {-1.5}{-2.5}{22.5}{-2.5}

   \LIGNE {0}{0}{1.5}{-1.5}
  \LIGNE {1.5}{-1.5}{1}{-3}
  \LIGNE {1}{-3}{-1.5}{-2.5}
  \LIGNE {-1.5}{-2.5}{-1.5}{-1}
  \LIGNE {-1.5}{-1}{0}{0}

     \bSOM{0}{0}{\scriptsize $v_{0}$}{}
      \drSOM{1.5}{-1.5}{}{\scriptsize $v_{6}$}
       \bSOM{1}{-3}{}{\scriptsize $v_{12}$}
        \SOM{-1.5}{-2.5}{}{\scriptsize $v_{18}$}
         \ghSOM{-1.5}{-1}{}{\scriptsize $v_{14}$}

   \LIGNE {4}{0}{5.5}{-1.5}
  \LIGNE {5.5}{-1.5}{5}{-3}
  \LIGNE {5}{-3}{2.5}{-2.5}
  \LIGNE {2.5}{-2.5}{2.5}{-1}
  \LIGNE {2.5}{-1}{4}{0}

     \SOM{4}{0}{\scriptsize $v_{1}$}{}
      \bdrSOM{5.5}{-1.5}{}{\scriptsize $v_{7}$}
       \SOM{5}{-3}{}{\scriptsize $v_{13}$}
        \bSOM{2.5}{-2.5}{}{}
         \SOM{2.5}{-1}{}{}

   \LIGNE {8}{0}{9.5}{-1.5}
  \LIGNE {9.5}{-1.5}{9}{-3}
  \LIGNE {9}{-3}{6.5}{-2.5}
  \LIGNE {6.5}{-2.5}{6.5}{-1}
  \LIGNE {6.5}{-1}{8}{0}

     \SOM{8}{0}{\scriptsize $v_{2}$}{}
      \drSOM{9.5}{-1.5}{}{\scriptsize $v_{8}$}
       \bSOM{9}{-3}{}{\scriptsize $v_{14}$}
        \SOM{6.5}{-2.5}{}{}
         \bSOM{6.5}{-1}{}{}

   \LIGNE {12}{0}{13.5}{-1.5}
  \LIGNE {13.5}{-1.5}{13}{-3}
  \LIGNE {13}{-3}{10.5}{-2.5}
  \LIGNE {10.5}{-2.5}{10.5}{-1}
  \LIGNE {10.5}{-1}{12}{0}

     \bSOM{12}{0}{\scriptsize $v_{3}$}{}
      \drSOM{13.5}{-1.5}{}{\scriptsize $v_{9}$}
       \SOM{13}{-3}{}{\scriptsize $v_{15}$}
        \bSOM{10.5}{-2.5}{}{}
         \SOM{10.5}{-1}{}{}

   \LIGNE {16}{0}{17.5}{-1.5}
  \LIGNE {17.5}{-1.5}{17}{-3}
  \LIGNE {17}{-3}{14.5}{-2.5}
  \LIGNE {14.5}{-2.5}{14.5}{-1}
  \LIGNE {14.5}{-1}{16}{0}

     \SOM{16}{0}{\scriptsize $v_{4}$}{}
      \bdrSOM{17.5}{-1.5}{}{\scriptsize $v_{10}$}
       \SOM{17}{-3}{}{\scriptsize $v_{16}$}
        \SOM{14.5}{-2.5}{}{}
         \bSOM{14.5}{-1}{}{}

   \LIGNE {20}{0}{21.5}{-1.5}
  \LIGNE {21.5}{-1.5}{21}{-3}
  \LIGNE {21}{-3}{18.5}{-2.5}
  \LIGNE {18.5}{-2.5}{18.5}{-1}
  \LIGNE {18.5}{-1}{20}{0}

     \bSOM{20}{0}{\scriptsize $v_{5}$}{}
      \drSOM{21.5}{-1.5}{}{\scriptsize $v_{11}$}
       \bSOM{21}{-3}{}{\scriptsize $v_{17}$}
        \SOM{18.5}{-2.5}{}{}
         \SOM{18.5}{-1}{}{}

 \POINTILLE {24}{0}{25.5}{-1.5}
 \POINTILLE {25.5}{-1.5}{25}{-3}
 \POINTILLE {25}{-3}{22.5}{-2.5}
 \POINTILLE {22.5}{-2.5}{22.5}{-1}
 \POINTILLE {22.5}{-1}{24}{0}

     \SOM{24}{0}{\scriptsize $v_{6}$}{}
      \bdrSOM{25.5}{-1.5}{}{\scriptsize $v_{12}$}
       \SOM{25}{-3}{}{\scriptsize $v_{18}$}
        \SOM{22.5}{-2.5}{}{}
         \bdrSOM{22.5}{-1}{\scriptsize $v_0$}{}

\node[below] at (0,-3.5){$C_0$};
 \node[below] at (4,-3.5){$C_1$};
 \node[below] at (8,-3.5){$C_ {2} $};
 \node[below] at (12,-3.5){$C_ {3}$};
 \node[below] at (16,-3.5){$C_ {4}$};
 \node[below] at (20,-3.5){$C_ {5}$};
\node[below] at (24,-3.5){$C_0$};

\end{tikzpicture}
\caption{\label{fig:Sig}Construction of the sets $S_i$ in the proof of  Theorem~\ref{th:C(qa;1,a)} ($a=6$, $q=5$)}
\end{center}
\end{figure}

We consider three subcases.
\begin{enumerate}

\item $q = a-1$.\\
Since $q$ is odd, similarly to Case~2 ($q$ and $a$ odd), we have
$$\sigma(f) \leq \frac{(q-1)a}{2}$$
for every independent broadcast $f$ on $C(qa;1,a)$.
Consider the sets $S_k$, $0\leq k\leq a-1$ defined as follows (see Figure~\ref{fig:Sig}  for the case $a=6$ and $q=5$).
\begin{align*}
   S_0 &= \{v_0,v_{2a}, v_{4a}, \ldots , v_{(q-5)a}, v_{(q-3)a} \}, \\
   S_1 &= \{v_{1+a},v_{1+3a}, v_{1+5a}, \ldots , v_{1+(q-4)a}, v_{1+(q-2)a} \}, \\
   S_2 & = \{v_{2+2a},v_{2+4a}, v_{2+6a}, \ldots , v_{2+(q-3)a}, v_{2+(q-1)a} \},\\
   &\vdots\\
 S_{a-1}& = \{v_{a-1},v_{3a-1}, v_{5a-1}, \ldots , v_{n-4a-1}, v_{n-2a-1} \}.
\end{align*}

From this definition, we clearly get that $\bigcup_{k=0}^{a-1} S_k$ is a independent set in $V(G)$. This gives
$$\left| \bigcup_{k=0}^{a-1} S_k \right| = a\dfrac{q-1}{2} \leq \alpha(C(qa;1,a))\leq \beta_b(C(qa;1,a)).$$

Thanks to Observation~\ref{obs:beta=alpha}, we then get
 $$\beta_b(C(qa;1,a))= \alpha (C(qa;1,a))=\dfrac{(q-1)a}{2}.$$

\begin{figure}
\begin{center}
\begin{tikzpicture}[scale=0.5]

  \LIGNE {2}{0}{3}{0}
  \POINTILLE {3}{0}{3.5}{0}
  \LIGNE {3.5}{-1.5}{4.5}{-1.5}
  \POINTILLE {4.5}{-1.5}{5}{-1.5}
  \LIGNE {3}{-3}{4}{-3}
  \POINTILLE {4}{-3}{4.5}{-3}

  \LIGNE {7}{0}{28}{0}
  \LIGNE {8.5}{-1.5}{29.5}{-1.5}
  \LIGNE {8}{-3}{29}{-3}
  \POINTILLE {5.5}{-2.5}{26.5}{-2.5}
  \POINTILLE {5.5}{-1}{26.5}{-1}

  \POINTILLE {6.5}{0}{7}{0}
  \POINTILLE {8}{-1.5}{8.5}{-1.5}
  \POINTILLE {7.5}{-3}{8}{-3}

   \LIGNE {2}{0}{3.5}{-1.5}
  \LIGNE {3.5}{-1.5}{3}{-3}
  \LIGNE {3}{-3}{0.5}{-2.5}
  \LIGNE {0.5}{-2.5}{0.5}{-1}
  \LIGNE {0.5}{-1}{2}{0}

     \bSOM{2}{0}{\scriptsize $v_{0}$}{}
      \drSOM{3.5}{-1.5}{}{\scriptsize $v_{10}$}
       \bSOM{3}{-3}{}{\scriptsize $v_{20}$}
        \SOM{0.5}{-2.5}{}{\scriptsize $v_{30}$}
         \ghSOM{0.5}{-1}{}{\scriptsize $v_{40}$}

   \LIGNE {8}{0}{9.5}{-1.5}
  \LIGNE {9.5}{-1.5}{9}{-3}
  \LIGNE {9}{-3}{6.5}{-2.5}
  \LIGNE {6.5}{-2.5}{6.5}{-1}
  \LIGNE {6.5}{-1}{8}{0}

     \bSOM{8}{0}{\scriptsize $v_{5}$}{}
      \drSOM{9.5}{-1.5}{}{\scriptsize $v_{15}$}
       \bSOM{9}{-3}{}{\scriptsize $v_{25}$}
        \SOM{6.5}{-2.5}{}{\scriptsize $v_{35}$}
         \drSOM{6.5}{-1}{}{\scriptsize $v_{45}$}

   \LIGNE {12}{0}{13.5}{-1.5}
  \LIGNE {13.5}{-1.5}{13}{-3}
  \LIGNE {13}{-3}{10.5}{-2.5}
  \LIGNE {10.5}{-2.5}{10.5}{-1}
  \LIGNE {10.5}{-1}{12}{0}

     \SOM{12}{0}{\scriptsize $v_{6}$}{}
      \bdrSOM{13.5}{-1.5}{}{ \scriptsize $v_{16}$}
       \SOM{13}{-3}{}{ \scriptsize $v_{26}$}
        \bSOM{10.5}{-2.5}{}{}
         \SOM{10.5}{-1}{}{}

   \LIGNE {16}{0}{17.5}{-1.5}
  \LIGNE {17.5}{-1.5}{17}{-3}
  \LIGNE {17}{-3}{14.5}{-2.5}
  \LIGNE {14.5}{-2.5}{14.5}{-1}
  \LIGNE {14.5}{-1}{16}{0}

     \bSOM{16}{0}{ \scriptsize $v_{7}$}{}
      \drSOM{17.5}{-1.5}{}{ \scriptsize $v_{17}$}
       \bSOM{17}{-3}{}{ \scriptsize $v_{27}$}
        \SOM{14.5}{-2.5}{}{}
         \SOM{14.5}{-1}{}{}

   \LIGNE {20}{0}{21.5}{-1.5}
  \LIGNE {21.5}{-1.5}{21}{-3}
  \LIGNE {21}{-3}{18.5}{-2.5}
  \LIGNE {18.5}{-2.5}{18.5}{-1}
  \LIGNE {18.5}{-1}{20}{0}

     \SOM{20}{0}{ \scriptsize $v_{8}$}{}
      \bdrSOM{21.5}{-1.5}{}{\scriptsize $v_{18}$}
       \SOM{21}{-3}{}{\scriptsize $v_{28}$}
        \bSOM{18.5}{-2.5}{}{}
         \SOM{18.5}{-1}{}{}

\LIGNE {24}{0}{25.5}{-1.5}
  \LIGNE {25.5}{-1.5}{25}{-3}
  \LIGNE {25}{-3}{22.5}{-2.5}
  \LIGNE {22.5}{-2.5}{22.5}{-1}
  \LIGNE {22.5}{-1}{24}{0}

     \bSOM{24}{0}{\scriptsize $v_{9}$}{}
      \drSOM{25.5}{-1.5}{}{\scriptsize $v_{19}$}
       \bSOM{25}{-3}{}{\scriptsize $v_{29}$}
        \SOM{22.5}{-2.5}{}{}
         \SOM{22.5}{-1}{}{}

 \LIGNE {28}{0}{29.5}{-1.5}
  \LIGNE {29.5}{-1.5}{29}{-3}
  \LIGNE {29}{-3}{26.5}{-2.5}
  \LIGNE {26.5}{-2.5}{26.5}{-1}
  \LIGNE {26.5}{-1}{28}{0}

     \SOM{28}{0}{\scriptsize $v_{10}$}{}
      \bdrSOM{29.5}{-1.5}{}{\scriptsize $v_{20}$}
       \SOM{29}{-3}{}{\scriptsize $v_{30}$}
        \bSOM{26.5}{-2.5}{}{}
         \SOM{26.5}{-1}{}{}

\node[below] at (2,-3.5){$C_0$};
 \node[below] at (8,-3.5){$C_ {q} $};
 \node[below] at (12,-3.5){$C_ {q+1}$};
 \node[below] at (16,-3.5){$C_ {q+2}$};
 \node[below] at (20,-3.5){$C_ {q+3}$};
\node[below] at (24,-3.5){$C_ {q+4}$};
\node[below] at (28,-3.5){$C_0$};

   \LIGNE{1}{-4.5}{1}{-5}
   \LIGNE{1}{-5}{3}{-5}
   \POINTILLE {3}{-5}{4}{-5}
  \POINTILLE {6}{-5}{7}{-5}
   \LIGNE{7}{-5}{9}{-5}
   \LIGNE{9}{-5}{9}{-4.5}

\node[below] at (5,-5.3){$q+1$ cycles};

 \LIGNE{12}{-4.5}{12}{-5}
   \LIGNE{12}{-5}{14}{-5}
   \POINTILLE {14}{-5}{15}{-5}
  \POINTILLE {21}{-5}{22}{-5}
   \LIGNE{22}{-5}{24}{-5}
   \LIGNE{24}{-5}{24}{-4.5}

\node[below] at (18,-5.3){$2\ell$ cycles};

\end{tikzpicture}

\caption{\label{fig:Sil}
Construction of the sets $S_i$ in the proof of  Theorem~\ref{th:C(qa;1,a)} ($a=10$, $q=5$, $\ell =2$)}
\end{center}
\end{figure}
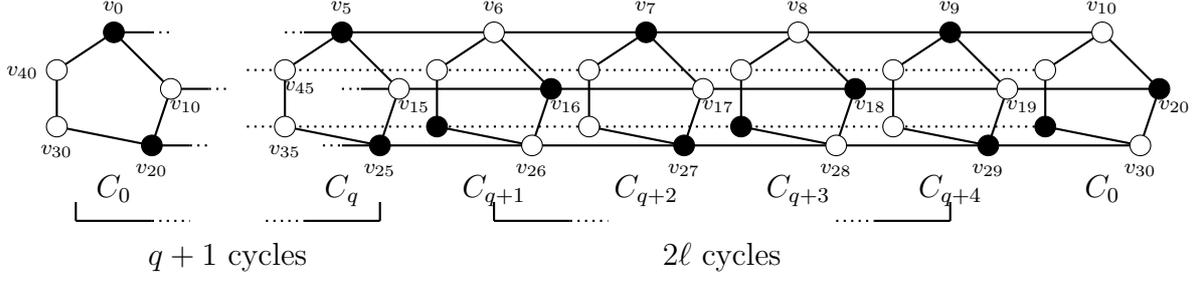

\item $q < a-1$.\\
Let $a=q+1+2\ell $ (recall that $a$ and $q$ have different parity) for some integer $\ell$, $\ell \geq 1 $.
For every cycle $C_k$, $k = 0,1,\ldots,q$, let $S_k$ be the set defined as in the previous subcase and, for every  cycle $C_k$, $k=q+1,\ldots, q+2\ell$, let
 (see Figure~\ref{fig:Sil} for the case $a=10$, $q=5$ and $\ell=2$)
\begin{align*}
   S_k &= \{v_k,v_{k+2a}, v_{k+4a}, \ldots , v_{k+(q-5)a}, v_{k+(q-3)a} \}, \mbox { if $k$ is even}, \\
 S_k &= \{v_{k+a},v_{k+3a}, v_{k+5a}, \ldots , v_{k+(q-4)a}, v_{k+(q-2)a}\}, \mbox { if $k$ is odd}.
 \end{align*}

From this definition, we clearly get that  $\bigcup_{k=0}^{a-1} S_k$ is an independent set of $C(qa;1,a)$. We then have
$$\vert \bigcup_{k=0}^{a-1} S_k \vert = a\dfrac{q-1}{2} \leq \alpha(C(qa;1,a))\leq \beta_b(C(qa;1,a))$$
and, since $q$ is odd, we get
$$\beta_b (C(qa;1,a)\leq a\left ( \dfrac{q-1}{2}\right ).$$

Thanks to Observation~\ref{obs:beta=alpha}, we finally get
$$ \beta_b(C(qa;1,a))= \alpha (C(qa;1,a))=\dfrac{(q-1)a}{2}.$$

\item $q > a+1$.\\
Note first that if $a+1$ divides $q$, say $q=\ell(a+1)$ for some integer $\ell\ge 1$,
which gives $qa=\ell a(a+1)$,
the result directly follows from Theorem~\ref{th:C((a+1)k;1,a)}
for $k=\ell a$, since
$$\dfrac{ak}{2} = \dfrac{\ell a^2}{2} = \dfrac{qa^2}{2(a+1)} =
\left\lfloor \dfrac{qa^2}{2(a+1)} \right\rfloor.$$

Suppose now that this is not the case, so that $k(a+1) < q < (k+2)(a+1)$, for some odd integer $k\ge 1$.
Let $q= k(a+1) + 2\ell$ (recall that $k$ and $q$ are odd) for some integer $\ell$, $1\le\ell\le a$.
From Proposition~\ref{prop:bound2} we get
\begin{align*}
\beta_b(C(qa;1,a)) \leq \left\lfloor\frac{qa^2}{2(a+1)} \right\rfloor
  & = \left\lfloor\frac{q(a-1)(a+1)+q}{2(a+1)}\right\rfloor
  \\[2ex]
  & = \left\lfloor\frac{q(a-1)(a+1)+(a+1)}{2(a+1)} +\frac{q-(a+1)}{2(a+1)}\right\rfloor
  \\[2ex]
  & = \left\lfloor\frac{q(a-1)(a+1)+(a+1)}{2(a+1)}
     +\frac{(k-1)(a+1) + 2\ell}{2(a+1)}\right\rfloor
  \\[2ex]
  & = \dfrac{(a-1)q+1}{2} + \dfrac{k-1}{2}.
\end{align*}

Moreover, we have $qa=(ka+\ell)(a+1)+\ell(a-1)$ and, thanks to Proposition~\ref{prop:bound4}, we get
$$\beta_b(C(qa;1,a)) \ge (ka+\ell)\frac{a}{2} + \ell(\dfrac{a}{2}-1) =
\frac{ka^2}{2} + \ell(a-1) = \frac{ka^2}{2} + \dfrac{q-k(a+1)}{2}(a-1),$$
which gives
$$\beta_b(C(qa;1,a)) \ge \dfrac{(a-1)q+1}{2} + \dfrac{k-1}{2}.$$

Therefore, by Observation~\ref{obs:beta=alpha}, we finally get
$$\beta_b(C(qa;1,a)) = \alpha(C(qa;1,a)) = \frac{(a-1)q+1}{2} + \dfrac{k-1}{2}= \left\lfloor\frac{qa^2}{2(a+1)} \right\rfloor.$$
\end{enumerate}
\end{enumerate}
This completes the proof.
\end{proof}

Our last general result is the following.

\begin{theorem}\label{th:C(qa+r;1,a)}
 Let $n$, $a$, $q$ and $r$ be integers such that $n=qa+r$, $a$ is even,  $a\ge 6$ and $q\geq \max\{2,r\}$. If either $q$ and $r$ have the same parity, or $q$ and $r$ have different parity and $q+r \geq a-1$, then we have
$$\beta_b(C(n;1,a))  = \alpha(C(n;1,a)) = \left\lfloor\frac{a}{2(a+1)}n \right\rfloor $$
\end{theorem}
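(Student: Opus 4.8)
The plan is to show that $\left\lfloor\frac{an}{2(a+1)}\right\rfloor$ is at once an upper bound for $\beta_b(C(n;1,a))$ and a lower bound for $\alpha(C(n;1,a))$, and then to conclude with Observation~\ref{obs:beta=alpha}. We may assume $1\le r\le a-1$ (if $r=0$ the statement reduces to Theorem~\ref{th:C(qa;1,a)}, whose value in the relevant parity cases is again $\lfloor\frac{an}{2(a+1)}\rfloor$). Under this normalisation and the hypotheses, neither excluded case of Theorem~\ref{th:2-bounded} occurs: $n=2a$ would force $r=0$, while $n=2a+1$ would force $q=2$ and $r=1$, a pair of opposite parity with $q+r=3<a-1$, contradicting the hypothesis. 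Hence $C(n;1,a)$ admits a $2$-bounded $\beta_b$-broadcast $f$, and since $a\ge 6$ makes $\frac{a-4}{a}>0$, Proposition~\ref{prop:bound2} gives $\beta_b(C(n;1,a))=\sigma(f)\le\big\lfloor\frac{a}{2(a+1)}(n-\frac{a-4}{a}|V_f^{2}|)\big\rfloor\le\big\lfloor\frac{an}{2(a+1)}\big\rfloor$.

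For the matching lower bound, put $K=n-2\left\lfloor\frac{an}{2(a+1)}\right\rfloor$. Writing $an=2(a+1)Q+R$ with $0\le R<2(a+1)$, so $Q=\lfloor\frac{an}{2(a+1)}\rfloor$, one has $K=n-2Q=\frac{n+R}{a+1}$; since $R$ is even (both $an$ and $2(a+1)$ being even) and $a+1$ is odd, $K$ is a positive integer with $K\equiv n\pmod 2$, and $R\ge 0$ gives $K\ge\frac{n}{a+1}$. Assuming for the moment that also $K\le\frac{n}{a-1}$, the numbers $k_1:=\frac{n-(a-1)K}{2}$ and $k_2:=K-k_1=\frac{(a+1)K-n}{2}$ are non-negative integers with $n=k_1(a+1)+k_2(a-1)$, so Proposition~\ref{prop:bound4} produces an independent set of $C(n;1,a)$ of cardinality $k_1\frac a2+k_2(\frac a2-1)=\frac a2(k_1+k_2)-k_2=\frac{n-K}{2}=\left\lfloor\frac{an}{2(a+1)}\right\rfloor$. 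Thus $\left\lfloor\frac{an}{2(a+1)}\right\rfloor\le\alpha(C(n;1,a))\le\beta_b(C(n;1,a))\le\left\lfloor\frac{an}{2(a+1)}\right\rfloor$, and Observation~\ref{obs:beta=alpha} finishes the argument.

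The crux is therefore the inequality $K\le\frac{n}{a-1}$, equivalently $k_1\ge 0$, and this is exactly where the hypotheses on $q$ and $r$ are used. From $n=qa+r$ and $q\ge r$ one gets $\frac{n}{a+1}\le q$, so that $K\le q$ when $q$ and $r$ have the same parity (there $q\equiv n\pmod 2$, so the parity correction of $K$ costs nothing) and $K\le q+1$ when they have opposite parity. Since $\frac{n}{a-1}=q+\frac{q+r}{a-1}$, the bound $K\le\frac{n}{a-1}$ is immediate from $K\le q$ in the same-parity case, and follows from $q+1\le q+\frac{q+r}{a-1}$, i.e.\ from the hypothesis $q+r\ge a-1$, together with $K\le q+1$ in the opposite-parity case. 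I expect the verification of these parity and size inequalities, together with the bookkeeping separating the ``small $q$'' range (where $K$ is essentially $q$ or $q+1$) from the ``large $q$'' range (where $K\approx\frac{n}{a+1}$ lies comfortably below $\frac{n}{a-1}$, so that no extra condition is needed), to be the only genuinely delicate part; it mirrors, with the additional term $r$, the case split ($q=a-1$, $q<a-1$, $q>a+1$, $a+1\mid q$) already carried out in the proof of Theorem~\ref{th:C(qa;1,a)}, and the concrete independent sets are again those of Proposition~\ref{prop:bound4}.
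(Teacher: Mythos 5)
Your argument is correct and, at bottom, takes the same route as the paper: the upper bound comes from Proposition~\ref{prop:bound2} applied to a $2$-bounded $\beta_b$-broadcast (whose existence you, unlike the paper, explicitly justify via Theorem~\ref{th:2-bounded}), and the matching lower bound comes from a decomposition $n=k_1(a+1)+k_2(a-1)$ fed into Proposition~\ref{prop:bound4}, with Observation~\ref{obs:beta=alpha} closing the gap. The difference is purely in the bookkeeping: where the paper splits into parity cases and writes $q-r=k(a+1)+2\ell$, checking each subcase separately, you obtain $(k_1,k_2)$ uniformly from $K=n-2\lfloor\frac{an}{2(a+1)}\rfloor$, observing that $K$ is the least integer of the parity of $n$ that is at least $\frac{n}{a+1}$, so that the only thing to verify is $K\le\frac{n}{a-1}$, which is exactly where $q\ge r$ and (in the opposite-parity case) $q+r\ge a-1$ enter; your computations here ($K\le q$ resp.\ $K\le q+1$, and $\frac{n}{a-1}=q+\frac{q+r}{a-1}$) check out, so this is a cleaner, case-free version of the paper's argument. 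One caveat concerns your disposal of $r=0$: Theorem~\ref{th:C(qa;1,a)} requires $q\ge 4$, so it does not cover $q=2$, $r=0$, i.e.\ $n=2a$; there the stated formula actually fails, since Theorem~\ref{th:C(2a;1,a)} gives $\beta_b(C(2a;1,a))=a>a-1=\bigl\lfloor\frac{an}{2(a+1)}\bigr\rfloor$ whenever $a$ is even and not a power of two. This is a defect of the theorem's hypotheses (it is precisely an excluded case of Theorem~\ref{th:2-bounded}, where Proposition~\ref{prop:bound2} cannot be invoked), and the paper's own proof overlooks it as well, so it does not detract from your treatment of the cases $r\ge 1$.
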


\begin{proof}
From Proposition~\ref{prop:bound2}, we get that
$$\sigma(f) \leq \left\lfloor\frac{a}{2(a+1)}\left(n-\frac{a-4}{a}\left|V_f^{2}\right|\right) \right\rfloor \le \left\lfloor\frac{a}{2(a+1)}n \right\rfloor$$
for every $2$-bounded independent broadcast $f$ on  $C(n;1,a)$.
We consider two cases.

\begin{enumerate}

\item  $q $ and $r$ have the same parity.\\
Note first that if $a+1$ divides $q-r$, say $q-r=\ell(a+1)$ for some integer $\ell$,
which gives $qa+r=(q-\ell)(a+1)$,
the result directly follows from Theorem~\ref{th:C((a+1)k;1,a)}
for $k=q-\ell$, since
$$\dfrac{ak}{2} = \dfrac{a(q-\ell) }{2} = \dfrac{(qa+r)a}{2(a+1)} =
\left\lfloor \dfrac{(qa+r)a}{2(a+1)} \right\rfloor.$$

Suppose now that this is not the case, so that $k(a+1) < q- r < (k+2)(a+1)$, for some even integer $k$.
Let $q - r= k(a+1) + 2\ell$ (recall that $k$ and $q -r$ are even) for some integer $\ell$, $1\leq \ell \leq a$. We have
$$\left\lfloor\frac{a}{2(a+1)}(qa+r) \right\rfloor  = \left\lfloor\frac{qa(a+1)-(q-r)a}{2(a+1)} \right\rfloor = \dfrac{aq}{2}
+ \left\lfloor -\frac{(q-r)a}{2(a+1)} \right\rfloor.$$

 Since $-\ell < - \frac{\ell a}{(a+1)} < -\ell +1  $, we get
$$\left\lfloor - \frac{(q-r)a}{2(a+1)} \right\rfloor = \left\lfloor - \frac{(k(a+1) + 2\ell)a}{2(a+1)} \right\rfloor = -\dfrac{ka}{2} + \left \lfloor - \frac{\ell a}{(a+1)} \right\rfloor = - \dfrac{ka}{2} - \ell,$$
which gives
$$\beta_b(C(qa+r;1,a)) \leq \left\lfloor\frac{a}{2(a+1)}(qa+r) \right\rfloor  = \dfrac{aq}{2} - \dfrac{ak}{2} - \ell= \dfrac{(q-k)a}{2}  - \ell. $$

Furthermore, since
$$qa + r=\dfrac{q+r}{2}(a+1)+ \dfrac{q-r}{2}(a-1)=\dfrac{q+r+k(a-1)}{2}(a+1)+\ell(a-1),$$
and thanks to Proposition~\ref{prop:bound4}, we get
$$\beta_b(C(qa+r;1,a)) \ge
\dfrac{q+r+k(a-1)}{2} \left( \frac {a}{2} \right) + \ell (\frac{a}{2} -1 ) = \dfrac{q+r+k(a-1)+ 2\ell }{2} \left( \frac {a}{2} \right)-\ell .$$

Again, since $q-r = k(a+1) + 2\ell $, we have
$$\beta_b(C(qa;1,a)) \ge \dfrac{q+r+q-r - 2k}{2} \left( \frac {a}{2} \right)-\ell = \dfrac{(q-k)a}{2}  - \ell,$$
and thus, thanks to Observation~\ref{obs:beta=alpha}, we finally get
$$\beta_b(C(qa+r;1,a)) = \alpha(C(qa+r;1,a)) = \frac{(a-1)q+1}{2} + \dfrac{k-1}{2}
= \left\lfloor\frac{qa^2}{2(a+1)} \right\rfloor.$$

\item   $q $ and $r$ have different  parity.\\
Similarly to the previous case,  if  $a+1$ divides $q-r$, then
$$\dfrac{ak}{2} =
\left\lfloor \dfrac{(qa+r)a}{2(a+1)} \right\rfloor.$$

Suppose now that this is not the case. We consider two subcases, depending on whether $q-r$ is greater than $a+1$ or not.

\begin{enumerate}
\item $q - r < a+1$.\\
In that case, we have
\begin{align*}
\beta_b(C(qa+r;1,a)) \le \left\lfloor\frac{a}{2(a+1)}(qa+r) \right\rfloor  &= \left\lfloor\frac{qa(a+1)-(q-r)a}{2(a+1)} \right\rfloor\\
 &= \dfrac{aq}{2} + \left\lfloor -\frac{(q-r)a}{2(a+1)} \right\rfloor.
\end{align*}

Since $q - r < a+1$, we get
$$ -\frac{q-r+1}{2} < -\frac{(q-r)a}{2(a+1)}  <  -\frac{q-r+1}{2} + 1,$$
and thus
$$  \beta_b(C(qa+r;1,a)) \le \dfrac{aq}{2} -\frac{q-r+1}{2}. $$

Since $qa + r=\dfrac{q+r+1-a}{2}(a+1)+ \dfrac{q-r +1 +a}{2}(a-1)$, and thanks to Proposition~\ref{prop:bound4}, we have
$$\beta_b(C(qa+r;1,a)) \ge
\left( \dfrac{q+r+1-a}{2} \right) \frac {a}{2}  + \left( \dfrac{q-r +1 +a}{2} \right)  \left(  \frac{a}{2} -1 \right),$$
which gives
$$\beta_b(C(qa;1,a)) \geq \dfrac{aq}{2} -\frac{q-r+1}{2}.$$

Finally, thanks to Observation~\ref{obs:beta=alpha}, we get
$$\beta_b(C(qa;1,a)) = \alpha(C(qa;1,a)) =\dfrac{aq}{2} -\frac{q-r+1}{2}.$$

\item $q - r > a+1$.\\
In that case, we have $k(a+1) < q - r < (k+2)(a+1)$, for some odd integer $k \geq 1 $. Let $q - r= k(a+1) + 2\ell$ (recall that $k$ and $q -r$ are odd) for some integer $\ell$, $1\le\ell\le a$.
Since
$$\left\lfloor - \frac{(q-r)a}{2(a+1)} \right\rfloor = \left\lfloor - \frac{(k(a+1) + 2\ell)a}{2(a+1)} \right\rfloor = -\dfrac{ka}{2} + \left \lfloor - \frac{\ell a}{(a+1)} \right\rfloor $$
and   $-\ell < - \frac{\ell a}{(a+1)} < -\ell +1  ,$
we get
$$ \left\lfloor\frac{a(qa+r)}{2(a+1)} \right\rfloor  = \left\lfloor\frac{qa(a+1)-(q-r)a}{2(a+1)} \right\rfloor = \dfrac{aq}{2}
+ \left\lfloor -\frac{(q-r)a}{2(a+1)} \right\rfloor= \dfrac{aq}{2} - \dfrac{ka}{2} - \ell.$$


This implies
$$\beta_b(C(qa+r;1,a)) \le \left\lfloor\frac{a(qa+r)}{2(a+1)} \right\rfloor  = \dfrac{aq}{2} - \dfrac{ak}{2} - \ell= \dfrac{(q-k)a}{2}  - \ell .$$

Moreover, since $qa + r=\dfrac{q+r+1-a}{2}(a+1)+ \dfrac{q-r+a+1}{2}(a-1)$ and  $q - r= k(a+1) + 2\ell$, we get
$$ qa + r = \dfrac{q+r+k(a-1)}{2}(a+1)+ \ell(a-1). $$

Hence, we have
$$\beta_b(C(qa+r;1,a)) \ge
\dfrac{q+r+k(a-1)}{2} \cdot \frac{a}{2} + \ell (\frac{a}{2} -1 ) = \dfrac{q+r+k(a-1)+ 2\ell }{2} \cdot \frac{a}{2} -\ell,$$
which gives
$$\beta_b(C(qa+r;1,a)) \ge \dfrac{q+r+q-r -2k  }{2} \cdot \frac {a}{2} -\ell = \dfrac{(q-k)a}{2}  - \ell.$$

Finally, thanks to Observation~\ref{obs:beta=alpha}, we get
$$\beta_b(C(qa+r;1,a)) = \alpha(C(qa+r;1,a)) =
 \left\lfloor\frac{qa^2}{2(a+1)} \right\rfloor.$$
\end{enumerate}

\end{enumerate}
This completes the proof.
\end{proof}

\section{Discussion}\label{sec:discussion}

We proved that every circulant graph of the form $C(n;1,a)$,
$3\le a\le \lfloor\frac{n}{2} \rfloor$, admits a $2$-bounded $\beta_b$-broadcast,except when $n=2a+1$, or $n=2a$ and $a$ is even.
Using this property, we determined the exact value of the broadcast independence number of several classes of circulant graphs of the form $\beta_b(C(n;1,a))$, $2\le a\le\lfloor\frac{n}{2}\rfloor$.
In several cases, we showed that $\beta_b(C (n;1,a))$ reaches one of its lower bounds, namely $\alpha(C(n;1,a))$ or $2(\diam(C(n;1,a))-1)$. In particular, whenever $\beta_b(C (n;1,a))=\alpha(C(n;1,a))$, we get that $C (n;1,a)$ admits a $1$-bounded $\beta_b$-broadcast.

We finally mention a few open problems that seem worth to be investigated.
\begin{enumerate}
\item Determine the value of $\beta_b(C (n;1,a))$ for the remaining unsolved cases namely:
\begin{enumerate}
    \item For odd integers $a$ and $n$, with $a \geq 5$ and $a\not|\ n$.
    \item For integers $n$, $a$, $q$ and $r$, with $n=qa+r$, $a\geq 6$ is even, $n$ is  divisible neither by $a$ nor by $a+1$, and either
           \begin{enumerate}
           \item $q<r$, or
           \item $q > r$, $q$ and $r$ have different parity, $q+r < a-1$.
           \end{enumerate}
\end{enumerate}

\item Determine the broadcast independent number of other classes of circulant graphs.
\item Characterize the classes of graphs $G$ for which $\beta_b(G) = \alpha(G)$ or $\beta_b(G) = 2(\diam(G)-1)$, respectively.
\end{enumerate}

\noindent
{\bf Acknowledgment.} The first and second authors acknowledge General Directorate of Scientific Research and Technological Development of the Algerian Ministry of Higher Education and Scientific Research (DGRSDT) for support of this work.

\end{document}